\pgfplotsset{compat=1.18}
\newtheorem{problem}{Problem}
\newtheorem{remark}{Remark}
\renewcommand{\imath}{\mathrm{i}}
\newcommand{\BigO}{\mathcal{O}}
\newcommand{\id}{\mathbf{I}}
\newcommand{\I}{\mathcal{I}}
\newcommand{\E}{\mathcal{E}}
\newcommand{\F}{\mathcal{F}}
\newcommand{\EE}{\mathbf{E}}
\newcommand{\FF}{\mathbf{F}}
\newcommand{\M}{\mathbf{M}}
\newcommand{\h}{\mathbb{H}}
\newcommand{\den}{\mathcal{D}}
\newcommand{\her}{\mathcal{H}}
\newcommand{\supp}{\mathrm{supp}}
\newcommand{\spn}{\mathrm{span}}
\newcommand{\Pro}{\mathcal{P}}
\newcommand{\tr}{\mathrm{tr}}
\newcommand{\Prob}{\mathrm{TP}}
\newcommand{\TT}{\mathbf{t}}
\newcommand{\NT}{\mathbf{nt}}
\newcommand{\yes}{\mathbf{true}}
\newcommand{\no}{\mathbf{false}}
\newcommand{\CQ}{\textup{cq}}
\newcommand{\Sch}{\mathit{Sch}}
\newcommand{\B}{\mathbb{S}}
\newcommand{\inv}{\mathbb{I}}
\newcommand{\PP}{\mathbf{P}}
\newcommand\xrowht[2][0]{\addstackgap[.5\dimexpr#2\relax]{\vphantom{#1}}}
\begin{document}

\title{Algorithmic Analysis of Termination Problems for Nondeterministic Quantum Programs}
\renewcommand{\shorttitle}{Termination Problems for Nondeterministic Quantum Programs}

\author{Jianling Fu}
\affiliation{%
  \institution{School of Computer Science and Technology, East China Normal University}
  \city{Shanghai}
  \country{China}}
\email{scsse\_fjl2015@126.com}

\author{Hui Jiang}
  \affiliation{%
    \institution{Shanghai Key Laboratory of Trustworthy Computing, East China Normal University}
    \city{Shanghai}
    \country{China}}
\email{yhq\_jh@126.com}

\author{Ming Xu}
  \affiliation{%
    \institution{Shanghai Key Laboratory of Trustworthy Computing, East China Normal University}
    \city{Shanghai}
    \country{China}}
\email{mxu@cs.ecnu.edu.cn}
\orcid{0000-0002-9906-5677}

\author{Yuxin Deng}
  \affiliation{%
    \institution{Shanghai Key Laboratory of Trustworthy Computing, East China Normal University}
    \city{Shanghai}
    \country{China}}
\email{yxdeng@sei.ecnu.edu.cn}

\author{Zhi-Bin Li}
\affiliation{%
  \institution{School of Computer Science and Technology, East China Normal University}
  \city{Shanghai}
  \country{China}}
\email{lizb@cs.ecnu.edu.cn}

\renewcommand{\shortauthors}{J.~Fu, H.~Jiang, M.~Xu, \textit{et~al.}}

\begin{abstract}
  Verifying quantum programs has attracted a lot of interest in recent years.
    In this paper,
    we consider the following two categories of termination problems of quantum programs with nondeterminism,
    namely:
    \begin{enumerate}
      \item Is an input of a program terminating with probability one under all schedulers?
      If not, how can a scheduler be synthesized to evidence the nontermination?
      \item Are all inputs terminating with probability one under their respective schedulers?
      If yes, a further question asks whether there is a scheduler
      that forces all inputs to be terminating with probability one together with how to synthesize it;
      otherwise, how can an input be provided to refute the universal termination?
    \end{enumerate}
    
    For the effective verification of the first category,
    we over-approximate the reachable set of quantum program states by the reachable subspace,
    whose algebraic structure is a linear space.
    On the other hand, we study the set of divergent states
    from which the program terminates with probability zero under \emph{some} scheduler.
    The divergent set also has an explicit algebraic structure.
    Exploiting these explicit algebraic structures,
    we address the decision problem by a necessary and sufficient condition,
    i.\,e.\@ the disjointness of the reachable subspace and the divergent set.
    Furthermore, the scheduler synthesis is completed in exponential time,
    whose bottleneck lies in computing the divergent set,
    reported for the first time.
    
    For the second category,
    we reduce the decision problem to the existence of invariant subspace,
    from which the program terminates with probability zero under \emph{all} schedulers.
    The invariant subspace is characterized by linear equations
    and thus can be efficiently computed.
    The states on that invariant subspace are evidence of the nontermination.
    Furthermore, the scheduler synthesis is completed by
    seeking a pattern of finite schedulers that forces all inputs to be terminating with positive probability.
    The repetition of that pattern yields the desired universal scheduler
    that forces all inputs to be terminating with probability one.
    All the problems in the second category are shown, also for the first time, to be solved in polynomial time.
    Finally, we demonstrate the aforementioned methods via a running example
    --- the quantum Bernoulli factory protocol.
\end{abstract}

\begin{CCSXML}
  <ccs2012>
  <concept>
      <concept_id>10003752.10003790.10011192</concept_id>
      <concept_desc>Theory of computation~Verification by model checking</concept_desc>
      <concept_significance>500</concept_significance>
      </concept>
  <concept>
      <concept_id>10011007.10011074.10011099.10011692</concept_id>
      <concept_desc>Software and its engineering~Formal software verification</concept_desc>
      <concept_significance>500</concept_significance>
      </concept>
  <concept>
      <concept_id>10002978.10002986.10002990</concept_id>
      <concept_desc>Security and privacy~Logic and verification</concept_desc>
      <concept_significance>500</concept_significance>
      </concept>
</ccs2012>
\end{CCSXML}

\ccsdesc[500]{Theory of computation~Verification by model checking}
\ccsdesc[500]{Software and its engineering~Formal software verification}
\ccsdesc[500]{Security and privacy~Logic and verification}

\keywords{quantum program,
Markov decision process,
termination,
controller synthesis,
fixedpoint}


\maketitle

\section{Introduction}
In the field of quantum computing,
physical devices have been rapidly developed in the last decades,
particularly in very recent years.
From the original experimental Deutsch's problem on a working 2-qubit quantum computer in 1998~\cite{CGK98},
to in 2020 Chinese quantum computer JiuZhang's implementing a type of Boson sampling on 76 photons~\cite{ZWD+20},
and in 2021 IBM's releasing its latest 127-bit ``Eagle'' processor~\cite{IBM127},
quantum computers have showed the advantage of computing
and attracted people to explore particular problems at a new level of complexity,
which cannot be achieved by their classical counterparts.

Equally important are quantum algorithms,
which are realized by quantum programming languages.
They build a bridge between the hardware devices and quantum algorithms to harness the power of quantum computers.
The first practical quantum programming language QCL appeared in~\cite{Omer98}.
The quantum guarded command language (qGCL)
was presented to program a ``universal'' quantum computer~\cite{SZ99},
and a nondeterministic structure was embedded into qGCL in the follow-up work~\cite{Zul04}.
Selinger~\cite{Sel04} and Altenkirch and Grattage~\cite{AG05} respectively proposed
functional programming languages QFC and QML with high-level features.
Nowadays, several quantum programming languages,
e.\,g.\@, Q\#~\cite{qsharp} and Cirq~\cite{cirq},
have been proposed for real-world applications,
enabling researchers to develop quantum software more conveniently and efficiently.

However, quantum programs have been found difficult to be tested or analyzed.
It is necessary to develop formal verification methods for quantum programs.
For classical program verification,
Morgan et al. extended the standard weakest precondition calculus~\cite{Dij76} 
into both probabilistic and nondeterministic settings~\cite{MMS96}.
Inspired by probabilistic predicate transformer, 
D'Hondt and Panangaden~\cite{DHP06} directly proposed a quantum analogy of the weakest preconditions
for a particular quantum programming language (QPL),
which was further employed by~\cite{FDJY07} and \cite{YDF10}
in reasoning about the correctness of deterministic quantum programs.
The quantum weakest precondition was designed for reasoning about
the expected runtime of quantum program~\cite{LZBY22}
and the expexted cost of various quantum resources~\cite{AMP+22}.
Meanwhile, Yu and Palsberg~\cite{YuP21} presented a framework of quantum abstract interpretation~\cite{CoC77}
to check assertions for the properties of large-scale quantum programs.
More theoretically,
one can decompose ``total correctness''
into ``partial correctness'' plus ``termination'' as advocated by C.\,A.\,R.\,Hoare~\cite{Hoa69}.
Hence termination analysis plays a central role in program verification,
which especially deserves studying when colliding with features in quantum setting.

In this paper,
we study the termination and the universal termination problems on nondeterministic quantum programs.
The former asks
whether \emph{an input state} of a program terminating with probability one under all schedulers;
the latter asks
whether \emph{all input states} are terminating with probability one under their respective schedulers.
We will give a series of positive results toward solving them.

\paragraph{Expressiveness on Program Models}
First of all,
we consider two models of quantum Markov decision processes with states in finite-dimensional Hilbert spaces
that interpret the operational semantics of nondeterministic quantum programs.
One has finitely many program locations,
so that it is easier to model practical scenarios;
the other has only one.
We show that they are of the same expressiveness,
and thus adopt the latter for ease of verification.

\paragraph{Precise Over-approximation of the Reachable Set}
In general, the set of reachable states of a program
does not exhibit any explicit algebraic structure,
which brings nontrivial hardness in verification.
To overcome it,
we give two definitions of reachable space
that over-approximate the set of reachable states.
The I-reachable space has the type of a subspace of the Hilbert space,
as proposed in~\cite{LYY14},
which is spanned by those vector representations of reachable states;
the II-reachable space is spanned by those Hermitian matrix representations of reachable states.
Both are computable in polynomial time,
specified in terms of the size of program model as usual,
i.\,e.\@ the dimension of Hilbert space.
But the latter is more precise,
which is validated by the running example of quantum Bernoulli factory protocol.

\paragraph{Algorithmic Complexity}
Moreover, we study the set of divergent states
from which the program terminates with probability zero under some scheduler.
By exploiting the algebraic structure of the divergent set,
an effective approach is developed to compute it in exponential time.
Combining the reachable spaces and the divergent set,
the termination is decided by a necessary and sufficient condition,
i.\,e.\@, the reachable subspace and the divergent set are disjoint.
The complexity of the decision procedure is in exponential time,
which is reported for the first time.

\paragraph{Scheduler Synthesis}
If the termination is decided to be false,
we know there are some schedulers that force the program not to terminate with probability one.
Scheduler synthesis is particularly important to resolve the nondeterminism in system design.
To achieve this,
we confine the nontermination scheduler into finitely many $\omega$-regular ones as candidates.
Conditioning on each candidate,
we derive a system of linear equations by Brouwer's fixedpoint theorem,
whose nonzero solutions help us to recognize the candidate as the nontermination scheduler.

We finally attack the universal termination problem.
It is reduced to the existence of invariant subspace,
from which the program terminates with probability zero under any scheduler.
The invariant subspace is also characterized by linear equations
and can be computed in polynomial time,
thus deciding the universal termination.
The states on that invariant subspace are evidence of the nontermination.
If the universal termination is decided to be true,
the scheduler synthesis is completed in polynomial time
by seeking a pattern of finite schedulers
that forces all input states to be terminating with positive probability.
The repetition of that pattern yields the desired universal scheduler
that forces all input states to be terminating with probability one.

\subsection{Related Work}
\paragraph{Verification on Probabilistic Programs}
In contrast to deterministic programs,
probabilistic programs have several syntactic constructors ---
probabilistic choice, nondeterministic choice and observation~\cite{KGJ+15}.
The termination problem yields many variants to be studied,
e.\,g.\@,
\begin{itemize}
	\item \emph{almost-sure termination} ---
	Does a program terminate with probability one?
	\item \emph{positive almost-sure termination} ---
	Is the expected running time of a program finite?
\end{itemize}
Although the almost-sure termination of probabilistic programs was proved to be undecidable in general~\cite{KaK15},
there are many approaches to attack it.
Fioriti and Hermanns~\cite{FiH15} proposed
a framework to prove almost-sure termination
by \emph{ranking super-martingales},
which is analogous to ranking functions on deterministic programs~\cite{BAG14}.
Chakarov and Sankaranarayanan applied constraint-based techniques
to generate linear ranking super-martingales~\cite[]{ChS13}.
Chatterjee \textit{et~al.}~\cite{CFG16} constructed polynomial ranking super-martingales extending linear ones.
A polynomial-time procedure was given in~\cite{ACN18}
to synthesize lexicographic ranking super-martingales
for linear probabilistic programs.
Fu and Chatterjee~\cite{FuC19} applied ranking super-martingales to
study the positive almost-sure termination of nondeterministic probabilistic programs.
McIver and Morgan~\cite{McM06} generalized the \emph{weakest preconditions} of Dijkstra
(an approach to prove correctness)
to the \emph{weakest pre-expectations}
for analyzing properties of
probabilistic guarded command language (pGCL)~\cite{HSM97}
and for establishing almost-sure termination~\cite{MMKK18}.
Kaminski \textit{et~al.}~\cite{KKM+16} presented a calculus of weakest pre-expectation style
for obtaining bounds on the expected running time of probabilistic programs.
Verification tools
like \textsc{Amber}~\cite{MBKK21}
have been released to automatically prove almost-sure and positive almost-sure termination.
However, in the setting of quantum computing,
a program state is no longer simply a probabilistic distribution over program variables;
it is instead a density operator (positive semi-definite matrix with unit trace) on Hilbert space,
which would be considered in the following.
Moreover, the state space admitting the finite-dimensional Hilbert space
could bring extra sugar in our analysis,
breaking the undecidability for probabilistic programs.

\paragraph{Verification on Quantum Programs}
Ying and Feng~\cite{YiF10} first studied the verification of quantum loop programs
by giving some necessary and sufficient conditions to ensure termination and almost-sure termination.
Later on, the classical Floyd--Hoare logic was extended in the quantum setting
to be quantum Floyd--Hoare logic~\cite{Ying11},
and the Sharir--Pnueli--Hart method was also extended
from probabilistic programs to quantum programs~\cite{YYF+13}
toward automatic verification~\cite{Ying19}.
The quantum Markov chain~\cite{FYY13,XHF21} could be a standard model
to interpret the operational semantics of deterministic quantum programs,
and the quantum Markov decision processes~\cite{YiY18}
could interpret the operational semantics of nondeterministic quantum programs.
Yu and Ying~\cite{YuY12} considered concurrent quantum programs,
and reduced the termination problem to the reachability problem of quantum Markov chains.

The work closest to ours is \cite{LYY14} where Li \textit{et~al.} dealt with nondeterministic quantum programs.
The nondeterminism in that program is used to model quantum processes,
and the program execution relies on a scheduler specified by the users.
Given a nondeterministic quantum programs,
the set of reachable states from an input state has no explicit algebraic structure in general,
which yields nontrivial hardness in verification.
The authors of~\cite{LYY14} proposed a polynomial-time method
for computing a linear space as the reachable space,
over-approximating the reachable set.
They also presented an algorithm to compute the set of divergent states
but the time complexity of the algorithm was left unsettled.
When the two sets are disjoint, the termination of a program can be inferred.
However, two remaining issues should be addressed,
i.\,e.\@,
i) how to analyze the complexity of computing the divergent set
and ii) how to synthesize the scheduler for nontermination.
Both will be solved in the current paper.

Recently, Li and Ying~\cite{LiY18} proposed the notions of additive and multiplicative invariants
that are predicates over program states at program locations
for quantum programs with nondeterministic choices, namely angelic and demonic choices.
They further synthesized linear ranking super-martingales
w.\,r.\,t.\@ an additive and/or multiplicative invariant
by semi-definite programming and quantifier elimination over real closed fields.
It was revealed in~\cite{LZBY22} that
almost-sure termination of a deterministic quantum program with an input state
is equivalent to positive almost-sure termination.
There are also many works for verifying various kinds of quantum protocols
and quantum algorithms~\cite{TNB11,AGN14,FHTZ15,LZW+19,XMGY21,XFM+22}.

\subsection{Contribution and Innovation}
The contributions of the current paper are summarized as follows:
\begin{enumerate}
	\item We propose a precise over-approximation of the reachable set,
	which can be computed in polynomial time.
	\item The complexity of computing the set of divergent states is given for the first time,
	thus settling an open problem.
	\item We decide the termination problem in exponential time
	and synthesize a nontermination scheduler provided that it exists.
	\item We decide the universal termination problem in polynomial time
	together with the synthesis of a universal scheduler for termination.
\end{enumerate}
To achive them, our technical innovations lie in:
i) using a tree construction for demonstrating the derivation of divergent states,
whose explicit structure could analyze the complexity of deciding termination;
ii) applying Knaster--Tarski fixedpoint theorem to ensure the efficiency of computing reachable spaces
while applying Brouwer's fixedpoint theorem to ensure the effectivity of scheduler synthesis.

\paragraph{Organization}
The rest of this paper is organized as follows.
Section~\ref{S2} recalls basic notions and notations from quantum computing.
The models of nondeterministic quantum programs are introduced in Section~\ref{S3}
together with their termination problems.
We compute the reachable spaces and the divergent set respectively
in Sections~\ref{S4} \&~\ref{S5}.
Combining them, we are able to decide the termination in Section~\ref{S6}.
We further solve the universal termination problem in Section~\ref{S7}.
Section~\ref{S9} is the conclusion.
For clarity, the implementation details is moved to the appendix.

\section{Preliminaries}\label{S2}
Let $\h$ be a finite Hilbert space
that is a complete vector space over complex numbers $\mathbb{C}$ equipped with an inner product operation,
and $d$ the dimension of $\h$ throughout this paper.
We recall the standard Dirac notations from quantum computing.
Interested readers can refer to~\cite{NC00} for more details.
\begin{itemize}
	\item $\ket{\psi}$ stands for a unit column vector in $\h$ labelled with $\psi$;
	\item $\bra{\psi}:=\ket{\psi}^\dag$ is the Hermitian adjoint
	(transpose and complex conjugate entrywise) of $\ket{\psi}$;
	\item $\ip{\psi_1}{\psi_2}:=\bra{\psi_1} \ket{\psi_2}$
	is the inner product of $\ket{\psi_1}$ and $\ket{\psi_2}$;
	\item $\op{\psi_1}{\psi_2}:=\ket{\psi_1} \otimes \bra{\psi_2}$ is the outer product
	where $\otimes$ denotes tensor product;
	\item $\ket{\psi,\psi'}$ is a shorthand of
	the tensor product $\ket{\psi}\ket{\psi'}=\ket{\psi}\otimes\ket{\psi'}$.
\end{itemize}
Let $\{\ket{i}: i=1,2,\dots,d\}$ be an orthonormal basis of $\h$.
Then any element $\ket{\psi}$, interpreted as a \emph{state}, of $\h$
can be entirely determined as $\ket{\psi}=\sum_{i=1}^d c_i\ket{i}$,
where $c_i \in \mathbb{C}$ ($i=1,2,\dots,d$) satisfy the normalization condition $\sum_{i=1}^d |c_i|^2=1$.
If $\ket{\psi}$ is linearly expressed by two or more elements $\ket{i}$ with nonzero coefficients,
it is said to be a \emph{super-position} of those elements $\ket{i}$.
For two spaces $\B$ and $\B'$,
the join $\B \vee \B'$ is the space spanned by the elements of $\B$ and $\B'$,
i.\,e.\@ $\spn(\B\cup\B')$.
For two quantum systems $\h$ and $\h'$,
the state space of their composite system is given by
the tensor product $\h\otimes \h'$
that is the Hilbert space spanned by the tensor products of elements in $\h$ and $\h'$,
i.\,e.\@ $\spn(\{\ket{\psi,\psi'}: \ket{\psi}\in\h \wedge \ket{\psi'}\in\h'\})$,
equipped with the inner product
$\langle \psi_1,\psi_1'\,|\,\psi_2,\psi_2'\rangle=\ip{\psi_1}{\psi_2}\langle\psi_1'\,|\,\psi_2'\rangle$
for any $\ket{\psi_1},\ket{\psi_2} \in \h$ and $|\psi_1'\rangle,|\psi_2'\rangle \in \h'$.

\paragraph{Linear Operators}
Let $\gamma$ be a linear operator on $\h$.
It is \textit{Hermitian} if $\gamma=\gamma^\dag$.
A Hermitian operator $\gamma$ is entirely determined
by its $d$ diagonal elements $\bra{i}\gamma\ket{i} \in \mathbb{R}$ ($i=1,2,\dots,d$)
and $d(d-1)/2$ off-diagonal elements $\bra{i}\gamma\ket{j} \in \mathbb{C}$ with $1 \le i<j \le d$
for a total of $d^2$ real numbers.
Let $\her(\h)$ be the set of Hermitian operators on $\h$.
For brevity, such a parameter $\h$ in $\her(\h)$ can be omitted if it is clear from the context.
For a Hermitian operator $\gamma$, we have the spectral decomposition
$\gamma=\sum_{i=1}^d \lambda_i\op{\lambda_i}{\lambda_i}$
where $\lambda_i \in \mathbb{R}$ ($i=1,2,\dots,d$) are the eigenvalues of $\gamma$
and $\ket{\lambda_i}$ (taking the meaningful labels $\lambda_i$) are the corresponding eigenvectors.
The \emph{support} of $\gamma$ is the subspace of $\h$ spanned by
all eigenvectors associated with nonzero eigenvalues,
i.\,e.\@ $\supp(\gamma):=\spn(\{\ket{\lambda_i}:i=1,2,\dots,d \wedge \lambda_i\ne 0\})$.
Although the spectral decomposition of $\gamma$ may be not unique,
the support of $\gamma$ must be unique,
since it is the orthocomplement of
the null space $\spn(\{\ket{\psi}\in\h:\gamma\ket{\psi}=0\})$ of $\gamma$.
So the notion of support is well defined.
A Hermitian operator $\gamma$ is \emph{positive}
if $\bra{\psi}\gamma\ket{\psi} \ge 0$ holds for any $\ket{\psi}\in\h$.
A \emph{projector} $\PP$ is a positive operator of
the form $\sum_{i=1}^m \op{\psi_i}{\psi_i}$ with $m\le d$,
where $\ket{\psi_i}$ ($i=1,2,\dots,m$) are orthonormal.
It implies that all eigenvalues of $\PP$ are in $\{0,1\}$.

\paragraph{Quantum States}
The \emph{trace} of a linear operator $\gamma$ is defined as
$\tr(\gamma):=\sum_{i=1}^d \bra{\psi_i} \gamma \ket{\psi_i}$
for any orthonormal basis $\{\ket{\psi_i}:i=1,2,\dots,d\}$.
It is unique as it equals the sum of all eigenvalues of $\gamma$.
A \emph{density} operator $\rho$ is a positive operator with unit trace;
a partial density operator $\rho$ is a positive operator with trace not greater than $1$.
Let $\den$ be the set of density operators,
and $\den^{\le 1}$ the set of partial density operators.
For a density operator $\rho$,
we have the spectral decomposition
$\rho=\sum_{i=1}^m \lambda_i \op{\lambda_i}{\lambda_i}$
where $\lambda_i$ ($i=1,2,\dots,m$) are positive eigenvalues.
We call such eigenvectors $\ket{\lambda_i}$ \emph{eigenstates} of $\rho$ explained below.
The density operators are usually used to describe quantum states.
Under that decomposition,
it means that the quantum system is in state $\ket{\lambda_i}$ with probability $\lambda_i$.
When $m=1$, we know that the system is surely in state $\ket{\lambda_1}$ (with probability one),
which is the so-called \emph{pure} state;
otherwise the state is \emph{mixed}.
Both the vector notation $\ket{\lambda_i}$ and the outer product notation $\op{\lambda_i}{\lambda_i}$
could be employed to denote pure states.
An alternative way to describe quantum states is
the \emph{probabilistic ensemble} $\{(\ket{\psi_k},p_k):k=1,2,\ldots\}$ with $\sum_k p_k=1$.
It means that the system is the mixture of being in state $\ket{\psi_k}$ with probability weight $p_k$.
Here $\ket{\psi_k}$ ($k=1,2,\ldots$) are not necessarily orthogonal.

\paragraph{Quantum Operations}
Super-operators $\E$ on $\h$ are linear operators on the (ground) linear operators on $\h$.
Particularly, (completely-positive) super-operators are employed to describe quantum operations.
It is usually described by the Kraus representation $\E=\{\EE_i:i=1,2,\dots,m\}$,
entailing that for a given density operator $\rho$,
we have $\E(\rho)=\sum_{i=1}^m \EE_i \rho \EE_i^\dag$.
Here the number $m$ of \emph{Kraus operators} $\EE_i$ could be bounded by $d^2$
without loss of generality,
since the (ground) linear operators $\rho$ are Hermitian,
i.\,e.\@ a linear space of dimension $d^2$,
and thereby there are at most $d^2$ linearly independent linear operators on that space.
We will use the bracket notation $\{\EE_i:i=1,2,\dots,m\}$
to denote (the Kraus representation of) a super-operator $\E$.
For two super-operators $\E=\{\EE_i:i=1,2,\dots,m\}$ and $\E'=\{\EE_j':j=1,2,\dots,m'\}$,
their sum $\E+\E'$ is given by $\{\EE_i:i=1,2,\dots,m\} \cup \{\EE_j':j=1,2,\dots,m'\}$;
their composition $\E \circ \E'$
with associative law $\E \circ \E'(\rho)=\E(\E'(\rho))$
is $\{\EE_i\EE_j':i=1,2,\dots,m \wedge j=1,2,\dots,m'\}$.
Let $\I$ be the identity super-operator, and $\id$ the identity operator.
A super-operator $\E$ is \emph{trace-preserving}, denoted $\E \eqsim \I$,
if $\sum_{i=1}^m \EE_i^\dag\EE_i = \id$,
due to
\[
	\tr(\E(\rho)) =\tr\left(\sum_{i=1}^m \EE_i \rho \EE_i^\dag\right)
	=\tr\left(\sum_{i=1}^m \EE_i^\dag \EE_i \rho\right)
	=\tr(\id \rho) = \tr(\rho);
\]
it is \emph{trace-nonincreasing}, denoted $\E \lesssim \I$,
if $\id - \sum_{i=1}^m \EE_i^\dag\EE_i$ is positive,
due to
\[
\begin{aligned}
	\tr(\op{\psi}{\psi}) - \tr(\E(\op{\psi}{\psi}))
    & = \tr(\id \op{\psi}{\psi}) - \tr\left(\sum_{i=1}^m \EE_i^\dag \EE_i \op{\psi}{\psi}\right) \\
    & = \tr\left(\left(\id-\sum_{i=1}^m \EE_i^\dag \EE_i\right) \op{\psi}{\psi}\right)
	= \bra{\psi} \left(\id-\sum_{i=1}^m \EE_i^\dag \EE_i\right) \ket{\psi} \ge 0.
\end{aligned}
\]
Let $\mathcal{S}$ be the set of super-operators,
$\mathcal{S}^{\eqsim \I}$ the set of trace-preserving ones,
and $\mathcal{S}^{\lesssim \I}$ the set of trace-nonincreasing ones.

\paragraph{Quantum Measurement}
A finite set of projectors $\PP_i$ with index $i$ ranging over $\mathit{IDX}$ forms a \emph{projective measurement}
if $\sum_{i \in \mathit{IDX}} \PP_i = \id$.
The measurement aims to extract classical information from quantum states,
but it may destroy the quantum state.
Specifically, given a quantum state $\rho$,
after the above projective measurement,
we will get an outcome
$i \in \mathit{IDX}$ with probability $p_i=\tr(\PP_i \rho)$;
when the outcome is $i$, the final state would be \emph{collapsed} to $\PP_i \rho\PP_i/p_i$.
The measurement process is not reversible.
For a projector $\PP_i$ and a quantum state $\rho$,
$\tr(\PP_i \rho)=0$ implies that the outcome $i$ does not occur,
which holds if and only if $\supp(\PP_i)$ is orthogonal to $\supp(\rho)$.
For a super-operator $\E=\{\EE_i:i=1,2,\ldots,m\}$ and a pure state $\op{\psi}{\psi}$,
we have $\supp(\E(\op{\psi}{\psi}))=\spn(\{\EE_i\ket{\psi}:i=1,2,\ldots,m\})$.
Finally we would mention a useful inclusion:
\begin{equation}\label{eq:support}
	\supp(\E(\op{\psi}{\psi})) \subseteq \bigvee_{k=1}^K \supp(\E(\op{\psi_k}{\psi_k}))
\end{equation}
holds for any $\ket{\psi} \in \spn(\{\ket{\psi_k}:k=1,2,\ldots,K\})$.
It follows from the fact:
Assume the RHS of~\eqref{eq:support} is a proper subspace of $\h$;
otherwise the inclusion follows trivially.
Let $|\psi^\perp\rangle$ be an element of $\h$
orthogonal to all $\supp(\E(\op{\psi_k}{\psi_k}))$ with $k=1,2,\ldots,K$.
It is also orthogonal to $\EE_i \ket{\psi_k}$
for each $i=1,2,\ldots,m$ and $k=1,2,\ldots,K$,
i.\,e.\@ $\langle\psi^\perp| \EE_i \ket{\psi_k}=0$.
It implies that $\langle\psi^\perp| \EE_i \ket{\psi}=0$ holds for each $i=1,2,\ldots,m$,
and thus $|\psi^\perp\rangle$ is orthogonal to $\supp(\E(\op{\psi}{\psi}))$.

\section{Program Model and Termination Problems}\label{S3}
In this section,
we introduce a nondeterministic extension of a quantum \emph{while}-language,
and interpret its operational semantics by two models of quantum Markov decision processes (quantum MDPs).
One model is more complicated but easier to model practical scenarios
while the other is simpler and easier to be verified.
They are shown to have the same expressiveness.
For ease of verification,
we will adopt the latter to represent nondeterministic quantum programs later on.
Finally, we introduce the termination problems of nondeterministic quantum programs considered in the paper.

\subsection{Program Model}
\begin{definition}[{\cite[Chapter~6]{Yin16}}]\label{NQPL}
    A nondeterministic quantum program is generated by the following syntax:
    \[
    \begin{aligned}
        S \ \triangleq \
        q := \ket{0} \,|\, & \Bar{q} := U[\Bar{q}]
		\,|\, S_1;S_2
        \,|\, \square_{j=1}^m \ S_j \\
        \,|\, & \mathbf{if}\ \M[\Bar{q}] = \yes\ \mathbf{then} \ S_1 \\
        \,|\, & \mathbf{while}\ \M[\Bar{q}] = \yes\ \mathbf{do} \ S_1 \enspace.
    \end{aligned}
    \]
\end{definition}

We briefly explain the syntax as follows:
\begin{itemize}
    \item The initialization ``$q := \ket{0}$'' sets quantum variable $q$ to the basis state $\ket{0}$ on $\h_q$,
    where the Hilbert space $\h_q$ of $q$ is supposed to
    have basis states $\ket{0},\ket{1}$ for $q$ being Boolean
    and $\ket{0},\ket{1},\ldots,\ket{k-1}$ for $q$ being integer.
    For any appointed pure state $\ket{\psi}\in\h$,
    there is a unitary operation $\mathbf{U}$
    such that the state $\ket{\psi} = \mathbf{U} \ket{0}$ can be prepared.
    \item The assignment ``$\Bar{q} := U[\Bar{q}]$'' performs the unitary transformation $U$ on the register $\Bar{q}$.
    For operations in classical programs,
    we can implement the quantum analogue by constructing a corresponding unitary operator.
    For example,
    we can take the unitary operation $\mathbf{U}_{+1} = \sum_{i=0}^{2^{32}-1}|i+1 \mod 2^{32}\rangle\langle i|$
    as the quantum counterpart to implement the classical increment assignment $x := x + 1$
    on a $32$-bit integer $x$.
    \item The statement ``$S_1;S_2$'' represents the sequential composition.
    \item The nondeterministic choice ``$\square_{j=1}^m\ S_j$'' means that
    a subprogram $S_j$ ($j=1,2,\ldots,m$) is nondeterministically chosen to execute.
    The nondeterminism will be resolved by
    some action $\alpha_j$ from the set $\mathbf{ACT}=\{\alpha_1,\alpha_2,\ldots,\alpha_m\}$,
    indicating the subprogram $S_j$ to be executed.
    \item The statement
    \[
        \mathbf{if}\ \M[\Bar{q}] = \yes\ \mathbf{then} \ S_1 
    \]
    is a quantum analogue of the classical condition statement.
    As the guard condition,
    a quantum measurement $\M= \{ \M_\yes, \M_\no\}$
    is performed on the register $\Bar{q}$.
    If the outcome of the measurement is $\yes$ whose probability is $p_\yes = \tr(\M_\yes[\Bar{q}])$,
    the state in register $\Bar{q}$ is collapsed into $\Bar{q}_\yes := \M_\yes[\Bar{q}]/p_\yes$,
    and the subprogram $S_1$ will be applied to $\Bar{q}_\yes$,
    resulting in $S_1[\Bar{q}_\yes]$;
    otherwise the state in register $\Bar{q}$ is collapsed into $\Bar{q}_\no := \M_\no[\Bar{q}]/p_\no$
    where $p_\no = \tr(\M_\no[\Bar{q}])$
    and the subprogram $S_1$ will not be applied.
    It is easy to see $p_\yes+p_\no=1$.
    Note that if $p_\yes=0$ (resp.~$p_\no=0$),
    meaning the outcome $\yes$ (resp.~$\no$) cannot be observed,
    we do not need to further consider this branch.  
    In the viewpoint of static analysis,
    the resulting state would be the mixture of $S_1[\Bar{q}_\yes]$
    and $\Bar{q}_\no$ with probability $p_\yes$ and $p_\no$, respectively.
    \item The loop statement 
    \[
        \mathbf{while}\ \M[\Bar{q}] = \yes\ \mathbf{do} \ S_1 
    \]
    admits a projective measurement $\M = \{\M_\yes,\M_\no\}$ as the guard condition,
    so that exactly one of the two outcome $\yes$ and $\no$ would occur
    after performing that measurement $\M$.
    If the outcome $\no$ is observed, 
    the program ends in the collapsed state $\Bar{q}_\no$;
    if the outcome $\yes$ occurs,
    the subprogram $S_1$ will be applied to $\Bar{q}_\yes$ and repeat the above process.
    All in all, the resulting state would be a mixture of countably many states in static analysis.
\end{itemize}

A nondeterministic quantum program is a \emph{finite} statement
generated by the syntax in Definition~\ref{NQPL}.
For a program $\Pro$,
we could assign each statement with a program location
as specified in the left column of Table~\ref{tab:semantics},
in which the nondeterministic choice, the condition and the loop statements are of appropriate wrapping.
All of such locations are collected into the finite set $\mathbf{LOC} = \{\ell_i: i = 1,2,\ldots,n\}$,
in which $\ell_n$ is the end location, indicating no statement to be executed.
Let $\mathbf{VAR}$ be the set of program variables of $\Pro$,
and $\mathbf{ACT}$ the set of actions that resolve the nondeterministic choices.
For a variable $q \in \mathbf{VAR}$, the state space is $\h_q$.
The state space $\h_\mathbf{VAR}$ of all program variables is simply
the tensor product of individual state spaces $\h_q$,
i.\,e.\@ $\h_\mathbf{VAR}=\bigotimes_{q \in \mathbf{VAR}} \h_q$.

\begin{table}[ht]
    \centering  
    \caption{The operational semantics of nondeterministic quantum programs}\label{tab:semantics}
    \begin{tabular}{|c|c|}
       \hline\xrowht{12pt}
       \textbf{statements} & \textbf{operational semantics} \\ 
       \hline\xrowht{24pt}
        $\makecell[l]{\begin{aligned}
            \ell_i: & \quad q := \ket{0} \\ \ell_{i+1}: & \quad S_1
        \end{aligned}}$
        & \makecell[l]{$(\ell_i, \rho) \xrightarrow{\mathbf{ACT}}
        (\ell_{i+1}, \sum_{j=0}^{k-1} \ket{0}_q\!\bra{j} \rho \ket{j}_q\!\bra{0})$ \\
        where the superscript $\mathbf{ACT}$ is abbreviated for any action in it, \\
        the subscript $q$ indicates which variable is involved, \\
        and $k = \dim(\h_q)$} \\
        \hline\xrowht{24pt}
        $\makecell[l]{\begin{aligned}
            \ell_i: & \quad \Bar{q}:=U[\Bar{q}] \\ \ell_{i+1}: & \quad S_1
        \end{aligned}}$
        & \makecell[l]{$(\ell_i,\rho) \xrightarrow{\mathbf{ACT}} (\ell_{i+1},\mathbf{U} \rho \mathbf{U}^\dag)$} \\
        \hline\xrowht{36pt}
        $\makecell[l]{\begin{aligned}
            \ell_i: & \quad \square_{j=1}^m \\ 
            & \begin{aligned}
                \ell_{i+j}: & \quad S_j \\
            \end{aligned} \\
            \ell_{i+m+1}: & \quad S_{m+1}
        \end{aligned}}$
        & \makecell[l]{$(\ell_i,\rho) \xrightarrow{\alpha_j} (\ell_{i+j},\rho)$} \\
        \hline\xrowht{36pt}
        $\makecell[l]{\begin{aligned}
            \ell_i: & \quad \mathbf{if}\ \M[\Bar{q}] = \yes \ \mathbf{then} \\
            & \ell_{i+1}: \quad S_1 \\
            \ell_j: & \quad S_2
        \end{aligned}}$
        & \makecell[l]{$(\ell_i,\rho) \xrightarrow{\mathbf{ACT}} (\ell_{i+1}, \M_\yes \rho \M_\yes)$ \\
        $(\ell_i,\rho) \xrightarrow{\mathbf{ACT}} (\ell_j, \M_\no \rho \M_\no)$} \\
        \hline\xrowht{36pt}
        $\makecell[l]{\begin{aligned}
            \ell_i: & \quad \mathbf{while}\ \M[\Bar{q}] = \yes\ \mathbf{do} \\
            & \ell_{i+1}: \quad S_1 \\
        \ell_j: & \quad S_2
        \end{aligned}}$
        & \makecell[l]{$(\ell_i,\rho) \xrightarrow{\mathbf{ACT}} (\ell_{i+1}, \M_\yes \rho \M_\yes)$ \\
        $(\ell_i,\rho) \xrightarrow{\mathbf{ACT}} (\ell_j, \M_\no \rho \M_\no)$} \\
        \hline\xrowht{12pt}
        $\makecell[l]{\ell_n:}$
        & \makecell[l]{$(\ell_n,\rho) \xrightarrow{\mathbf{ACT}} (\ell_n, \rho)$} \\
        \hline
    \end{tabular}
\end{table}

\begin{example}\label{ex1-1}
    We here consider a quantum Bernoulli factory protocol~\cite{KeO94,JZS18}
    which serves as a running example of our method.
    Alice and Bob explore a quantum analogue of Bernoulli factory constructed on two quantum coins,
    named \emph{quoins},
    regardless of the probability of producing a head when tossing a single quoin.
    The process of quantum Bernoulli factory is wrapped into a black box,
    composed of tossing one quoin,
    flipping the status of the remaining one and observing the status of two quoins.
    Once Alice and Bob enter the initialized quoins into the box and trigger the box to start,
    the status of the quoins will be hidden from them until both agree to check.
    Finally, Alice and Bob bet according to the indicated state of the quoins.
    The detailed protocol is described as:
    \begin{enumerate}
        \item Two quoins are referred to as the left and the right ones.
        \item It nondeterministically chooses one of the two quoins to toss, and the other one is flipped.
        \item If the left quoin is head and the right is tail, then Alice wins;
        if the right quoin is head and the left is tail, then Bob wins;
        otherwise, they end in a draw.
    \end{enumerate}
    
    Intuitively,
    tossing a quoin would produce the result ``head'' or ``tail'' with equal probability,
	independent to the initial status of the two quoins,
	so that it makes the bet fair.
    However, after applying the quantum Bernoulli factory,
    Alice and Bob want to know whether the result is \emph{defective} 
    in the sense
    that neither Alice nor Bob eventually has a chance of winning.
    Let us take Alice's stand to check the defectiveness in quantum setting,
    which is similar from Bob's stand.
    
    In order to describe the process of the protocol,
    we design a nondeterministic quantum program $\Pro_1$
    with program variables $\mathbf{VAR}=\{q_1,q_2\}$
    and locations $\mathbf{LOC}=\{\ell_1,\ell_2,\ldots,\ell_9\}$ as follows.
    \begin{algorithmic}
        \State $\ell_1$:\quad $q_1:=\ket{0}$
        \State $\ell_2$:\quad $q_2:=\ket{0}$
        \State $\ell_3$:\quad $q_1:=U_1[q_1]$
        \State $\ell_4$:\quad $q_2:=U_2[q_2]$
		\State $\ell_5$:\quad $\mathbf{while}$ $\M[q_1;q_2]=\yes$ $\mathbf{do}$
		\State \quad $\ell_6$:\quad $\square_{j=1}^2$
        \State \quad \quad $\ell_7$:\quad $(X_2\circ H_1)[q_1;q_2]$
        \State \quad \quad $\ell_8$:\quad $(X_1\circ H_2)[q_1;q_2]$
        \State $\ell_9$:
    \end{algorithmic}
    Both $\h_{q_1}$ and $\h_{q_2}$ are the one-qubit Hilbert space with orthonormal basis $\{\ket{0},\ket{1}\}$
    where $\ket{0}$ and $\ket{1}$ denote ``head'' and ``tail'' respectively.
    The state space of the two program variables $\h_\mathbf{VAR}=\h_{q_1}\otimes\h_{q_2}$ is
    a two-qubit Hilbert space.
    The unitary transformations $U_i$ ($i=1,2$) transform the initial one-qubit state $\ket{0}$ in registers $q_i$
    into any one-qubit state $\ket{\psi}$ to be prepared.
    For instance, we choose $U_i=\{\mathbf{X}\}$ where $\mathbf{X}=\op{0}{1}+\op{1}{0}$ is the bit-flip,
    so that $\mathbf{X}\ket{0}=\ket{1}$ is prepared in our setting.
    The status of two quoins prior to the \emph{while}-loop can be viewed as
    the composite quantum state $q_1;q_2:=\ket{1,1}\in\h_\mathbf{VAR}$.

    For the while-loop, a projective measurement $\M=\{\M_\yes,\M_\no\}$ is designed as the guard condition,
	where $\M_\yes = \op{0,1}{0,1}$
	and $\M_\no = \id_\mathbf{VAR}-\M_\yes = \op{0,0}{0,0} + \op{1,0}{1,0} + \op{1,1}{1,1}$
	are referred to the events ``the left quoin is head and the right is tail''
	and the complement, respectively.
    Whenever we enter the while loop, i.\,e.\@ being at location $\ell_6$,
    a nondeterministic choice corresponding to tossing the left or the right quoin
    should be resolved by some action from the set $\mathbf{ACT}=\{\alpha_1,\alpha_2\}$,
    which leads to location $\ell_7$ or $\ell_8$
    and the program will return to location $\ell_5$ after that.
    Finally the program would be expected to terminate at location $\ell_9$.

    Tossing the quoin $q_i$ is modelled by
    applying the Hadamard gate $H_i=\{\mathbf{H}\}$ where $\mathbf{H}=\op{+}{0}+\op{-}{1}=\op{0}{+}+\op{1}{-}$
	with $\ket{\pm}=(\ket{0}\pm\ket{1})/\sqrt{2}$ on the $i$th qubit,
    which means that $\ket{0}$ is transformed into $\ket{+}$ and $\ket{1}$ is transformed into $\ket{-}$,
    resulting in the super-positions of ``head'' and ``tail'' with equal probability.
    Flipping the quoin $q_i$ is modelled by applying the bit-flip gate $X_i=\{\mathbf{X}\}$ on the $i$th qubit.
\end{example}

\subsection{Operational Semantics}
We have seen that during the execution of a program,
the program states should take program locations into consideration.
To this end,
we will interpret the operational semantics of a nondeterministic quantum program $\Pro$
by a model of quantum MDP on the Hilbert space $\h_\mathbf{VAR}$ (quantum information)
with program locations $\mathbf{LOC}$ (classical information).
Let us review the model of quantum MDP first.

\begin{definition}\label{QMDP}
	A quantum Markov decision process (quantum MDP for short) on Hilbert space $\h$
	is a quadruple $(S, \Sigma, \E, \M)$, in which
    \begin{itemize}
		\item $S=\{s_i:i=1,2,\dots,n\}$ is a finite set of classical states;
		\item $\Sigma=\{\alpha_j:j=1,2,\dots,m\}$ is a finite set of actions;
		\item $\E: (S \times \Sigma \times S) \to \mathcal{S}^{\lesssim \I}$ gives rise to
        the super-operators $\E_{i,j,k}$ on $\h$
        that characterize the transitions from state $s_i$ to $s_k$ by taking action $\alpha_j$,
		satisfying that $\sum_{s_k\in S} \E_{i,j,k}\eqsim\I$ holds
		for each $s_i\in S$ and each $\alpha_j\in \Sigma$;
		\item $\M$ is a projective measurement on $\h_\CQ=\mathcal{C}\otimes\h$
        with $\mathcal{C}=\spn(\{\ket{s_i}:i=1,2,\dots,n\})$.
    \end{itemize}
\end{definition}
Note that in the classical model of MDP,
there is a probability-allocation function attached to state transitions,
which is generalised to the density operator-allocation function
by those super-operators $\E_{i,j,k}$ in Definition~\ref{QMDP}.
Additionally, to extract classical information from quantum states,
the projective measurement $\M$ is adopted here.

\begin{definition}\label{II-QMDP}
    For a nondeterministic quantum program $\Pro$ with
    program variables $\mathbf{VAR}$, actions $\mathbf{ACT}$ and locations $\mathbf{LOC}$,
    the quantum MDP that interprets $\Pro$ 
	is a quadruple $(\mathbf{LOC},\mathbf{ACT},\linebreak[0]\rightarrow,\{\M_\TT,\M_\NT\})$
    on Hilbert space $\h_\mathbf{VAR}$, where
    \begin{itemize}
		\item the transition relation $\rightarrow$,
        whose entries $(\ell_i,\rho) \xrightarrow{\alpha_j} (\ell_k,\rho')$
        characterize the transitions from location $\ell_i$ to $\ell_k$ by taking action $\alpha_j$
        while changing quantum states $\rho$ to $\rho'$,
        is given by the quantum operations in the right column of Table~\ref{tab:semantics}, and
        \item $\{\M_\TT,\M_\NT\}$ is a projective measurement on $\spn(\{\ket{\ell}: \ell \in \mathbf{LOC}\})$,
        in which $\M_\TT=\op{\ell_n}{\ell_n}$ refers to the end location of $\Pro$
        and $\M_\NT=\sum_{\ell\in \mathbf{LOC}\setminus \{\ell_n\}} \op{\ell}{\ell}$ refers to the complement.
    \end{itemize}
\end{definition}
Here, the projectors $\M_\TT$ and $\M_\NT$ on the space $\spn(\{\ket{\ell}: \ell \in \mathbf{LOC}\})$
are short for the trivial ones $\M_\TT \otimes \id_\mathbf{VAR}$ and $\M_\NT\otimes \id_\mathbf{VAR}$
on the product Hilbert space $\spn(\{\ket{\ell}: \ell \in \mathbf{LOC}\}) \otimes \h_\mathbf{VAR}$.

When a nondeterministic program $\Pro$ executes,
it has finitely many actions $\alpha_1, \alpha_2,\ldots,\alpha_m$
to choose at each location $\ell_i$ of nondeterministic statements.
Each action $\alpha_j\in\Sigma$ is attached with a series of super-operators $\E_{i,j,k}$,
where $\ell_k$ ranges over $\mathbf{LOC}$,
satisfying $\sum_{s_k\in S} \E_{i,j,k}\eqsim\I$.
The nondeterminism is resolved by a sequence of actions.
An infinite sequence $\sigma=\sigma(1)\,\sigma(2)\cdots$ with $\sigma(i)\in\Sigma$
is called an \emph{infinite scheduler} (scheduler for short),
and a finite sequence $\varsigma=\sigma(1)\,\sigma(2)\cdots\sigma(k)$ is a \emph{finite scheduler}.

\begin{example}\label{ex1-2}
    Consider the nondeterministic quantum program $\Pro_1$
    with actions $\mathbf{ACT}=\{\alpha_1,\alpha_2\}$
    and locations $\mathbf{LOC}=\{\ell_1,\ell_2,\ldots,\ell_9\}$
    in Example~\ref{ex1-1}.
    Since the program terminates at the location $\ell_9$,
    we can obtain a projective measurement $\{\M_\TT,\M_\NT\}$
    with $\M_\TT = \op{\ell_9}{\ell_9}$
    and $\M_\NT = \sum_{\ell\in \mathbf{LOC}\setminus \{\ell_9\}}{\op{\ell}{\ell}}$.
    Thus we construct a quantum MDP $\mathcal{M}_1$ interpreting $\Pro_1$,
    which is characterized by the quadruple $(\mathbf{LOC},\mathbf{ACT},\rightarrow,\{\M_\TT,\M_\NT\})$ 
    with the transition relation $\rightarrow$ given by the middle column of Table~\ref{tab:ex2}.
    Here $\rho_1=(\id\otimes\bra{0}) \rho (\id\otimes\ket{0}) + (\id\otimes\bra{1}) \rho (\id\otimes\ket{1})$
    and $\rho_2=(\bra{0}\otimes\id) \rho (\ket{0}\otimes\id) + (\bra{1}\otimes\id) \rho (\ket{1}\otimes\id)$
    are the reduced density operators of $\rho$ that trace out the states on $\h_{q_2}$ and $\h_{q_1}$, respectively.
    
    \begin{table}[ht]
    \centering
    \caption{Translating the nondeterministic quantum program to the quantum MDPs}\label{tab:ex2}
    \scalebox{0.96}{
    \begin{tabular}{|l|l|l|}
        \hline\xrowht{36pt}
        \makecell[c]{\textbf{original statements} \\
        \textbf{in the quantum program $\Pro_1$} \\
        \textbf{from Example~\ref{ex1-1}}}
        & \makecell[c]{\textbf{transition relation $\rightarrow$} \\
        \textbf{in the quantum MDP $\mathcal{M}_1$} \\
        \textbf{as described in Definition~\ref{II-QMDP}}}
        & \makecell[c]{\textbf{transition super-operator $\E$} \\
        \textbf{in the quantum MDP $\mathcal{M}_1'$} \\
        \textbf{as described in Definition~\ref{I-QMDP}}} \\
        \hline\xrowht{36pt}
        \makecell[l]{$\begin{aligned}
            \ell_1: &\quad q_1:=\ket{0}\\
            \ell_2:&\quad\cdots
        \end{aligned}$} & $(\ell_1,\rho) \xrightarrow{\mathbf{ACT}} (\ell_2,\op{0}{0}\otimes\rho_2)$ 
        & \makecell[l]{add Kraus operators \\
            $\op{\ell_2}{\ell_1}\otimes\op{0}{0}\otimes\id$
            and $\op{\ell_2}{\ell_1}\otimes\op{0}{1}\otimes\id$ \\
            to $\E(\vec{\alpha})$ for all $\vec{\alpha}\in \Sigma$} \\
        \hline\xrowht{36pt}
        \makecell[l]{$\begin{aligned}
            \ell_2:&\quad q_2:=\ket{0}\\
            \ell_3:&\quad \cdots
        \end{aligned}$} & $(\ell_2,\rho) \xrightarrow{\mathbf{ACT}} (\ell_3,\rho_1\otimes\op{0}{0})$
        & \makecell[l]{add Kraus operators \\
            $\op{\ell_3}{\ell_2}\otimes\id\otimes\op{0}{0}$
            and $\op{\ell_3}{\ell_2}\otimes\id\otimes\op{0}{1}$ \\
            to $\E(\vec{\alpha})$ for all $\vec{\alpha}\in \Sigma$} \\
        \hline\xrowht{24pt}
        \makecell[l]{$\begin{aligned}
            \ell_3:&\quad q_1:=U_1[q_1]\\
            \ell_4:&\quad \cdots
        \end{aligned}$} & $(\ell_3,\rho) \xrightarrow{\mathbf{ACT}} (\ell_4,U_1(\rho))$
        & \makecell[l]{add Kraus operator $\op{\ell_4}{\ell_3}\otimes \mathbf{X}\otimes\id$ \\
            to $\E(\vec{\alpha})$ for all $\vec{\alpha}\in \Sigma$} \\
        \hline\xrowht{24pt}
        \makecell[l]{$\begin{aligned}
            \ell_4:&\quad q_2:=U_2[q_2]\\
            \ell_5:&\quad \cdots
        \end{aligned}$} & $(\ell_4,\rho) \xrightarrow{\mathbf{ACT}} (\ell_5,U_2(\rho))$
        & \makecell[l]{add Kraus operator $\op{\ell_5}{\ell_4}\otimes\id\otimes \mathbf{X}$ \\
        to $\E(\vec{\alpha})$ for all $\vec{\alpha}\in \Sigma$} \\
        \hline\xrowht{36pt}
        \makecell[l]{$\begin{aligned}
            \ell_5:&\quad \mathbf{while}\ \M[q_1;q_2]=\yes\ \mathbf{do}\\
            &\ell_6: \quad \cdots \\
            \ell_9:&\quad \cdots
        \end{aligned}$} & $\makecell[l]{\begin{aligned}
            &(\ell_5,\rho) \xrightarrow{\mathbf{ACT}} (\ell_6,\M_\yes\rho\M_\yes)\\
            &(\ell_5,\rho) \xrightarrow{\mathbf{ACT}} (\ell_9,\M_\no\rho\M_\no)
        \end{aligned}}$ &\makecell[l]{
        add Kraus operators $\op{\ell_6}{\ell_5}\otimes\op{0,1}{0,1}$ and\\
        $\op{\ell_9}{\ell_5}\otimes(\op{0,0}{0,0}+\op{1,0}{1,0}+\op{1,1}{1,1})$\\
        to $\E(\vec{\alpha})$ for all $\vec{\alpha}\in\Sigma$
        }\\
        \hline\xrowht{48pt}
        \makecell[l]{$\begin{aligned}
            \ell_6:&\quad \square_{j=1}^2\\
            &\ell_7:\quad \cdots\\
            &\ell_8:\quad \cdots
        \end{aligned}$}& $\makecell[l]{\begin{aligned}
            &(\ell_6,\rho) \xrightarrow{\alpha_1} (\ell_7,\rho)\\
            &(\ell_6,\rho) \xrightarrow{\alpha_2} (\ell_8,\rho) 
        \end{aligned}}$ & \makecell[l]{add Kraus operator $\op{\ell_7}{\ell_6}\otimes\id_{\mathbf{VAR}}$ \\
            to $\E(\vec{\alpha})$ for all $\vec{\alpha}\in \Sigma$ with $\vec{\alpha}(6)=\alpha_1$, \\
            add Kraus operator $\op{\ell_8}{\ell_6}\otimes\id_{\mathbf{VAR}}$ \\
            to $\E(\vec{\alpha})$ for all $\vec{\alpha}\in \Sigma$ with $\vec{\alpha}(6)=\alpha_2$} \\
        \hline\xrowht{24pt}
        \makecell[l]{$\begin{aligned}
            \ell_5:&\quad \mathbf{while}\ \cdots\ \mathbf{do} \\
            & \ell_7:\quad (X_2\circ H_1)[q_1;q_2]
         \end{aligned}$}
        & $(\ell_7,\rho) \xrightarrow{\mathbf{ACT}} (\ell_5,(X_2\circ H_1)(\rho))$
        & \makecell[l]{add Kraus operator 
        $\op{\ell_5}{\ell_7}\otimes\mathbf{H}\otimes\mathbf{X}$ \\
        to $\E(\vec{\alpha})$ for all $\vec{\alpha}\in\Sigma$} \\
        \hline\xrowht{24pt}
        \makecell[l]{$\begin{aligned}
            \ell_5:&\quad \mathbf{while}\ \cdots\ \mathbf{do} \\
            & \ell_8: \quad (X_1\circ H_2)[q_1;q_2]
        \end{aligned}$} 
        & $(\ell_8,\rho) \xrightarrow{\mathbf{ACT}} (\ell_5,(X_1\circ H_2)(\rho))$
        & \makecell[l]{add Kraus operator 
        $\op{\ell_5}{\ell_8}\otimes\mathbf{X}\otimes\mathbf{H}$ \\
        to $\E(\vec{\alpha})$ for all $\vec{\alpha}\in\Sigma$} \\
        \hline\xrowht{24pt}
        $\ell_9$: & $(\ell_9,\rho) \xrightarrow{\mathbf{ACT}} (\ell_9,\rho)$
        & \makecell[l]{add Kraus operator $\op{\ell_9}{\ell_9}\otimes\id_{\mathbf{VAR}}$ \\
        to $\E(\vec{\alpha})$ for all $\vec{\alpha}\in \Sigma$} \\
        \hline
    \end{tabular}}
\end{table}

    Starting at the location $\ell_1$ and
    given a finite action sequence $\varsigma = \alpha_1\,\alpha_1\,\alpha_1\,\alpha_1\,\alpha_1\,\alpha_2\,\alpha_1$,
    the run of the quantum MDP $\mathcal{M}_1$ generated by $\varsigma$ is
    \[
        \begin{aligned}
            (l_1,\rho)
            & \xrightarrow{\alpha_1} (\ell_2,\op{0}{0}\otimes\rho_2)
            \xrightarrow{\alpha_1} (\ell_3,\op{0,0}{0,0})
            \xrightarrow{\alpha_1} (\ell_4,\op{1,0}{1,0}) 
            \xrightarrow{\alpha_1} (\ell_5,\op{1,1}{1,1}) \\
            & \xrightarrow{\alpha_1} (\ell_6,\op{1,1}{1,1})
            \xrightarrow{\alpha_2} (\ell_8,\op{1,1}{1,1})
            \xrightarrow{\alpha_1} (\ell_5,\op{0,-}{0,-}).
        \end{aligned}
    \]
    The quantum state would evolve into $\op{0,-}{0,-}$
    while $\mathcal{M}_1$ runs into the location $\ell_5$. \qed
\end{example}

Sometimes,
we would focus on the structure of \emph{while}-loop that plays a central role in termination analysis.
It is a subclass of nondeterministic quantum programs,
which terminates when refuting the guard condition instead of entering the end location.
Thus the location information can be omitted for brevity.
We could interpret the while-loop by the model of quantum MDP as follows: 

\begin{definition}[{\cite[Definition~1]{LYY14}}]\label{I-QMDP}
    For a nondeterministic quantum while-loop $\Pro$
    with program variables $\mathbf{VAR}$ and actions $\mathbf{ACT}$,
    the quantum MDP that interprets $\Pro$ is
    a triple $(\Sigma,\E,\{\M_\yes,\M_\no\})$ on Hilbert space $\h_\mathbf{VAR}$, where:
    \begin{itemize}
        \item $\Sigma=\mathbf{ACT}$;
        \item $\E: \Sigma \to \mathcal{S}^{\eqsim \I}$ gives rise to
        the super-operators $\E_j$ on $\h_\mathbf{VAR}$ by taking action $\alpha_j$;
        \item $\{\M_\yes,\M_\no\}$ is a projective measurement on $\h_\mathbf{VAR}$
		with the outcomes $\yes$ and $\no$ referring to the nontermination and the termination, respectively.
    \end{itemize}
\end{definition}

It is worth noting that in this model,
a measurement is performed on the current quantum state
to determine whether the program terminates or not
before taking each action.
In case the program does not terminate,
an action $\alpha_j$ will be nondeterministically chosen
and the corresponding super-operator $\E_j$ will be applied to the current quantum state.
The program keeps running step by step like the above execution until it terminates,
but it is unnecessary to consider the change on the location after executing every step.

\begin{example}\label{ex1-3}
    Review the nondeterministic quantum program $\Pro_1$ in Example~\ref{ex1-1}.
    There is an embedded quantum while-loop $\Pro_2$ (from location $\ell_5$ to $\ell_9$)
    with actions $\mathbf{ACT}=\{\alpha_1,\alpha_2\}$ and guard condition $\{\M_\yes,\M_\no\}$.
    We can interpret it simply by the quantum MDP $\mathcal{M}_2=(\mathbf{ACT},\E,\{\M_\yes,\M_\no\})$
    with the input state $\rho_0=\op{1,1}{1,1}$,
    where
    \[
    \begin{aligned}
        \E(\alpha_1) =\E_1 &= X_2 \circ H_1 =\{\mathbf{H}\otimes\mathbf{X}\} \\
        \E(\alpha_2) =\E_2 &= X_1 \circ H_2 =\{\mathbf{X}\otimes\mathbf{H}\}. 
    \end{aligned}
    \]
    We define the Kraus operators $\EE_1=\mathbf{H}\otimes\mathbf{X}$
    and $\EE_2=\mathbf{X}\otimes\mathbf{H}$ on $\h_\mathbf{VAR}$ for use afterwards. \qed
\end{example}

Although the model in Definition~\ref{II-QMDP} seems much easier to manipulate
than that in Definition~\ref{I-QMDP},
they are of the same expressiveness indicated by the following lemma.
Hence, we can freely choose one of the two definitions for convenience.
In this paper,
we will mainly adopt the model in Definition~\ref{I-QMDP} for ease of verification.

\begin{lemma}
	The model in Definition~\ref{II-QMDP} has the same expressiveness as that in Definition~\ref{I-QMDP}.
\end{lemma}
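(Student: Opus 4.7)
The plan is to exhibit simulations in both directions between the two models, showing that each can reproduce the step-by-step quantum state evolution and termination probability of the other. The inclusion of Definition~\ref{I-QMDP} into Definition~\ref{II-QMDP} is almost immediate: any while-loop interpreted as $(\Sigma,\E,\{\M_\yes,\M_\no\})$ on $\h_\mathbf{VAR}$ can be packaged as a II-QMDP with just two locations, say $\ell_1$ (loop body) and $\ell_n$ (end), with transitions $(\ell_1,\rho)\xrightarrow{\alpha_j}(\ell_1,\M_\yes\E_j(\rho)\M_\yes)$ and $(\ell_1,\rho)\xrightarrow{\alpha_j}(\ell_n,\M_\no\E_j(\rho)\M_\no)$, together with the obvious measurement $\M_\TT=\op{\ell_n}{\ell_n}$ and $\M_\NT=\op{\ell_1}{\ell_1}$.

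The harder direction is Definition~\ref{II-QMDP} into Definition~\ref{I-QMDP}, and here the construction already previewed in the third column of Table~\ref{tab:ex2} is the blueprint. I would lift everything to the composite Hilbert space $\mathcal{C}\otimes\h_\mathbf{VAR}$ with the classical register $\mathcal{C}=\spn(\{\ket{\ell}:\ell\in\mathbf{LOC}\})$. The I-QMDP then takes as action set $\Sigma=\mathbf{ACT}^{\mathbf{LOC}}$, the set of functions $\vec{\alpha}:\mathbf{LOC}\to\mathbf{ACT}$ that choose a II-QMDP action at every location simultaneously. For each such $\vec{\alpha}$, the super-operator $\E(\vec{\alpha})$ is assembled from Kraus operators of the form $\op{\ell_k}{\ell_i}\otimes\EE$, where $\EE$ ranges over the Kraus operators of the II-QMDP super-operator associated with the transition $(\ell_i,\cdot)\xrightarrow{\vec{\alpha}(\ell_i)}(\ell_k,\cdot)$. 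The measurement lifts to $\{\M_\yes,\M_\no\}:=\{\M_\NT\otimes\id_\mathbf{VAR},\M_\TT\otimes\id_\mathbf{VAR}\}$. A direct computation, using $\sum_{k}\E_{i,\vec{\alpha}(\ell_i),k}\eqsim\I$ for each $\ell_i$, confirms $\E(\vec{\alpha})\eqsim\I$.

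The main obstacle I anticipate is the mismatch of scheduler shapes: a II-QMDP scheduler picks one action per time step while an I-QMDP meta-action must specify one action per location at once. I would resolve this by noting that the classical register $\mathcal{C}$ remains, along any run, in a (partial) density operator supported on the computational basis $\{\ket{\ell}\}$, so only the component $\vec{\alpha}(\ell)$ at the currently-occupied location $\ell$ contributes to the evolution while the others can be set arbitrarily. Given a II-QMDP scheduler $\sigma(1)\sigma(2)\cdots$, I would define the I-QMDP scheduler $\vec{\sigma}(t)$ to agree with $\sigma(t)$ at every location; conversely, any I-QMDP scheduler induces a II-QMDP scheduler by reading off its chosen action at the current location at each step. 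An induction on the number of steps then shows that, starting from $\op{\ell_1}{\ell_1}\otimes\rho$ in the lifted model and from $(\ell_1,\rho)$ in the original model, the marginal quantum states and the probabilities of having reached $\ell_n$ agree step by step. This gives the desired bijection of executions and therefore the equality of all termination probabilities, proving that the two models simulate each other.
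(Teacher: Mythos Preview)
Your approach coincides with the paper's in both directions: two locations for the I$\to$II encoding, and the product space $\mathcal{C}\otimes\h_\mathbf{VAR}$ with meta-actions $\mathbf{ACT}^{\mathbf{LOC}}$ for the II$\to$I encoding (exactly the construction previewed in Table~\ref{tab:ex2}). Your additional discussion of how schedulers correspond---reading off the current-location component of $\vec{\alpha}$---is more explicit than what the paper provides and is a useful clarification.

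There is, however, a slip in your I$\to$II encoding. You wrote the loop-at-$\ell_1$ transition as $(\ell_1,\rho)\xrightarrow{\alpha_j}(\ell_1,\M_\yes\E_j(\rho)\M_\yes)$, i.e.\ super-operator first, then measurement. But the I-QMDP semantics fixes the opposite order: $\F_{\alpha_j}=\E_j\circ\{\M_\yes\}$, so one step sends $\rho$ to $\E_j(\M_\yes\rho\M_\yes)$, and the mass shed to termination at that step is $\tr(\M_\no\rho)$, not $\tr(\M_\no\E_j(\rho))$. With your ordering the two models' step-by-step termination probabilities do not match (and in general $\M_\yes\E_j(\rho)\M_\yes\neq\E_j(\M_\yes\rho\M_\yes)$). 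The paper's construction records this correctly: $(\ell_1,\rho)\xrightarrow{\alpha_j}(\ell_2,\M_\no\rho\M_\no)$ and $(\ell_1,\rho)\xrightarrow{\alpha_j}(\ell_1,\E_j(\M_\yes\rho\M_\yes))$. Swapping the order in your encoding fixes the issue; the rest of your argument then goes through unchanged.
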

\begin{proof}
Given a quantum MDP $(\Sigma,\E,\{\M_\yes,\M_\no\})$ on Hilbert space $\h_\mathbf{VAR}$
in Definition~\ref{I-QMDP},
we can obtain another quantum MDP $(\mathbf{LOC},\mathbf{ACT},\rightarrow,\{\M_\TT,\M_\NT\})$
on Hilbert space $\h_\mathbf{VAR}$,
as described in Definition~\ref{II-QMDP},
by introducing two locations $\ell_1$ and $\ell_2$.
Then every transition of $\E$ makes a self-loop at $\ell_1$ if it is not terminating,
otherwise is led to $\ell_2$. 
Formally it is constructed as:
\begin{itemize}
    \item setting the location set $\mathbf{LOC}=\{\ell_1,\ell_2\}$,
    so that $\M_\TT=\op{\ell_2}{\ell_2}$ and $\M_\NT=\op{\ell_1}{\ell_1}$,
    \item setting the action set $\mathbf{ACT}=\Sigma$,
    \item $(\ell_1,\rho) \xrightarrow{\alpha_j} (\ell_2,\M_\no\rho\M_\no)$
    and $(\ell_1,\rho) \xrightarrow{\alpha_j} (\ell_1,\rho')$
    follow from $\rho'=\E_j(\M_\yes\rho\M_\yes)$ where $\E_j=\E(\alpha_j)$ for $\alpha_j \in \Sigma$.
\end{itemize}

Conversely, given a quantum MDP $(\mathbf{LOC},\mathbf{ACT},\rightarrow,\{\M_\TT,\M_\NT\})$
with locations $\mathbf{LOC}=\{\ell_1,\ell_2,\ldots,\ell_n\}$
on Hilbert space $\h_\mathbf{VAR}$ in Definition~\ref{II-QMDP},
we can obtain another quantum MDP $(\Sigma,\E,\{\M_\yes,\M_\no\})$
on the product Hilbert space $\mathcal{C}\otimes\h_\mathbf{VAR}$
with $\mathcal{C} = \spn(\{\ket{\ell}: \ell\in \mathbf{LOC})$,
as described in Definition~\ref{I-QMDP},
by quantitizing the location information $\mathbf{LOC}$ into $\mathcal{C}$.
Formally it is constructed as:
\begin{itemize}
    \item setting the action set $\Sigma=\mathbf{ACT}^n$ that is the $n$-fold of $\mathbf{ACT}$,
    \item $\E(\vec{\alpha})(\sum_{i=1}^n \op{\ell_i}{\ell_i}\otimes\rho_i)
    =\sum_{i=1}^n \sum_{\ell_{k_i}\in\mathbf{LOC}} |\ell_{k_i}\rangle\langle\ell_{k_i}| \otimes\rho_{k_i}$,
    where $\E(\vec{\alpha})$ is the super-operator of $\E$ by
    taking an action $\vec{\alpha}=(\alpha_{j_1},\alpha_{j_2},\ldots,\alpha_{j_n})\in \Sigma$,
    follows from the series of $(\ell_i,\rho_i) \xrightarrow{\alpha_{j_i}} (\ell_{k_i},\rho_{k_i})$
    with $\ell_{k_i}$ ranging over $\mathbf{LOC}$,
    \item setting $\M_\no=\op{\ell_n}{\ell_n} \otimes \id_\mathbf{VAR}$
    and $\M_\yes=\sum_{\ell \in \mathbf{LOC}\setminus \{\ell_n\}} \op{\ell}{\ell} \otimes \id_\mathbf{VAR}$. \qedhere
\end{itemize}
\end{proof}

\begin{example}\label{ex1-4}
    For the quantum MDP $\mathcal{M}_1=(\mathbf{LOC},\mathbf{ACT},\rightarrow,\{\M_\TT,\M_\NT\})$ in Example~\ref{ex1-2},
    we can construct an equally expressive quantum MDP $\mathcal{M}_1'= (\Sigma,\E,\{\M_\yes,\M_\no\})$
    with the following components:
    \begin{itemize}
        \item the input state $\op{\ell_1}{\ell_1}\otimes\rho_0$ for the input state $\rho_0$ of $\mathcal{M}_1$,
        \item the action set $\Sigma=\mathbf{ACT}^9$ as $|\mathbf{LOC}|=9$,
        \item the transition super-operator $\E$ is constructed part by part
        in the right column of Table~\ref{tab:ex2}
        and summarized as
        \[
            \E(\vec{\alpha}) = \left\{\begin{array}{l}
                \op{\ell_2}{\ell_1}\otimes\op{0}{0}\otimes\id,\
                \op{\ell_2}{\ell_1}\otimes\op{0}{1}\otimes\id, \\
                \op{\ell_3}{\ell_2}\otimes\id\otimes\op{0}{0},\
                \op{\ell_3}{\ell_2}\otimes\id\otimes\op{0}{1}, \\
                \op{\ell_4}{\ell_3}\otimes \mathbf{X}\otimes\id,\
                \op{\ell_5}{\ell_4}\otimes\id\otimes \mathbf{X}, \\
                \op{\ell_6}{\ell_5}\otimes\op{0,1}{0,1},\
                \op{\ell_9}{\ell_5}\otimes(\op{0,0}{0,0}+\op{1,0}{1,0}+\op{1,1}{1,1}), \\
                \underline{\op{\ell_7}{\ell_6}\otimes\id_{\mathbf{VAR}}},\
                \op{\ell_5}{\ell_7}\otimes\mathbf{H}\otimes\mathbf{X},\
                \op{\ell_5}{\ell_8}\otimes\mathbf{X}\otimes\mathbf{H},\
                \op{\ell_9}{\ell_9}\otimes\id_{\mathbf{VAR}}
            \end{array}\right\}
        \]
        if the $6$th component $\vec{\alpha}(6)$ of the 9-tuple $\vec{\alpha} \in \Sigma$ is $\alpha_1$,
        while replacing the underlined Kraus operator $\op{\ell_7}{\ell_6}\otimes\id_{\mathbf{VAR}}$
        with $\op{\ell_8}{\ell_6}\otimes\id_{\mathbf{VAR}}$ if $\vec{\alpha}(6)=\alpha_2$, and
        \item the projectors $\M_\no=\op{\ell_9}{\ell_9}\otimes\id_\mathbf{VAR}$
        and $\M_\yes=\sum_{\ell\in\mathbf{LOC}\setminus \{\ell_9\}}{\op{\ell}{\ell}\otimes\id_\mathbf{VAR}}$. \qed
    \end{itemize}
\end{example}

An execution scheduler of a program can be represented as a sequence of actions as in Definition~\ref{I-QMDP}. 
We define the super-operator $\F_{\alpha_j}=\E_j \circ \{\M_\yes\}$ ($\alpha_j\in \Sigma$)
as the composite quantum operation upon the measure outcome of nontermination;
let $\varsigma \uparrow k$ be the finite prefix of $\varsigma$ with length $k$
for $k \le |\varsigma|$,
and $\varsigma \downarrow k$ the suffix obtained by removing the $k$-prefix from $\varsigma$.
Then we have the inductive construction of the super-operator over a sequence of actions
\[
    \F_\varsigma = \begin{cases}
    \I & \textup{if }|\varsigma|=0 \\
    \F_{\varsigma \downarrow 1} \circ \F_{\varsigma \uparrow 1}
    & \textup{if }|\varsigma| \ge 1.
    \end{cases}
\]
For example, for a finite schedule $\varsigma=\alpha_1\alpha_2\alpha_3$,
we have $\varsigma \uparrow 1=\alpha_1$, $\varsigma \downarrow 1=\alpha_2\alpha_3$,
and $\F_\varsigma =\F_{\alpha_1\alpha_2\alpha_3}
=\F_{\alpha_2\alpha_3}\circ\F_{\alpha_1}
=\F_{\varsigma \downarrow 1}\circ\F_{\varsigma \uparrow 1}$.
The construction of the super-operator over a sequence of actions can be extended to infinite schedulers $\sigma$.

From now on, we employ the model of quantum MDP $(\Sigma,\E,\{\M_\yes,\M_\no\})$ in Definition~\ref{I-QMDP}
to represent nondeterministic quantum programs $\Pro$.
The size of $\Pro$ is dominated by $\BigO(m \cdot d^4)$ where $m=|\Sigma|$ and $d=\dim(\h)$,
since $\E$ has $m$ super-operators $\E_j$ for $\alpha_j \in \Sigma$
and each super-operator $\E_j$ has at most $d^2$ Kraus operators that are $d$-by-$d$ matrices.
For brevity, we measure the size of $\Pro$ simply by the two parameters $m$ and $d$.
All the $m \cdot d^4$ numbers in $\E$ are supposed to be \emph{algebraic numbers}
that are roots of the polynomials with rational coefficients.
Algebraic numbers are widely used in quantum computing,
such as $1/\sqrt{2}$ appearing in the Hadamard gate
and the imaginary unit $\imath$ appearing in the Pauli gate $\imath\op{1}{0}-\imath\op{0}{1}$.
Arithmetic operations (addition, subtraction, multiplication and division) on algebraic numbers
are further supposed to be of unit cost as usual, i.\,e.\@ $\BigO(1)$.
These basics will build up our complexity analysis on later.

\subsection{Termiantion Problems}
We are to deliver the termination probabilities of nondeterministic quantum programs
and the termination problems.
\begin{definition}[Termination Probability]
	For a nondeterministic quantum program $\Pro=(\Sigma,\E,\{\M_\yes,\M_\no\})$ in Definition~\ref{I-QMDP}
	and an input state $\rho \in \den$,
	\begin{enumerate}
		\item the (accumulative) termination probability under a finite scheduler $\varsigma$ is
		\[
		\Prob_\varsigma(\rho)=\sum_{i=0}^{|\varsigma|}\tr(\M_\no\F_{\varsigma\uparrow i}(\rho));
        \]
		\item the termination probability under an infinite scheduler $\sigma$ is
		\[
		\Prob_\sigma(\rho)=\sum_{i=0}^\infty\tr(\M_\no\F_{\sigma\uparrow i}(\rho));
        \]
        \item the termination probability (for conservation) of $\Pro$ is
        $\Prob(\rho)=\inf_{\sigma\in\Sigma^\omega}\Prob_{\sigma}(\rho)$.
	\end{enumerate}
\end{definition}
It is not hard to see $\Prob_\varsigma(\rho)=\tr(\rho)-\tr(\M_\yes\F_\varsigma(\rho))$.

\begin{problem}[Termination]\label{P1}
Given a nondeterministic quantum program $\Pro=(\Sigma,\E,\{\M_\yes,\M_\no\})$
and an input state $\rho \in \den$,
is $\rho$ terminating with probability one under all schedulers $\sigma$,
i.\,e.\@ $\forall\,\sigma\in\Sigma^\omega: \Prob_\sigma(\rho)=1$?
If not, how can a scheduler $\sigma$ be synthesized to evidence the nontermination?
\end{problem}

\begin{problem}[Weak Termination]\label{P2}
	Given a nondeterministic quantum program $\Pro=(\Sigma,\E,\{\M_\yes,\M_\no\})$
    and an input state $\rho \in \den$,
	is $\rho$ terminating with probability one under some scheduler $\sigma$,
    i.\,e.\@ $\exists\,\sigma\in\Sigma^\omega: \Prob_\sigma(\rho)=1$?
\end{problem}
	
\begin{problem}[Optimal Termination]\label{P3}
	Given a nondeterministic quantum program $\Pro=(\Sigma,\E,\{\M_\yes,\M_\no\})$
    and an input state $\rho \in \den$,
	what is the angelic (resp.~demonic) scheduler $\sigma$
	that maximizes (resp.~minimizes) the termination probability,
    i.\,e.\@ $\arg\max_{\sigma\in\Sigma^\omega} \Prob_\sigma(\rho)$
    (resp.~$\arg\min_{\sigma\in\Sigma^\omega} \Prob_\sigma(\rho)$)?
\end{problem}

\begin{problem}[Universal Termination]\label{P4}
    Given a nondeterministic quantum program $\Pro=(\Sigma,\E,\{\M_\yes,\M_\no\})$,
	are all input states $\rho$ terminating with probability one
    under their respective schedulers $\sigma$, i.\,e.\@
    \begin{subequations}
    \begin{equation}
        \forall\,\rho \in \den\ \exists\,\sigma\in\Sigma^\omega: \Prob_\sigma(\rho)=1?
    \end{equation}
	If yes, a further question asks
    whether there is a scheduler $\sigma$
    that forces all input states $\rho$ to be terminating with probability one, i.\,e.\@
    \begin{equation}
        \exists\,\sigma\in\Sigma^\omega\ \forall\,\rho \in \den: \Prob_\sigma(\rho)=1,
    \end{equation}
    \end{subequations}
    together with how to synthesize such a scheduler $\sigma$;
    otherwise, how can an input $\rho$ be provided to refute the universal termination?
\end{problem}

The first three problems are specified with an input state,
while the last one is not,
since it concerns the termination on all input states that is a ``universal'' problem.
Specifically, Problem~\ref{P1} requires the (strong) termination under \emph{all} schedulers,
Problem~\ref{P2} requires the weak termination under \emph{some} scheduler,
both are concerned with \emph{qualitative} properties.
Problem~\ref{P3} is on \emph{quantitative} property,
which seems to be harder than Problems~\ref{P1} \&~\ref{P2},
since for a given input state $\rho$,
the program terminates with probability one under all schedulers
if $\Prob(\rho)=\inf_{\sigma\in\Sigma^\omega}\Prob_\sigma(\rho)=1$
while it terminates under some scheduler
if $\sup_{\sigma\in\Sigma^\omega}\Prob_\sigma(\rho)=1$.
However, as shown in~\cite[Example~1]{YiY18} that
such an optimal scheduler does not exist,
Problem~\ref{P3} is not well-posed sometimes.
We will solve Problems~\ref{P1} \&~\ref{P4} in the coming sections,
and left the remaining Problem~\ref{P2} as future work.

\section{Computing the Reachable Spaces}\label{S4}
In this section,
we introduce the reachable spaces for a nondeterministic quantum program
starting from an input state.
They over-approximate the set of reachable states in order to obtain an explicit algebraic structure,
which is crucial for an algorithmic analysis.
We review the notion of reachable space
together with the construction method presented in~\cite{LYY14}.
Then we propose a more precise notion of reachable space.
Two kinds of reachable spaces are said to be of types I and II respectively,
and both are computable in polynomial time
w.\,r.\,t.\@ the dimension of the state space $\h$ and the number of actions in $\mathbf{ACT}$
as the existing literature~\cite{LYY14}.

\begin{definition}[Reachable Set]\label{ReachStates}
	Given a nondeterministic quantum program $\Pro$ and an input state $\rho\in\den$,
	the set of reachable states of $\Pro$ starting from $\rho$ is
    $\Psi(\Pro,\rho)=\{\F_\varsigma(\rho): \varsigma \in \Sigma^*\}$.
\end{definition}

The elements $\gamma$ in the set $\Psi(\Pro,\rho)$ are certainly reachable states from $\rho$.
Here, the reachability is specified in a \emph{qualitative} sense
that requires a unit probability of reachability under some finite scheduler $\varsigma$. 
Can we specify the reachability in a \emph{quantitative} sense?
To answer it, we investigate what states are in a given state $\rho \in \den$.
Supposing that $\rho$ is the uniform distribution $\id/d$,
we know that:
\begin{itemize}
    \item any pure state $\ket{\psi} \in \h$ is in $\rho$,
    which is with probability $\tfrac{1}{d}$,
    \item any state $\gamma \in \den$ is also in $\rho$,
    which is with probability $1/(d\cdot\lambda_{\max})$
    where $\lambda_{\max}$ is the maximal eigenvalue of $\gamma$.
\end{itemize}
Overall, a state is in $\rho$,
provided that it has a positive probability in some probabilistic ensemble of $\rho$.
Developing this concept,
a state is said \emph{reachable} from $\rho$,
provided that it has a positive probability of reachability
under some finite scheduler $\varsigma$.
So the eigenstates $\ket{\lambda}$ with positive eigenvalue $\lambda$ of $\gamma \in \Psi(\Pro,\rho)$
are (pure) reachable states;
and even the elements $\ket{\psi}$ in the support of $\gamma$ are (pure) reachable states too,
since, by~\cite[Exercise~2.73]{NC00} (refer to Appendix~\ref{A1} for self-containedness),
there is a minimal probabilistic ensemble of $\gamma$ containing $\ket{\psi}$ with positive probability $p$,
i.\,e.\@ $\gamma=p\op{\psi}{\psi}+\sum_k p_k\op{\psi_k}{\psi_k}$
for some $\ket{\psi_k} \in \supp(\gamma)$ with the probability sum $p+\sum_k p_k=1$.

It is obvious to see that
$\Psi(\Pro,\rho)$ is a countable set without explicit algebraic structure in general,
which yields nontrivial hardness in verification.
To overcome it,
we would like to introduce the notion of \emph{reachable space}.

\begin{definition}[I-Reachable Space {\cite[Definition~3]{LYY14}}]\label{ReachSpace}
	Given a nondeterministic quantum program $\Pro$ and an input state $\rho\in\den$,
	the type I reachable space of $\Pro$ starting from $\rho$ is
    \[\Phi(\Pro,\rho)=\bigvee_{\gamma\in\Psi(\Pro,\rho)}\supp(\gamma).\]
\end{definition}

From the above definition, we see that
for two elements $\gamma_1$ and $\gamma_2$ of $\Psi(\Pro,\rho)$
that are reachable under finite schedulers $\varsigma_1$ and $\varsigma_2$ respectively
and for $\ket{\psi_i} \in \supp(\gamma_i)$ ($i\in\{1,2\}$),
all super-positions $\ket{\psi}$ of $\ket{\psi_1}$ and $\ket{\psi_2}$ are elements of $\Phi(\Pro,\rho)$,
but they are unnecessarily required to be reachable
since the construction does not ensure
the existence of a common finite scheduler $\varsigma$ that generates $\ket{\psi}$.
In this sense, the I-reachable space is known to be a superset of the reachable set.
More precisely, we have:
\begin{itemize}
\item $\Psi(\Pro,\rho) \subset \den(\h)$
since $\Psi(\Pro,\rho)$ is countable while $\den(\h)$ is a continuum that is uncountable,
\item $\Phi(\Pro,\rho) \subseteq \h$, and further
\item $\Psi(\Pro,\rho) \subset \den(\Phi(\Pro,\rho))$.
\end{itemize}
Thus, to show that a property holds on the reachable set $\Psi(\Pro,\rho)$,
it is sufficient to show that the property holds on
all density operators $\den(\Phi(\Pro,\rho))$ on the reachable space $\Phi(\Pro,\rho)$.
The latter has the nice algebraic structure of a finite-dimensional linear space,
which is promising to be effectively verified.

To get an explicit description of the reachable space,
we resort to the following program model that has only one action and thus resolves nondeterminism:
\begin{definition}[Average Quantum Program {\cite[Definition~4]{LYY14}}]\label{AQP}
	Let $\Pro=(\Sigma,\E,\{\M_\yes,\M_\no\})$
    with $\Sigma=\{\alpha_j:j=1,2,\dots,m\}$ and $\E(\alpha_j)=\E_j$
    be a nondeterministic quantum program.
    Then the average quantum program $\bar{\Pro}$ of $\Pro$ is
    the pair $(\bar{\E},\{\M_\yes,\M_\no\})$,
	where
    \begin{itemize}
    \item $\bar{\E}$ is the arithmetic average of $\E$,
	i.\,e.\@, for any program state $\rho\in\den$,
	the effect of the average super-operator $\bar{\E}$ performed on $\rho$ is
    $\tfrac{1}{m}\sum_{j=1}^m \E_j(\rho)$.
    \end{itemize}
\end{definition}

\begin{lemma}[{\cite[Lemma~1]{LYY14}}]\label{RSofAP}
	Given a nondeterministic quantum program $\Pro$ and an input state $\rho\in\den$,
	the I-reachable subspace of $\Pro$ starting from $\rho$ is
    that of the quantum program $\bar{\Pro}$ averaging $\Pro$ starting from $\rho$,
	i.\,e.\@ $\Phi(\Pro,\rho)=\Phi(\bar{\Pro},\rho)$. 
\end{lemma}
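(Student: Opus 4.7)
The plan is to prove the two inclusions $\Phi(\bar{\Pro},\rho) \subseteq \Phi(\Pro,\rho)$ and $\Phi(\Pro,\rho) \subseteq \Phi(\bar{\Pro},\rho)$ separately, and the single observation that unlocks both directions is an explicit formula expressing $\bar{\F}^k(\rho)$ as a uniformly positive-weighted sum of $\F_\varsigma(\rho)$ over all $\varsigma \in \Sigma^k$.

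First I would establish by a straightforward induction on $k$ that
\[
\bar{\F}^k(\rho) = \frac{1}{m^k}\sum_{\varsigma \in \Sigma^k}\F_\varsigma(\rho),
\]
using the base identity $\bar{\F} = \bar{\E} \circ \{\M_\yes\} = \tfrac{1}{m}\sum_{j=1}^m \F_{\alpha_j}$ from Definition~\ref{AQP} together with the linearity of super-operators, which lets $\bar{\F}$ distribute over the inductive sum at each step. Next I would invoke the elementary fact that for any finite family of positive semi-definite operators $\{A_t\}$ with weights $c_t > 0$, one has $\supp(\sum_t c_t A_t) = \bigvee_t \supp(A_t)$; this holds because $\bra{\psi} A_t \ket{\psi} \ge 0$, so $\bra{\psi} \sum_t c_t A_t \ket{\psi} = 0$ if and only if $\ket{\psi}$ is in the null space of every $A_t$.

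The two ingredients combine immediately. For $\Phi(\bar{\Pro},\rho) \subseteq \Phi(\Pro,\rho)$, the support identity gives
\[
\supp(\bar{\F}^k(\rho)) = \bigvee_{\varsigma \in \Sigma^k} \supp(\F_\varsigma(\rho)) \subseteq \Phi(\Pro,\rho),
\]
since each $\F_\varsigma(\rho) \in \Psi(\Pro,\rho)$ by definition; taking the join over $k \ge 0$ finishes this direction. Conversely, for any $\varsigma \in \Sigma^*$ with $|\varsigma| = k$, the operator $\F_\varsigma(\rho)$ appears as one of the positive summands in $m^k \bar{\F}^k(\rho)$, so $\F_\varsigma(\rho) \le m^k \bar{\F}^k(\rho)$ in the L\"owner order, whence $\supp(\F_\varsigma(\rho)) \subseteq \supp(\bar{\F}^k(\rho)) \subseteq \Phi(\bar{\Pro},\rho)$.

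The main obstacle is essentially bookkeeping rather than conceptual: one must be careful that averaging introduces only positive coefficients, so that no support cancellation can occur in either direction. With signed or complex coefficients the support-of-sum identity would fail outright, but the arithmetic averaging with uniform weights $1/m$ in Definition~\ref{AQP} keeps every iterate $\bar{\F}^k(\rho)$ in the positive cone of Hermitian operators, which is exactly what makes both inclusions go through cleanly.
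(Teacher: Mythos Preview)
Your proof is correct. The paper does not reproduce a proof of this lemma, citing it instead as \cite[Lemma~1]{LYY14}; the argument you give---expanding $\bar{\F}^k(\rho)$ as the uniform average $\tfrac{1}{m^k}\sum_{\varsigma\in\Sigma^k}\F_\varsigma(\rho)$ and then using that the support of a strictly-positive combination of positive operators is the join of the individual supports---is the standard one and almost certainly coincides with the original. As a minor remark, once you have the support identity $\supp(\bar{\F}^k(\rho))=\bigvee_{\varsigma\in\Sigma^k}\supp(\F_\varsigma(\rho))$ you can take the join over $k$ on both sides and conclude equality directly, so the separate L\"owner-order argument for the reverse inclusion is redundant (though not wrong).
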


This lemma reveals that $\Pro$ agrees with $\bar{\Pro}$ on the I-reachable subspace,
despite $\Pro$ does not on the reachable set.
Using it, the I-reachable space of $\Pro$ can be obtained
as the least fixedpoint of the ascending chain of linear subspaces of $\h$:
\begin{equation}\label{eq:chain}
\begin{aligned}
\supp(\rho_0) & \subseteq \supp(\rho_0) \vee \supp(\rho_1) \\
& \subseteq \supp(\rho_0) \vee \supp(\rho_1) \vee \supp(\rho_2) \\
& \subseteq \cdots,
\end{aligned}
\end{equation}
where $\rho_i=\bar{\F}^i(\rho)$ with $\bar{\F}=\bar{\E}\circ\{\M_\yes\}$.
Namely, we denote this chain by $\B_0 \subseteq \B_1 \subseteq \B_2 \subseteq \cdots$,
in which each linear space $\B_i$ is computed upon the average quantum program $\bar{\Pro}$.
The following lemma gives an upper bound for
the occurrence of the least fixedpoint in the ascending chain,
thus establishing the computability.

\begin{lemma}\label{lem:lfp}
	Let $\B_0 \subseteq \B_1 \subseteq \B_2 \subseteq \cdots$
    be the ascending chain of nonnull linear subspaces $\B_i \subseteq \h$,
    as defined in~\eqref{eq:chain}.
	Then there is an integer $\ell \le \dim(\h)-1$ such that $\B_k=\B_\ell$ holds for all $k>\ell$. 
\end{lemma}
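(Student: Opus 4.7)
The plan is to establish two facts: first, that consecutive proper inclusions $\B_i \subsetneq \B_{i+1}$ strictly increase dimension and hence can happen only finitely many times before hitting the ceiling $d = \dim(\h)$; second, that once we observe a single equality $\B_{\ell+1} = \B_\ell$, the chain stays constant from $\ell$ onward. The first fact is immediate from the definition of $\B_i$ as the join of supports up to step $i$: each $\B_i$ is a linear subspace of $\h$, and $\B_{i+1} \supseteq \B_i$ by construction. Since $\rho_0 = \rho$ is a density operator, $\B_0 = \supp(\rho)$ is nonnull, so $\dim(\B_0) \ge 1$, leaving room for at most $d - 1$ strict jumps until stabilization.

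The second fact is the main technical point. I would argue the implication ``$\B_{i+1} = \B_i \Longrightarrow \B_{i+2} = \B_{i+1}$'' and then iterate. Since $\B_{i+2} = \B_{i+1} \vee \supp(\bar\F(\rho_{i+1}))$ and $\supp(\rho_{i+1}) \subseteq \B_{i+1} = \B_i$, it suffices to show
\[
\supp(\bar\F(\rho_{i+1})) \;\subseteq\; \B_i.
\]
For this, write $\B_i = \supp(\rho_0) \vee \supp(\rho_1) \vee \cdots \vee \supp(\rho_i)$, pick any $\ket{\psi} \in \B_i$, and expand $\ket{\psi}$ as a linear combination of vectors $\ket{\psi_k^{(j)}} \in \supp(\rho_k)$ for $k \le i$. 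By the support inclusion~\eqref{eq:support} applied to $\bar\F$,
\[
\supp(\bar\F(\op{\psi}{\psi})) \;\subseteq\; \bigvee_{k,j} \supp(\bar\F(\op{\psi_k^{(j)}}{\psi_k^{(j)}})).
\]
Now invoke the minimal probabilistic ensemble fact (cited in the paper via~\cite[Exercise~2.73]{NC00}): each $\ket{\psi_k^{(j)}} \in \supp(\rho_k)$ appears with positive weight in some decomposition of $\rho_k$, so by linearity of $\bar\F$,
\[
\supp(\bar\F(\op{\psi_k^{(j)}}{\psi_k^{(j)}})) \;\subseteq\; \supp(\bar\F(\rho_k)) \;=\; \supp(\rho_{k+1}) \;\subseteq\; \B_{k+1} \;\subseteq\; \B_{i+1} \;=\; \B_i.
\]
Taking the join over $k,j$ shows $\supp(\bar\F(\op{\psi}{\psi})) \subseteq \B_i$ for every $\ket{\psi} \in \B_i$, and expanding $\rho_{i+1}$ in an eigenbasis supported in $\B_i$ then yields $\supp(\bar\F(\rho_{i+1})) \subseteq \B_i$, which gives $\B_{i+2} = \B_{i+1}$. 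A straightforward induction extends this to $\B_k = \B_\ell$ for all $k > \ell$.

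Combining the two facts, the chain can have at most $d - 1$ strict jumps after $\B_0$, so stabilization occurs at some $\ell \le d - 1 = \dim(\h) - 1$, which is the claimed bound. The delicate step will be the ensemble-plus-support argument used in the second fact, since it is tempting but wrong to argue directly from $\supp(\rho_{i+1}) \subseteq \B_i$ to $\supp(\bar\F(\rho_{i+1})) \subseteq \B_i$ without first reducing to the spanning vectors of $\B_i$ via~\eqref{eq:support}.
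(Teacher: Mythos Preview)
Your proposal is correct and follows essentially the same route as the paper: a dimension-counting bound plus the observation that one equality $\B_\ell=\B_{\ell+1}$ forces all later ones. The paper packages the second fact by asserting that the monotone map $F(\mathbb{X})=\mathbb{X}\vee\bigvee_{\ket{\psi}\in\mathbb{X}}\supp(\bar\F(\op{\psi}{\psi}))$ satisfies $F(\B_i)=\B_{i+1}$ and then invokes Knaster--Tarski, whereas you unpack exactly the nontrivial inclusion $F(\B_i)\subseteq\B_{i+1}$ via the ensemble-plus-\eqref{eq:support} argument; so your version is actually more explicit on the one step the paper elides.
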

\begin{proof}
The function $F$ mapping from $\B_i$ to $\B_{i+1}$ ($i \ge 0$) can be formulated as
a monotonic function
\[
    F(\mathbb{X})=\mathbb{X} \vee \bigvee_{\ket{\psi}\in \mathbb{X}} \supp(\bar{\F}(\op{\psi}{\psi})).
\]
Meanwhile, all subspaces $\B^\textup{all}$ of $\h$ form
a complete lattice $(\B^\textup{all},\subseteq,\inf,\sup)$
by taking `$\inf$' as the meet $\bigwedge=\bigcap$ and `$\sup$' as the join $\bigvee$.
By Knaster--Tarski fixedpoint theorem~\cite{CoC77,MOS04},
we have that the least fixedpoint occurs upon $\B_\ell = \B_{\ell+1}$,
which $\ell$ is bounded by $\dim(\h)-1$ since $\B_i$ are nonnull subspaces of $\h$.
\end{proof}

The procedure of computing the I-reachable space $\Phi(\Pro,\rho)$ is stated in Algorithm~\ref{Algo:RS}
with complexity analysis below.
\begin{algorithm}[ht]
	\caption{Computing the I-Reachable Space~{\cite[Algorithm~1]{LYY14}}}\label{Algo:RS}
	\begin{algorithmic}
		\Require a nondeterministic quantum program $\Pro=(\Sigma,\E,\{\M_\yes,\M_\no\})$
        with $\Sigma=\{\alpha_j:j=1,2,\dots,m\}$ and $\E(\alpha_j)=\E_j$
        over $\h$ with dimension $d$,
        and an input state $\rho\in\den$;
		\Ensure an orthonormal basis $B$ of $\Phi(\Pro,\rho)$.
		\begin{algorithmic}[1]
            \State let $\bar{\F}=\tfrac{1}{m}\sum_{j=1}^m \E_j \circ \{\M_\yes\}$
            be the average super-operator;
            \State let $\{\FF_j: j=1,2,\dots,l\}$ be a Kraus representation of $\bar{\F}$;
            \State compute an orthonormal basis $B_0$ of $\supp(\rho)$,
            and $B_{-1} \gets \emptyset$;
			\For{$i\gets 1$ to $d-1$}
            \State $B_i \gets B_{i-1}$;
            \ForAll{$\ket{\psi} \in B_{i-1} \setminus B_{i-2}$}\label{ln:inner}
			\State $V \gets \{\FF_j\ket{\psi}: j=1,2,\dots,l \}$;
			\State compute an orthonormal basis $B'$ of $V$ complement to $B_i$;
            \State $B_i \gets B_i \cup B'$;
            \EndFor
            \If{$B_i=B_{i-1}$ or $|B_i|=d$} \textbf{break};
            \EndIf
			\EndFor
			\State \Return $B_i$.
		\end{algorithmic}
	\end{algorithmic}
\end{algorithm}
\paragraph{Complexity}
The Kraus representation of $\E_j$ are known as the input information of $\Pro$.
For convenience,
we do not compute the simplest Kraus representation of $\bar{\F}$
whose number of Kraus operators can be bounded by $d^2$ here,
but just use the average Kraus operators of $\E_j$,
since the simplest Kraus representation obtained
by quantum process tomography~\cite[Subsection~8.4.2]{NC00}
costs additionally $\BigO(d^{12})$ operations.
Note that there are less than $d$ times of entering the inner loop in Line~\ref{ln:inner}.
Each inner loop performs $l$ times of matrix-vector multiplication
and $l$ times of computing orthocomplement,
where $l$ is bounded by $m \cdot d^2$,
as the factor $m$ comes from the number of actions in $\Pro$
and the factor $d^2$ comes from the number of Kraus operators of the super-operators $\E_j$.
The matrix-vector multiplication $\FF_j\ket{\psi}$ is in $\BigO(d^2)$,
and computing orthocomplement of $\FF_j\ket{\psi}$ is also in $\BigO(d^2)$
by normalizing the difference
\[
    \FF_j\ket{\psi} - \sum_{\ket{\varphi} \in B_i} \bra{\varphi}\FF_j\ket{\psi} \ket{\varphi}
\]
if it is nonzero as the standard Gram--Schmit procedure.
Hence Algorithm~\ref{Algo:RS} is in time $\BigO(m \cdot d^5)$. \qed

\begin{example}\label{ex2-1}
    Continue to consider the nondeterministic quantum program $\Pro_1$ in Example~\ref{ex1-3},
    the average super-operator is $\bar{\F} = \tfrac{1}{2}(\F_{\alpha_1}+\F_{\alpha_2})$.
    Since $\tfrac{1}{2}\F_{\alpha_i}(\rho)=\tfrac{1}{2} (\EE_i \M_\yes) \rho (\EE_i \M_\yes)^\dag$
    for $i\in\{1,2\}$,
    the Kraus representation of $\bar{\F}$ can be $\{\FF_1,\FF_2\}$,
    where
    \[
        \begin{aligned}
        \FF_1 &= \tfrac{1}{\sqrt{2}} \EE_1 \M_\yes
        = \tfrac{1}{\sqrt{2}}(\op{+,1}{0,0}+\op{-,1}{1,0}+\op{-,0}{1,1}), \\
        \FF_2 &= \tfrac{1}{\sqrt{2}} \EE_2 \M_\yes
        = \tfrac{1}{\sqrt{2}}(\op{1,+}{0,0}+\op{0,+}{1,0}+\op{0,-}{1,1}).
        \end{aligned}
    \]
    By Algorithm~\ref{Algo:RS},
    for the given input state $\rho_0 = \op{q_1,q_2}{q_1,q_2} = \op{1,1}{1,1}$,
    the I-reachable space can be inductively computed as follows.
    \begin{enumerate}
        \item Initially, we have $\B_0 = \supp(\rho_0)=\spn(\{\ket{1,1}\})$.
        \item To get the next subspace $\B_1$ along the ascending chain,
        for the basis element $\ket{1,1}$ in $\B_0$,
        we compute
        \[
            \begin{aligned}
            \FF_1 \ket{1,1} &= \tfrac{1}{\sqrt{2}}\ket{-,0}, \\
            \FF_2 \ket{1,1} &= \tfrac{1}{\sqrt{2}}\ket{0,-}.
            \end{aligned}
        \]
        The former operator $\FF_1 \ket{1,1}$ is already orthogonal to $\B_0$
        and can be normalized to $\ket{-,0}$;
        the latter operator $\FF_2 \ket{1,1}$ is also orthogonal to $\B_0$
        but gives another orthogonal element $(\ket{+,0}-\sqrt{2}\ket{0,1})/\sqrt{3}$ by normalizing
        $\tfrac{1}{\sqrt{2}}\ket{0,-} - \tfrac{1}{\sqrt{2}}\ip{-,0}{0,-} \ket{-,0}$.
        Thus the orthonormal basis complement to $\B_0$
        is $\{\ket{-,0}, (\ket{+,0} -\sqrt{2}\ket{0,1})/\sqrt{3}\}$,
        and we get
        $\B_1 = \spn(\{\ket{1,1},\ket{-,0},(\ket{+,0}-\sqrt{2}\ket{0,1})/\sqrt{3}\})$.
        \item To get the subspace $\B_2$,
        for the newly-produced basis elements $\ket{-,0}$
        and $(\ket{+,0}-\sqrt{2}\ket{0,1})/\sqrt{3}$ in $\B_1$,
        we have
        \[
        \begin{aligned}
            \FF_1 \ket{-,0} &= \tfrac{1}{\sqrt{2}}\ket{1,1}, \\
            \FF_2 \ket{-,0} &= -\tfrac{1}{2}\ket{-,+}, \\
            \FF_1 \tfrac{1}{\sqrt{3}}(\ket{+,0} - \sqrt{2}\ket{0,1}) &= \tfrac{1}{\sqrt{6}}\ket{0,1}, \\
            \FF_2 \tfrac{1}{\sqrt{3}}(\ket{+,0} - \sqrt{2}\ket{0,1}) &= \tfrac{1}{\sqrt{6}}\ket{+,+}.
        \end{aligned}
        \]
        Thus an orthonormal basis complement to $\B_1$
        is $\{(-\sqrt{2}\ket{+,0} - \ket{0,1})/\sqrt{3}\}$,
        and we get
        $\B_2 = \spn(\{\ket{1,1}, \ket{-,0},\linebreak[0] (\ket{+,0}-\sqrt{2}\ket{0,1})/\sqrt{3},
        (-\sqrt{2}\ket{+,0}-\ket{0,1})/\sqrt{3}\})$.
        Since $\dim(\h_{\mathbf{VAR}})=d=4=\dim(\B_2)$, we have $\B_2 = \h_{\mathbf{VAR}}$.
    \end{enumerate}
    Hence the least fixedpoint of the ascending chain occurs,
    which yields the I-reachable space $\Phi(\Pro_1,\rho_0) = \h_{\mathbf{VAR}}$. \qed
\end{example}

In the following,
we will have a deeper observation of the reachable set and the reachable space.
Since the former is a countable set and the latter is a continuum,
the latter is possibly a much larger superset of the former.
We are to narrow the over-approximation of the reachable set using other algebraic structures,
instead of the I-reachable space.
One promising way is to use the linearly independent basis of Hermitian operators on $\h$,
say
\begin{equation}\label{eq:base}
    \begin{aligned}
	\{\op{i}{i}:\,1 \le i \le d\}
    &\cup \{(\op{i}{j}+\op{j}{i})/\sqrt{2}:\,1 \le i<j \le d\} \\
	&\cup \{(\imath\op{i}{j}-\imath\op{j}{i})/\sqrt{2}:\,1 \le i<j \le d\}.
    \end{aligned}
\end{equation}
(When the Hilbert space $\h$ in consideration is exactly on the $k$-qubit system,
i.\,e.\@ $d=2^k$ for some integer $k$,
an alternative way is to use the $4^k$ linearly independent elements
$\bigotimes_{i=1}^k \gamma_i$,
where $\gamma_i$ is one of Pauli operators $\id$, $\mathbf{X}=\op{0}{1}+\op{1}{0}$,
$\mathbf{Y}=\imath\op{0}{1}-\imath\op{0}{1}$ and $\mathbf{Z}=\op{0}{0}-\op{1}{1}$.)
Although the general state is expressed by all $d^2$ basis elements in~\eqref{eq:base},
all reachable states might be expressed by only a part of these basis elements.
So, using as few as possible basis elements to express all pure reachable states
yields a more precise notion of reachable space.
In the setting of reachability analysis,
at most $d^2$ pure reachable states could be served as
the linearly independent basis of $\her(\h)$ that we require.
To this end, we resort to the following operator-level program
that characterizes the operations between pure reachable states.

\begin{definition}[Operator-level Program]\label{def:operator}
Let $\Pro=(\Sigma,\E,\{\M_\yes,\M_\no\})$
be a nondeterministic quantum program with $\E_j=\{\EE_{j,k}: k=1,2,\dots,K_j\}$.
Then the operator-level program $\hat{\Pro}$ of $\Pro$ is
the triple $(\hat{\Sigma},\EE,\{\M_\yes,\M_\no\})$,
where
    \begin{itemize}
        \item $\hat{\Sigma}=\{\alpha_{j,k}:j=1,2,\dots,m \wedge k=1,2,\dots,K_j\}$
        is a finite set of actions;
        \item $\EE: \hat{\Sigma} \to \mathcal{L}$ gives rise to the linear operators $\EE_{j,k}$
        taken action $\alpha_{j,k}$,
        which are obtained from the Kraus representation $\bigcup_{j=1}^m \{\EE_{j,k}: k=1,2,\dots,K_j\}$
        of $\sum_{j=1}^m \E_j$.
    \end{itemize}
\end{definition}
Rigorously speaking,
the operator-level program $\hat{\Pro}$ is not a nondeterministic quantum program described in Definition~\ref{I-QMDP},
since it does not meet the trace-preserving restriction generally,
i.\,e.\@, $\{\EE_{j,k}\} \eqsim \I$ holds for all actions $\alpha_{j,k} \in \hat{\Sigma}$
where $\{\EE_{j,k}\}$ denotes the super-operator that has the unique Kraus operator $\EE_{j,k}$.
However, dropping this restriction does not affect the qualitative termination $\Prob(\rho)=1$
considered in the paper,
and we would study the qualitative termination of the operator-level program afterwards.
For convenience, the notation $\FF_\varsigma$ is adapted to $\F_\varsigma$,
e.\,g.\@ $\FF_{\alpha_{j,k}}=\EE_{j,k}\M_\yes$
and $\FF_\varsigma=\FF_{\varsigma \downarrow 1} \FF_{\varsigma \uparrow 1}$.

\begin{definition}[II-Reachable Space]\label{FRS}
	Given a nondeterministic quantum program $\Pro$
    and an input pure state $\rho=\op{\lambda}{\lambda}\in\den$,
	the type II reachable space of $\Pro$ starting from $\rho$ is
    $\Upsilon(\Pro,\rho)=\spn(\Psi(\hat{\Pro},\rho))$,
    where $\hat{\Pro}$ is the operator-level program of $\Pro$ as in Definition~\ref{def:operator}.
\end{definition}

It is not hard to see that
the reachable set $\Psi(\Pro,\rho)$ is over-approximated by the II-reachable space $\Upsilon(\Pro,\rho)$,
since i) all elements $\gamma \in \Psi(\Pro,\rho)$ can be linearly expressed
by those elements in $\Psi(\hat{\Pro},\rho)$
and ii) $\Upsilon(\Pro,\rho)=\spn(\Psi(\hat{\Pro},\rho))$.

For an input pure state $\rho=\op{\lambda}{\lambda}$,
we compute the II-reachable space as the least fixedpoint of the ascending chain of linear subspaces of $\her(\h)$: 
\begin{equation}\label{eq:chain1}
\begin{aligned}
\spn(\{\{\FF_\varsigma\}(\rho): \varsigma \in \hat{\Sigma}^* \wedge |\varsigma|=0\})
& \subseteq \spn(\{\{\FF_\varsigma\}(\rho): \varsigma \in \hat{\Sigma}^* \wedge |\varsigma|\le 1\}) \\
& \subseteq \spn(\{\{\FF_\varsigma\}(\rho): \varsigma \in \hat{\Sigma}^* \wedge |\varsigma|\le 2\}) \\
& \subseteq \cdots,
\end{aligned}
\end{equation}
where the notation $\{\FF_\varsigma\}$ in bracket denotes a super-operator.
The following lemma gives an upper bound for the occurrence of the least fixedpoint in the ascending chain.

\begin{lemma}\label{FRSC}
	Let $\Theta_0 \subseteq \Theta_1 \subseteq \Theta_2 \subseteq \cdots$
    be the ascending chain of nonnull linear subspaces $\Theta_i \subseteq \her(\h)$,
    as defined in~\eqref{eq:chain1}.
	Then there is an integer $\ell \le \dim(\h)^2-1$
    such that $\Theta_k=\Theta_\ell$ holds for all $k>\ell$. 
\end{lemma}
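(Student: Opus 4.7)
The plan is to mimic the proof of Lemma~\ref{lem:lfp} almost verbatim, replacing the ambient space $\h$ by $\her(\h)$ and adjusting the dimension bound accordingly. The essential point is that each step of the chain~\eqref{eq:chain1} arises from applying a monotonic operator on subspaces of $\her(\h)$, and $\her(\h)$ is a real vector space of dimension $\dim(\h)^2$ with basis~\eqref{eq:base}.

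First I would express the transition from $\Theta_i$ to $\Theta_{i+1}$ as a fixed monotonic operator. Concretely, since $\{\FF_\varsigma\}(\rho) = \FF_\varsigma \rho \FF_\varsigma^\dag$ and any sequence of length $i+1$ can be split as a single action followed by a sequence of length $i$, we have
\[
\Theta_{i+1} = \Theta_i \vee \spn\bigl(\{\FF_{\alpha_{j,k}} X \FF_{\alpha_{j,k}}^\dag : X \in \Theta_i,\ \alpha_{j,k} \in \hat{\Sigma}\}\bigr).
\]
Thus $\Theta_{i+1} = G(\Theta_i)$ for the monotonic map
\[
G(\mathbb{X}) = \mathbb{X} \vee \bigvee_{\alpha_{j,k}\in\hat{\Sigma}} \spn\bigl(\{\FF_{\alpha_{j,k}} X \FF_{\alpha_{j,k}}^\dag : X \in \mathbb{X}\}\bigr),
\]
which sends subspaces of $\her(\h)$ to subspaces of $\her(\h)$ (Hermiticity is preserved under the map $X \mapsto \FF X \FF^\dag$).

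Second I would appeal to Knaster--Tarski. The collection of all linear subspaces of $\her(\h)$, ordered by inclusion with meet $\cap$ and join $\vee$, forms a complete lattice. Since $G$ is monotonic, Knaster--Tarski guarantees a least fixedpoint, and the ascending chain $\Theta_0 \subseteq G(\Theta_0) \subseteq G^2(\Theta_0) \subseteq \cdots$ stabilises at that fixedpoint. Let $\ell$ be the first index with $\Theta_\ell = \Theta_{\ell+1}$; by induction, $\Theta_k = \Theta_\ell$ for all $k > \ell$.

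Third, to bound $\ell$, I would observe that whenever $\Theta_i \subsetneq \Theta_{i+1}$, the dimension strictly increases by at least one. Because every $\Theta_i$ is a nonnull subspace of $\her(\h)$, we have $1 \le \dim(\Theta_0) < \dim(\Theta_1) < \cdots < \dim(\Theta_\ell) \le \dim(\h)^2$, so at most $\dim(\h)^2 - 1$ proper extensions are possible, giving $\ell \le \dim(\h)^2 - 1$. I do not anticipate a genuine obstacle here; the only subtlety is verifying that $G(\mathbb{X}) \subseteq \her(\h)$ whenever $\mathbb{X} \subseteq \her(\h)$, which follows from $(\FF X \FF^\dag)^\dag = \FF X^\dag \FF^\dag = \FF X \FF^\dag$ for Hermitian $X$, so the entire argument is a direct lift of the proof of Lemma~\ref{lem:lfp} to the Hermitian-operator setting.
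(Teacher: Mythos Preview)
Your proposal is correct and follows essentially the same approach as the paper's proof: define a monotonic operator $G$ on the complete lattice of subspaces of $\her(\h)$, invoke Knaster--Tarski, and bound the number of strict increments by $\dim(\her(\h))-1=\dim(\h)^2-1$. Your additional check that $\FF X\FF^\dag$ remains Hermitian is a welcome detail the paper omits; the only minor slip is the verbal description of the split (the formula you wrote corresponds to appending an action at the \emph{end} of the sequence, not the beginning), but this does not affect the argument.
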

\begin{proof}
    The proof is similar to that of Lemma~\ref{lem:lfp}.
    The function $G$ from $\Theta_i$ to $\Theta_{i+1}$ ($i \ge 0$) can be formulated as
    a monotonic function
    \[
        G(\mathbb{Y})=
        \spn(\mathbb{Y} \cup \{\{\FF_\alpha\}(\gamma):\gamma\in \mathbb{Y} \wedge \alpha\in\hat{\Sigma}\}).
    \]
    Meanwhile, all subspaces $\Theta^\textup{all}$ of $\her(\h)$ form
    a complete lattice $(\Theta^\textup{all},\subseteq,\inf,\sup)$
    by taking `$\inf$' as the meet $\bigwedge=\bigcap$ and `$\sup$' as the join $\bigvee$.
    By Knaster--Tarski fixedpoint theorem~\cite{CoC77,MOS04},
    we have that the least fixedpoint occurs upon $\Theta_\ell = \Theta_{\ell+1}$,
    where $\ell$ is bounded by $\dim(\h)^2-1$
    since $\Theta_i$ are nonnull subspaces of $\her(\h)$.
\end{proof}

The procedure of computing the II-reachable space $\Upsilon(\Pro,\rho_0)$
is stated in Algorithm~\ref{Algo:FRS} with complexity analysis below.
\begin{algorithm}[ht]
	\caption{Computing the II-Reachable Space}\label{Algo:FRS}
	\begin{algorithmic}
		\Require a nondeterministic quantum program $\Pro=(\Sigma,\E,\{\M_\yes,\M_\no\})$
        with $\Sigma=\{\alpha_j:j=1,2,\dots,m\}$, $\E(\alpha_j)=\E_j$
        and $\E_j=\{\EE_{j,k}:k=1,2,\dots,K_j\}$
        over $\h$ with dimension $d$,
        and an input pure state $\rho_0=\op{\lambda}{\lambda}\in\den$;
		\Ensure a linearly independent basis $\theta$ of $\Upsilon(\Pro,\rho_0)$
        whose elements are pure states.
		\begin{algorithmic}[1]
            \State let $\hat{\Sigma}=\{\alpha_{j,k}: j=1,2,\dots,m \wedge k=1,2,\dots,K_j\}$,
            and $\EE(\alpha_{j,k})=\EE_{j,k}$;
            \State let $\hat{\Pro}=(\hat{\Sigma},\EE,\{\M_\yes,\M_\no\})$
            be the operator-level program of $\Pro$;
            \State $\FF_{\alpha_{j,k}} \gets \EE_{j,k} \M_\yes$ with $j=1,2,\dots,m$ and $k=1,2,\dots,K_j$;
            \State $B_0 \gets \{\ket{\lambda}\}$, $B_{-1} \gets \emptyset$,
            and $\theta_0 \gets \{\rho_0\}$;
            \For{$i\gets 1$ to $d^2-1$}\label{ln:it}
            \State $B_i \gets B_{i-1}$ and $\theta_i \gets \theta_{i-1}$;
            \ForAll{$\ket{\psi} \in B_{i-1} \setminus B_{i-2}$}\label{ln:inner1}
			\State $V \gets \{\FF_{\alpha_{j,k}}\ket{\psi}/\|\FF_{\alpha_{j,k}}\ket{\psi}\|:
            j=1,2,\dots,m \wedge k=1,2,\dots,K_j\}$;
			\State find a maximal subset $B'$ of $V$,
            such that $\theta'=\{\op{\psi'}{\psi'}: \ket{\psi'}\in B'\}$ is a linearly independent basis
            complement to $\theta_i$;
			\State $B_i \gets B_i \cup B'$ and $\theta_i \gets \theta_i \cup \theta'$;
            \EndFor
            \If{$B_i=B_{i-1}$ or $|B_i|=d^2$} \textbf{break};
            \EndIf
			\EndFor
			\State \Return $\theta_i$.
		\end{algorithmic}
	\end{algorithmic}
\end{algorithm}
\paragraph{Complexity}
Note that there are less than $d^2$ times of entering the inner loop in Line~\ref{ln:inner1}.
Each inner loop performs at most $m \cdot d^2$ times of matrix-vector multiplication
together with normalization
and at most $m \cdot d^2$ times of checking the linear independence,
as the factor $m$ comes from the number of actions in $\Pro$
and the factor $d^2$ comes from the number of Kraus operators of $\E_j$.
The matrix-vector multiplication is in $\BigO(d^2)$,
the normalization is in $\BigO(d)$,
and checking the linear independence can be in $\BigO(d^4)$
with embedding into the orthonormalization of the linearly independent basis.
That is, $\theta_i$ is a linearly independent basis
if and only if there is an orthonormal basis $\vartheta_i$ such that $\spn(\theta_i)=\spn(\vartheta_i)$,
in which each element can be obtained in $\BigO(d^4)$ by the Gram--Schmit procedure.
Hence Algorithm~\ref{Algo:FRS} is in time $\BigO(m \cdot d^8)$. \qed

\begin{example}\label{ex2-2}
    Reconsider the program $\Pro_2$ in Example~\ref{ex1-3},
    the operator-level program $\hat{\Pro}=(\hat{\Sigma},\EE,\{\M_\yes,\M_\no\})$ of $\Pro_2$ provides
    \begin{itemize}
        \item the set of actions $\hat{\Sigma}=\{\alpha_{1,1},\alpha_{2,1}\}$;
        \item linear operators $\EE(\alpha_{1,1})=\EE_{1,1} = \mathbf{H} \otimes \mathbf{X}$
        and $\EE(\alpha_{2,1})=\EE_{2,1}=\mathbf{X} \otimes \mathbf{H}$.
    \end{itemize}
    We define $\FF_{\alpha_{1,1}}=\EE_{1,1} \M_\yes$ and $\FF_{\alpha_{2,1}}=\EE_{2,1} \M_\yes$.
    By Algorithm~\ref{Algo:FRS},
    for the input pure state $\rho = \op{1,1}{1,1}$,
    the II-reachable space can be computed as follows.
    \begin{enumerate}
        \item Initially, we have $B_0 = \{\ket{1,1}\}$ and $\theta_0 = \{\op{1,1}{1,1}\}$.
        \item Then, we compute
        \[
            \begin{aligned}
            \FF_{\alpha_{1,1}} \ket{1,1}/\|\FF_{\alpha_{1,1}} \ket{1,1}\| &=\ket{-,0}, \\
            \FF_{\alpha_{2,1}} \ket{1,1}/\|\FF_{\alpha_{2,1}} \ket{1,1}\| &=\ket{0,-}.
            \end{aligned}
        \]
        So we have $V = \{\ket{-,0},\ket{0,-}\}$.
        Since the two pure states in $V$ have density operators
        that form a linearly independent basis complement to $\theta_0$,
        we obtain $B_1 = B_0 \cup V = \{\ket{1,1},\ket{-,0},\ket{0,-}\}$ and
        $\theta_1 = \{\op{\psi}{\psi}:\psi\in B_1\} = \{\op{1,1}{1,1},\op{-,0}{-,0},\op{0,-}{0,-}\}$.
        \item Repeating this process, we have
            \[
                \begin{aligned}
                B_2 &= \{\ket{1,1},\ket{-,0},\ket{0,-},\ket{-,+},\ket{+,1},\ket{1,+}\}, \\
                B_3 &= B_2 \cup \{(\ket{-,0}-\sqrt{2}\ket{1,1})/\sqrt{3},(\sqrt{2}\ket{0,0}-\ket{1,+})/\sqrt{3}\}, \\
                B_4 &= B_3.
                \end{aligned}
            \]
    In detail, we name the eight elements in $B_3$ by
    $\ket{\psi_1}=\ket{1,1}$,
    $\ket{\psi_2}=\ket{-,0}$,
    $\ket{\psi_3}=\ket{0,-}$,
    $\ket{\psi_4}=\ket{-,+}$,
    $\ket{\psi_5}=\ket{+,1}$,
    $\ket{\psi_6}=\ket{1,+}$,
    $\ket{\psi_7}=(\ket{-,0}-\sqrt{2}\ket{1,1})/\sqrt{3}$
    and $\ket{\psi_8}=(\sqrt{2}\ket{0,0}-\ket{1,+})/\sqrt{3}$,
    whose outer product form $\op{\psi_i}{\psi_i}$ are reachable states of the operator-level program $\hat{\Pro}$.
    The eight outer products $\op{\psi_i}{\psi_i}$ make up the set $\theta_3$,
    which is sufficient to linearly express any reachable pure state of $\hat{\Pro}$.
    For instance, $\FF_{\alpha_{2,1}} \ket{1,+}=\ket{0,0}$
    and its outer product form $\op{0,0}{0,0}$ is a reachable state of $\hat{\Pro}$,
    which can be linearly expressed as
    \[
        \op{0,0}{0,0}=
        \op{\psi_1}{\psi_1}-\op{\psi_2}{\psi_2}+\op{\psi_6}{\psi_6}-3\op{\psi_7}{\psi_7}+3\op{\psi_8}{\psi_8}.
    \]
    So we do not necessarily put $\ket{0,0}$ into $B_3$,
    nor necessarily put $\op{0,0}{0,0}$ into $\theta_3$,
    since $\op{0,0}{0,0}$ is in $\spn(\theta_3)$.
    Overall, the closure of $B_3$ under the operators $\FF_{\alpha_{1,1}}$ and $\FF_{\alpha_{2,1}}$
    is shown as in Table~\ref{tab:closure},
    implying that all linear combinations of the eight outer products $\op{\psi_i}{\psi_i}$
    under the operators $\FF_{\alpha_{1,1}}$ and $\FF_{\alpha_{2,1}}$
    are also in $\spn(\theta_3)$.
    \end{enumerate}
    Thus the least fixedpoint of the ascending chain occurs,
    which yields the II-reachable space
    $\Upsilon(\Pro_2,\rho_0) = \spn(\{\op{\psi}{\psi}: \ket{\psi} \in B_4\})$.
    
    \begin{table}[htp]
    \renewcommand\arraystretch{1.5}
    \centering
    \caption{The closure of $B_3$ under the operators $\FF_{\alpha_{1,1}}$ and $\FF_{\alpha_{2,1}}$}\label{tab:closure}
    \begin{tabular}{l|l}
        \cline{1-2}
        $\FF_{\alpha_{1,1}} \ket{\psi_1} =\ket{-,0} =\ket{\psi_2}$
        & $\FF_{\alpha_{2,1}} \ket{\psi_1} =\ket{0,-} =\ket{\psi_3}$ \\
        \cline{1-2}
                
        $\FF_{\alpha_{1,1}} \ket{\psi_2} =\ket{1,1} =\ket{\psi_1}$
        & $\FF_{\alpha_{2,1}} \ket{\psi_2} =-\ket{-,+} =-\ket{\psi_4}$ \\
        \cline{1-2}
                
        $\FF_{\alpha_{1,1}} \ket{\psi_3}/\|\FF_{\alpha_{1,1}} \ket{\psi_3}\| =\ket{+,1} =\ket{\psi_5}$
        & $\FF_{\alpha_{2,1}} \ket{\psi_3}/\|\FF_{\alpha_{2,1}} \ket{\psi_3}\| =\ket{1,+} =\ket{\psi_6}$ \\
        \cline{1-2}
                
        $\FF_{\alpha_{1,1}} \ket{\psi_4}/\|\FF_{\alpha_{1,1} \ket{\psi_4}\|}
        =\frac{\sqrt{2}\ket{1,1}-\ket{-,0}}{\sqrt{3}} =-\ket{\psi_7}$
        & $\FF_{\alpha_{2,1}} \ket{\psi_4}/\|\FF_{\alpha_{2,1}} \ket{\psi_4}\|
        =\frac{\ket{1,+}-\sqrt{2}\ket{0,0}}{\sqrt{3}} =-\ket{\psi_8}$ \\
        \cline{1-2}
                
        $\FF_{\alpha_{1,1}} \ket{\psi_5}/\|\FF_{\alpha_{1,1}} \ket{\psi_5}\|
        =\ket{-,0} =\ket{\psi_2}$
        & $\FF_{\alpha_{2,1}} \ket{\psi_5}/\|\FF_{\alpha_{2,1}} \ket{\psi_5}\|
        =\ket{0,-} =\ket{\psi_3}$ \\
        \cline{1-2}
                
        $\FF_{\alpha_{1,1}} \ket{\psi_6} =\ket{-,+} =\ket{\psi_4}$
        & $\FF_{\alpha_{2,1}} \ket{\psi_6} =\ket{0,0}$ \\
        \cline{1-2}
                
        $\FF_{\alpha_{1,1}} \ket{\psi_7}/\|\FF_{\alpha_{1,1}} \ket{\psi_7}\|
        =\frac{\ket{1,1}-\sqrt{2}\ket{-,0}}{\sqrt{3}} =\ket{\varphi_{7,1}}$
        & $\op{\varphi_{7,1}}{\varphi_{7,1}}
        =-\tfrac{1}{3}\op{\psi_1}{\psi_1}+\tfrac{1}{3}\op{\psi_2}{\psi_2}+\op{\psi_7}{\psi_7}$ \\
        \cline{1-2}
                
        $\FF_{\alpha_{2,1}} \ket{\psi_7}/\|\FF_{\alpha_{2,1}} \ket{\psi_7}\|
        = \frac{-\ket{-,+} - \sqrt{2}\ket{0,-}}{\sqrt{3}} =\ket{\varphi_{7,2}}$
        & $\op{\varphi_{7,2}}{\varphi_{7,2}}
        =\tfrac{1}{3}\op{\psi_3}{\psi_3}-\tfrac{1}{3}\op{\psi_4}{\psi_4}+\op{\psi_8}{\psi_8}$ \\
        \cline{1-2}
                
        $\FF_{\alpha_{1,1}} \ket{\psi_8}/\|\FF_{\alpha_{1,1}} \ket{\psi_8}\|
        =\frac{\sqrt{2}\ket{+,1}-\ket{-,+}}{\sqrt{3}} =\ket{\varphi_{8,1}}$
        & $\op{\varphi_{8,1}}{\varphi_{8,1}}
        =-\tfrac{1}{3}\op{\psi_4}{\psi_4}+\tfrac{1}{3}\op{\psi_5}{\psi_5}+\op{\psi_7}{\psi_7}$ \\
        \cline{1-2}
                
        $\FF_{\alpha_{2,1}} \ket{\psi_8}/\|\FF_{\alpha_{2,1}} \ket{\psi_8}\|
        =\frac{\sqrt{2}\ket{1,+}-\ket{0,0}}{\sqrt{3}} =\ket{\varphi_{8,2}}$
        & $\op{\varphi_{8,2}}{\varphi_{8,2}}
        =-\tfrac{1}{3}\op{\psi_1}{\psi_1}+\tfrac{1}{3}\op{\psi_2}{\psi_2}+\op{\psi_7}{\psi_7}$\\
        \cline{1-2}
        \end{tabular}
    \end{table}

    It is not hard to see that $\Phi(\Pro_2,\rho_0)$ contains all pure states in $\h_{\mathbf{VAR}}$
    while $\Upsilon(\Pro_2,\rho_0)$ has dimension $8$ that is less than $\dim(\her(\h_{\mathbf{VAR}}))=16$.
    Hence there are many pure states in $\Phi(\Pro_2,\rho_0)$
    whose density operators are not in $\Upsilon(\Pro_2,\rho_0)$,
    e.\,g.\@,
    the pure state $\op{\varphi}{\varphi}$ with $\ket{\varphi}=\tfrac{1}{2}(\ket{0,0}+\ket{0,1}+\ket{1,0}+\ket{1,1})$
    in $\Phi(\Pro_2,\rho_0)$ cannot be linearly expressed by the basis of $\Upsilon(\Pro_2,\rho_0)$.
    The II-reachable space $\Upsilon(\Pro_2,\rho_0)$ gives
    an over-approximation of $\Psi(\Pro_2,\rho_0)$ more precise than $\Phi(\Pro_2,\rho_0)$ in this example. \qed 
\end{example}

\begin{remark}
    The ascending chain $\Theta_0 \subseteq \Theta_1 \subseteq \Theta_2 \subseteq \cdots$
    as defined in~\eqref{eq:chain1}
    is finer than the ascending chain
    $\B_0 \subseteq \B_1 \subseteq \B_2 \subseteq \cdots$ as defined in~\eqref{eq:chain}
    in such a sense:
    \begin{itemize}
    \item For each linear subspace $\Theta_i \subseteq \her(\h)$,
    there is a unique index $j$ such that
    $\Theta_i \subseteq \her(\B_j)$ and $\Theta_i \not\subseteq \her(\B_{j-1})$.
    \item For each linear subspace $\B_j \subseteq \h$,
    there is one index $i$ or more such that
    $\Theta_i \subseteq \her(\B_j)$ and $\Theta_i \not\subseteq \her(\B_{j-1})$.
    \item By the construction in Algorithm~\ref{Algo:FRS}
    that the basis elements in $\Theta_i$ are pure states,
    all ensembles of elements in $\Theta_i$ are elements of $\den(\B_j)$.
    \end{itemize}
    In a nutshell, each increment in $\B_j$ corresponds to one or more increments in $\Theta_i$. \qed
\end{remark}

By Algorithms~\ref{Algo:RS} and~\ref{Algo:FRS}, we obtain the result:
\begin{theorem}
    Both I-reachable space and II-reachable space are computable in polynomial time.
\end{theorem}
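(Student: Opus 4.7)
The plan is to prove the theorem by directly invoking Algorithms~\ref{Algo:RS} and~\ref{Algo:FRS} together with their termination bounds from Lemmas~\ref{lem:lfp} and~\ref{FRSC}, and then tallying the arithmetic cost of each inner operation. Since the paper has already established correctness of both procedures (by the Knaster--Tarski fixedpoint argument), what remains is purely to certify the polynomial running time in $m$ and $d$.

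For the I-reachable space, I would first observe that Lemma~\ref{lem:lfp} caps the outer loop of Algorithm~\ref{Algo:RS} at $d-1$ iterations, since the ascending chain of subspaces in $\h$ strictly grows until stabilization and $\dim(\h)=d$. Within each iteration, one expands at most $d$ newly-added basis vectors under the average super-operator $\bar{\F}$, whose Kraus representation has at most $l=m\cdot d^2$ operators (by adding up to $d^2$ Kraus operators for each of the $m$ super-operators $\E_j$). Each $\FF_j\ket{\psi}$ is a $d\times d$ times $d$ matrix-vector product in $\BigO(d^2)$, and the Gram--Schmidt step to extract a complement in $\BigO(d^2)$. Multiplying the factors $d \cdot d \cdot m \cdot d^2 \cdot d^2$ yields the announced $\BigO(m\cdot d^5)$ bound.

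For the II-reachable space, I would run the analogous bookkeeping on Algorithm~\ref{Algo:FRS}. Lemma~\ref{FRSC} caps the outer loop by $d^2-1$ since the chain lives in $\her(\h)$ of dimension $d^2$. Each newly-added pure state $\ket{\psi}$ in $B_{i-1}\setminus B_{i-2}$ is hit by up to $m\cdot d^2$ operators $\FF_{\alpha_{j,k}}$ of the operator-level program $\hat{\Pro}$, each producing a candidate pure state in $\BigO(d^2)$. The one extra subtlety over the I-case is \emph{checking linear independence at the operator level}: to decide whether a new outer product $\op{\psi'}{\psi'}$ is independent of the current basis $\theta_i$, one orthonormalizes against a flattened basis of $\her(\h)$ via the standard Gram--Schmidt procedure on $d^2$-dimensional vectors, which costs $\BigO(d^4)$ per check. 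Combining gives $d^2 \cdot d^2 \cdot m \cdot d^2 \cdot d^4 = \BigO(m\cdot d^8)$ for the full run.

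The main mild obstacle is making the linear-independence test in Algorithm~\ref{Algo:FRS} rigorous and polynomial: naively, one might worry that testing independence of density matrices in $\her(\h)$ requires expensive symbolic manipulation, especially since the entries involved are algebraic numbers. However, under the paper's convention of unit-cost arithmetic over algebraic numbers, this reduces to a standard rank test on a $d^2 \times |\theta_i|$ matrix, which fits within the $\BigO(d^4)$ budget above. With this cost accounted for, both running times are polynomial in $m$ and $d$, proving the theorem.
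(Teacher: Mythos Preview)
Your approach is exactly the paper's: the theorem is stated as a direct consequence of Algorithms~\ref{Algo:RS} and~\ref{Algo:FRS} together with the complexity paragraphs that follow them, and you have reproduced that accounting essentially verbatim.

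One minor slip worth fixing: in both tallies you double-count by multiplying the number of outer-loop iterations by the number of newly-added basis vectors \emph{per} iteration, whereas the inner loop ranges only over $B_{i-1}\setminus B_{i-2}$, so the \emph{total} number of inner-loop bodies across all iterations is already bounded by $d$ (resp.\ $d^2$). As written, your products $d\cdot d\cdot m\cdot d^2\cdot d^2$ and $d^2\cdot d^2\cdot m\cdot d^2\cdot d^4$ evaluate to $m\cdot d^6$ and $m\cdot d^{10}$, not the claimed $m\cdot d^5$ and $m\cdot d^8$. This does not harm the theorem, since the overcounted bounds are still polynomial; to match the paper's exponents, drop one redundant factor of $d$ (resp.\ $d^2$).
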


\section{Computing the Divergent Set}\label{S5}
In this section, we show how the set of \emph{divergent} states can be computed
from which a given nondeterministic quantum program terminates with probability zero under some scheduler,
and synthesize the corresponding divergence schedulers.
The procedure turns out to be in exponential time,
which as far as we know is reported for the first time.

\begin{definition}\label{PD}
	Given a nondeterministic quantum program $\Pro$ with the quantum state space $\h$,
	\begin{itemize}
		\item the set $D(\Pro)$ of divergent states is
		$\{\rho\in\den(\h): \lim_{i \to \infty}
		\tr(\M_\yes\F_{\sigma\uparrow i}(\rho))=1 \wedge \sigma\in\Sigma^\omega\}$;
		\item the set $PD(\Pro)$ of pure divergent states is
		$\{\ket{\psi}\in \h:\lim_{i \to \infty}
		\tr(\M_\yes\F_{\sigma\uparrow i}(\op{\psi}{\psi}))=1 \wedge \sigma\in\Sigma^\omega\}$.
	\end{itemize}
	The parameters $\Pro$ in $D(\Pro)$ and $PD(\Pro)$ can be omitted
	if they are clear from the context.
\end{definition}

The divergence requires that under some infinite scheduler $\sigma$,
all eigenstates $\ket{\lambda}$ of $\rho$ are terminating with probability zero,
i.\,e.\@ $\bigwedge_{i=0}^\infty \tr(\M_\no\F_{\sigma\uparrow i}(\op{\lambda}{\lambda}))=0$.
It is not hard to see that
an element $\rho$ in the divergent set $D$ is a probabilistic ensemble $\{(\ket{\psi_k},p_k):k=1,2,\ldots \}$
of some elements $\ket{\psi_k}$ in the pure divergent set $PD$,
and conversely an element $\ket{\psi}$ in $PD$ is
a pure state $\op{\psi}{\psi}$ in $D$, i.\,e.\@ $PD=\{\ket{\psi}\in\h: \op{\psi}{\psi} \in D(\Pro)\}$.
Once the pure divergent set $PD$ is determined,
the divergent set $D$ is also determined.
We will focus on how to compute the pure divergent set $PD$ afterwards.

For convenience, we introduce some auxiliary notions and notations:
\begin{itemize}
	\item $PD^\sigma$ denotes the set of all pure divergent states $\ket{\psi}$
	under the infinite scheduler $\sigma$,
	i.\,e.\@ $$PD^\sigma=\left\{\ket{\psi}\in\h:
	\lim_{i\to \infty}\tr(\M_\yes\F_{\sigma\uparrow i}(\op{\psi}{\psi}))=1\right\};$$
	\item $PD_i^\sigma$ denotes the set of all pure states $\ket{\psi}$ that are terminating with probability zero
	under the $i$-fragment of the infinite scheduler $\sigma$,
	i.\,e.\@ $$PD_i^\sigma=PD^{\sigma\uparrow i}=\{\ket{\psi}\in\h:
	\tr(\M_\yes\F_{\sigma\uparrow i}(\op{\psi}{\psi}))=1\};$$
	\item $PD_i$ denotes the set of all pure states $\ket{\psi}$ that are terminating with probability zero
	under the $i$-fragment of some infinite scheduler $\sigma$,
	i.\,e.\@ $PD_i =\bigcup_{\sigma\in\Sigma^\omega} PD_i^\sigma
	=\bigcup_{\varsigma \in \Sigma^i} PD^\varsigma$.
\end{itemize}
It is not hard to see:
\begin{itemize}
	\item for any infinite scheduler $\sigma$ and any integer $i$,
	$PD_i^\sigma \supseteq PD_{i+1}^\sigma$,
	as the latter requires to be terminating with probability zero for one more step,
	i.\,e.\@ $\tr(\M_\no\F_{\sigma\uparrow (i+1)}(\op{\psi}{\psi}))=0$;
	\item for any infinite scheduler $\sigma$,
	$PD^\sigma =\bigcap_{i=0}^\infty PD_i^\sigma =\lim_{i \to \infty} PD_i^\sigma$;
	\item for any integer $i$,
	$PD_i = \bigcup_{\sigma \in \Sigma^\omega} PD_i^\sigma$ amounts to a finite union of $PD_i^\sigma$,
	as there are only finitely many distinct $i$-fragments $\varsigma\in\Sigma^i$
	of all infinite schedulers $\sigma\in\Sigma^\omega$;
	\item $PD =\bigcap_{i=0}^\infty PD_i =\lim_{i \to \infty} PD_i$.
\end{itemize}

From the definition $PD_i=\bigcup_{\varsigma \in \Sigma^i} PD^\varsigma$,
the derivation of those pure divergent sets $PD_i$ can be described by an infinite $m$-branching tree,
named \emph{derivation tree} (see Fig.~\ref{fig:tree}).
Its nodes are all pure divergent sets $PD^\varsigma$,
organized by the prefix relationship on strings $\varsigma \in \Sigma^*$.
Particularly,
\begin{itemize}
\item the root of the tree is $PD^\epsilon=PD_0$
	that is the pure divergent set under the empty scheduler $\epsilon$;
\item each intermediate node $PD^\varsigma$
	has $m$ children $PD^{\varsigma\cdot\alpha}$
	that are the sets of pure divergent states derived by one more step $\alpha\in \Sigma$.
\end{itemize}
Thus, the union of $PD^\varsigma$ in the $i$th layer is actually $PD_i$.
By the nice property $PD^{\sigma \uparrow i} \supseteq PD^{\sigma \uparrow (i+1)}$,
we have that an intermediate node $PD^\varsigma$ is
a common superset of its $m$ children $PD^{\varsigma \cdot \alpha}$ with $\alpha$ ranging over $\Sigma$.
The derivation tree is said \emph{stabilized} at the $\ell$th layers
if $PD_k = PD_\ell$ holds for all $k>\ell$.
So the pure divergent set $PD$ could be collected at that layer.

\begin{figure}[htp]
	\centering
	\begin{tikzpicture}[grow = down,
	level 1/.style = {sibling distance = 26mm},
	level distance = 20mm,
	edge from parent/.style = {draw, -latex},
	child anchor=north,
	edge from parent macro=\myedgefromparent,
	sloped,
	every node/.style = {font=\small}]
	\def\myedgefromparent#1#2{
		[style=edge from parent,#1]
		(\tikzparentnode\tikzparentanchor) to #2 (\tikzchildnode\tikzchildanchor)
	}
	\node (a) {$PD^\epsilon$}
	child [grow=down] {node (pdone) [left=50mm] {$PD^{\alpha_1}$}
		child [grow=down] {node (pdoo) [left = 20mm]  {$PD^{\alpha_1\alpha_1}$}
			child[grow=down,level distance = 10mm] {node (xx)  {$\vdots$}}
			child[grow=down,level distance = 10mm]{node (aa) [base left = 6mm of xx] {$\vdots$}}			
		}
		child [grow=down] {node (pdpd) [left=10mm]  {$\cdots$} edge from parent [draw=none]}
		child [grow=down] {node (pdom) {$PD^{\alpha_1\alpha_m}$}
			child[grow=down,level distance = 10mm] {node (xy)   {$\vdots$}
			child[grow=down] {node (maxwidleft) [above=2mm] {} edge from parent [draw=none]}
			}
			child[grow=down,level distance = 10mm]{node (ab) [base left = 6mm of xy] {$\vdots$}}
		}
	}
	child[grow=down] {node (pd) [left=22mm]  {$\cdots$} edge from parent [draw=none]
		child[grow=down] {node [left=12mm] (cc) {$\cdots$} edge from parent [draw=none]
		} 
	}
	child[grow=down] {node  (pdm){$PD^{\alpha_m}$}
		child [grow=down]{node (pdmo) [left=20mm] {$PD^{\alpha_m\alpha_1}$}
			child[grow=down,level distance = 10mm] {node (xz)  {$\vdots$}}
			child[grow=down,level distance = 10mm]{node (ac) [base left = 6mm of xz] {$\vdots$}}
		}
		child [grow=down]{node [left=10mm] (cdd) {$\cdots$} edge from parent [draw=none]}
		child [grow=down] {node (pdmm) {$PD^{\alpha_m\alpha_m}$}
			child[grow=down] {node (maxwidright) [above=2mm] {} edge from parent [draw=none]}
			child[grow=down,level distance = 10mm] {node (yz)   {$\vdots$}}
			child[grow=down,level distance = 10mm]{node (bc) [base left = 6mm of yz] {$\vdots$}}
		}
		child [grow=right] {node [left=10mm] {$=$} edge from parent [draw=none]
			child [grow=up] {node {$=$} edge from parent [draw=none]}
			child [grow=down] {node {$=$} edge from parent [draw=none]}
			child [grow=right] {node[left=11mm] (PD1)  {$PD_1$} edge from parent [draw=none]
				child [grow=up] {node (PD0) {$PD_0$} edge from parent [draw=none]
					child [grow=right] {node [left=14mm]{} edge from parent [draw=none]
						child [grow=down] {node [below=40mm]{}
							edge from parent node [above]{\textbf{Stabilization within depth $d$}}
						}
					}
				}
				child [grow=down] {node (PD2) {$PD_2$} edge from parent [draw=none]
					child [grow=down, level distance=10mm] {node (cdots) {$\vdots$} edge from parent [draw=none]}
				}
			}
		}
	};
	\path (pdone) -- (pd) node [midway] {$\cup$};
	\path (pd) -- (pdm) node [midway] {$\cup$};
	\path (pdoo) -- (pdpd) node [midway] {$\cup$};
	\path (pdpd) -- (pdom) node [midway] {$\cup$};
	\path (pdom) -- (cc) node [midway] {$\cup$};
	\path (cc) -- (pdmo) node [midway] {$\cup$};
	\path (pdmo) -- (cdd) node [midway] {$\cup$};
	\path (cdd) -- (pdmm) node [midway] {$\cup$};
	\path (xx) -- (aa) node [midway] {$\cdots$};
	\path (aa) -- (xy) node [midway] {$\cdots$};
	\path (xy) -- (ab) node [midway] {$\cdots$};
	\path (ab) -- (xz) node [midway] {$\cdots$};
	\path (xz) -- (ac) node [midway] {$\cdots$};
	\path (ac) -- (yz) node [midway] {$\cdots$};
	\path (yz) -- (bc) node [midway] {$\cdots$};
	\path (PD0) -- (PD1) node [midway] {$\supseteq$};
	\path (PD1) -- (PD2) node [midway] {$\supseteq$};
	\path (PD2) -- (cdots) node [midway,right=-1mm] {$\supseteq$};
	\draw[-latex] (maxwidright) -- node[below]{\textbf{Stabilization within width $m^d$}} (maxwidleft);
	\end{tikzpicture}	
	\caption{Derivation of $PD_i$ by a tree construction}\label{fig:tree}
  \Description[<short description>]{<long description>}
\end{figure}

Besides the definition of $PD_i$,
there is an alternative approach to calculate $PD_i$.
Before stating it, we need to introduce the notion of \emph{pure divergent space} $PDS^\varsigma$ 
that is the closure of $PD^\varsigma$ under scalar multiplication,
i.\,e.\@ $PDS^\varsigma=\{c\ket{\psi}: \ket{\psi}\in PD^\varsigma \wedge c\in\mathbb{C}\}$.
The notions $PDS_i$ and $PDS$ are defined similarly.
For a finite scheduler $\varsigma$,
$PDS^\varsigma$ is a subspace of $\h$~\cite[Lemma~4]{LYY14}\footnote{%
To see why it is the case, we note that:
$PDS^\varsigma$ is equivalently defined as
$\{ c\ket{\psi}: \bigwedge_{i=0}^{|\varsigma|} \tr(\M_\no\F_{\varsigma\uparrow i}(\op{\psi}{\psi}))=0
\wedge \ket{\psi}\in \h \wedge c\in\mathbb{C}\}$,
where $\tr(\M_\no\F_{\varsigma\uparrow i}(\op{\psi}{\psi}))=0$ holds if and only if
$\supp(\M_\no)$ is orthogonal to $\supp(\F_{\varsigma\uparrow i}(\op{\psi}{\psi}))$.
By the inclusion~\eqref{eq:support},
if $\supp(\M_\no)$ is orthogonal to
each $\supp(\F_{\varsigma\uparrow i}(\op{\psi_k}{\psi_k}))$ ($k=1,2,\ldots,K$),
it is also orthogonal to $\supp(\F_{\varsigma\uparrow i}(\op{\psi}{\psi}))$
for any $\ket{\psi} \in \spn(\{\ket{\psi_k}:k=1,2,\ldots,K\})$,
i.\,e.\@ $\tr(\M_\no\F_{\varsigma\uparrow i}(\op{\psi}{\psi}))=0$,
yielding the linearity $\ket{\psi} \in PD^\varsigma \subset PDS^\varsigma$.},
$PD^\varsigma$ is the unit sphere of $PDS^\varsigma$,
and they can be mutually determined.
Particularly, if $PDS^\varsigma$ is the null space $\{0\}$,
there is no element $\ket{\psi} \in PD^\varsigma$,
i.\,e.\@, $PD^\varsigma$ is the empty set;
vice versa.
It entails
\[
	PD^\varsigma=\emptyset \Longleftrightarrow PDS^\varsigma=\{0\}
	\Longleftrightarrow \dim(PDS^\varsigma)=0.
\]

\begin{proposition}\label{prop:pd}
	The pure divergent sets $PD_i$ can be calculated inductively as
	\[
		PD_i= \begin{cases}
			\{\ket{\psi}\in\h: \M_\no\ket{\psi}=0\} & \textup{if }i=0, \\
			\bigcup_{\alpha \in \Sigma} \{\ket{\psi}\in PD_0: \supp(\F_\alpha(\op{\psi}{\psi})) \subseteq PDS_{i-1}\}
			& \textup{if }i>0.
		\end{cases}
	\]
\end{proposition}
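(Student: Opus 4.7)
The plan is to verify both cases by unfolding the definitions, bridging partial density operator statements to pure state statements via spectral decomposition.

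For the base case $i=0$, unfolding gives $PD_0=PD^\epsilon=\{\ket{\psi}\in\h:\tr(\M_\no\op{\psi}{\psi})=0\}$, and since $\M_\no$ is a projector, $\bra{\psi}\M_\no\ket{\psi}=\|\M_\no\ket{\psi}\|^2$, so this is equivalent to $\M_\no\ket{\psi}=0$ as claimed.

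For the inductive step $i\ge 1$, I would first handle the forward direction. Given $\ket{\psi}\in PD_i$, pick any $\alpha\in\Sigma$ and $\varsigma'\in\Sigma^{i-1}$ such that $\ket{\psi}\in PD^{\alpha\varsigma'}$. The $j=0$ constraint forces $\ket{\psi}\in PD_0$, so $\M_\yes\ket{\psi}=\ket{\psi}$ and hence $\F_\alpha(\op{\psi}{\psi})=\E_\alpha(\op{\psi}{\psi})$. Taking a spectral decomposition $\F_\alpha(\op{\psi}{\psi})=\sum_k\lambda_k\op{\lambda_k}{\lambda_k}$ with $\lambda_k>0$, the constraints $1\le j\le i$ unwrap to $\tr(\M_\no\F_{\varsigma'\uparrow(j-1)}(\F_\alpha(\op{\psi}{\psi})))=0$; linearity combined with the positivity of each summand $\lambda_k\tr(\M_\no\F_{\varsigma'\uparrow(j-1)}(\op{\lambda_k}{\lambda_k}))$ forces every eigenstate $\ket{\lambda_k}$ to lie in $PD^{\varsigma'}$, yielding $\supp(\F_\alpha(\op{\psi}{\psi}))\subseteq PDS^{\varsigma'}\subseteq PDS_{i-1}$.

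For the reverse inclusion, suppose $\ket{\psi}\in PD_0$ and $\supp(\F_\alpha(\op{\psi}{\psi}))\subseteq PDS_{i-1}=\bigcup_{\varsigma'\in\Sigma^{i-1}} PDS^{\varsigma'}$. The degenerate case $\F_\alpha(\op{\psi}{\psi})=0$ is trivial, since then $\ket{\psi}\in PD^{\alpha\varsigma'}$ for every $\varsigma'\in\Sigma^{i-1}$. Otherwise, $\supp(\F_\alpha(\op{\psi}{\psi}))$ is a nontrivial subspace of $\h$ over the infinite field $\mathbb{C}$ sitting inside a \emph{finite} union of subspaces, so by the classical fact that no vector space is a finite union of its proper subspaces the support must already be contained in some single component $PDS^{\varsigma^*}$. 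By the linearity argument in the footnote ensuring that unit vectors of $PDS^{\varsigma^*}$ belong to $PD^{\varsigma^*}$, each eigenstate $\ket{\lambda_k}$ lies in $PD^{\varsigma^*}$; reversing the linearity computation above then produces $\ket{\psi}\in PD^{\alpha\varsigma^*}\subseteq PD_i$.

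The main obstacle I expect is the passage from containment in the union $\bigcup_{\varsigma'} PDS^{\varsigma'}$ to containment in a single component $PDS^{\varsigma^*}$: one must first recognise that $PDS_{i-1}$ is a union of subspaces rather than a subspace itself, and then invoke the infinite-field lemma, since without concentrating into one component the eigenstates of $\F_\alpha(\op{\psi}{\psi})$ could a priori belong to \emph{different} $PD^{\varsigma'}$s and no common scheduler suffix $\varsigma^*$ would exist. Once this step is secured, the remainder is standard linearity and spectral-decomposition bookkeeping.
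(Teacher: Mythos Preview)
Your proof is correct and follows essentially the same approach as the paper: both argue that $PD^{\alpha\cdot\varsigma}=\{\ket{\psi}\in PD_0:\supp(\F_\alpha(\op{\psi}{\psi}))\subseteq PDS^{\varsigma}\}$ and then collapse the double union, with the crucial passage from $\supp(\F_\alpha(\op{\psi}{\psi}))\subseteq PDS_{i-1}$ to containment in a single $PDS^{\varsigma^*}$ being exactly the paper's remark that ``the linear subspace \ldots\ is covered by $PDS_{i-1}$ if and only if it is covered by some subspace $PDS^\varsigma$ in the union.'' Your version is simply more explicit, spelling out the spectral-decomposition bookkeeping and naming the underlying infinite-field lemma that the paper invokes tacitly.
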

\begin{proof}
If $i=0$, we have $PD_0=PD^\epsilon=\{\ket{\psi}\in\h: \M_\no\ket{\psi}=0\}$ plainly.
Otherwise, for a sphere $PD^\varsigma$ in the union $PD_{i-1}$ and an action $\alpha \in \Sigma$,
we compute:
\begin{equation}\label{eq:pd0}
	\begin{aligned}
	PD^{\alpha \cdot \varsigma}
	& = \{\ket{\psi}\in PD_0: \F_\alpha(\op{\psi}{\psi}) \in \den(PDS^\varsigma) \} \\
	& = \{\ket{\psi}\in PD_0: \supp(\F_\alpha(\op{\psi}{\psi})) \subseteq PDS^\varsigma \}.
	\end{aligned}
\end{equation}
We collect all spheres $PD^{\alpha \cdot \varsigma}$
with $\alpha$ ranging over $\Sigma$ and $\varsigma$ ranging over $\Sigma^{i-1}$ as $PD_i$,
i.\,e.\@,
\[
	\begin{aligned}
	PD_i &= \bigcup_{\alpha \in \Sigma} \bigcup_{\varsigma\in\Sigma^{i-1}}
	\{\ket{\psi}\in PD_0: \supp(\F_\alpha(\op{\psi}{\psi})) \subseteq PDS^\varsigma \} \\
	&= \bigcup_{\alpha \in \Sigma}
	\{\ket{\psi}\in PD_0: \supp(\F_\alpha(\op{\psi}{\psi})) \subseteq PDS_{i-1}\},
	\end{aligned}
\]
where the second equality comes from the fact that
the linear subspace $\supp(\F_\alpha(\op{\psi}{\psi}))$ is covered by $PDS_{i-1}$ if and only if
it is covered by some subspace $PDS^\varsigma$ in the union $PDS_{i-1}$.
\end{proof}

As an immediate corollary,
from any subtree rooted at $PD^\varsigma$ with $\varsigma \in \Sigma^*$,
we can get
\[
	PD^\varsigma \cap PD_{|\varsigma|+i+1} = \bigcup_{\alpha \in \Sigma}
	\{\ket{\psi}\in PD^\varsigma:
	\supp(\F_\alpha(\op{\psi}{\psi})) \subseteq (PDS^\varsigma \cap PDS_{|\varsigma|+i})\},
\]
where $PD^\varsigma \cap PD_{|\varsigma|+i}$ denotes
the union of $PD^{\varsigma \cdot \varsigma'}$ over $\varsigma'\in\Sigma^i$.

By the alternative approach,
the set $PD_i$ is calculated from the prior set $PD_{i-1}$,
which will be used to establish the stabilization of derivation tree.
That is, an upper bound is given below for the occurrence of the least fixedpoint
in the descending chain of finite unions $PD_i$,
one-to-one corresponding to $PDS_i$.
\begin{lemma}
	Let $PDS_0 \supseteq PDS_1 \supseteq PDS_2 \supseteq \cdots$ be a descending chain of
	finite unions of nonempty subspaces $PDS_i \subseteq \h$,
	as calculated in Proposition~\ref{prop:pd}.
	Then there is an integer $\ell \le d$ such that $PDS_k=PDS_\ell$ holds for all $k>\ell$.
\end{lemma}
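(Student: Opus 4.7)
The plan is to reduce stabilization of the union chain to a descending chain of subspaces of $\h$. Set $V_i := \spn(PDS_i)$; monotonicity of $\spn$ together with $PDS_{i+1} \subseteq PDS_i$ yields a descending chain $V_0 \supseteq V_1 \supseteq V_2 \supseteq \cdots$ in the complete lattice of subspaces of $\h$. Since $\dim(V_i) \in \{0, 1, \ldots, d\}$ and each strict inclusion drops the dimension by at least one, Knaster--Tarski (applied exactly as in the proof of Lemma~\ref{lem:lfp}) forces this chain of spans to stabilize at some index $\le d$.

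To lift span stabilization to union stabilization, I would invoke Proposition~\ref{prop:pd}: since $PDS_{i+1}$ is determined by $PDS_i$ via the monotone operator $T(\mathbb{X}) := \bigcup_{\alpha \in \Sigma}\{\ket{\psi} \in PDS_0 : \supp(\F_\alpha(\op{\psi}{\psi})) \subseteq \mathbb{X}\}$, the very first occurrence of $PDS_i = PDS_{i+1}$ automatically propagates to $PDS_k = PDS_i$ for all $k > i$ by a straightforward induction on $k$. It therefore suffices to prove that each strict decrease $PDS_i \supsetneq PDS_{i+1}$ already forces $V_i \supsetneq V_{i+1}$ in the spans; combined with the at-most-$d$ strict drops of the span chain, this pins the stabilization index of the union chain down to $\ell \le d$.

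The main obstacle is precisely this translation from a strict decrease of the union to a strict decrease of the span. My strategy is to exploit the classical fact that over the infinite field $\mathbb{C}$ a subspace is never a finite union of proper subspaces --- the same ingredient already used in the footnote to Proposition~\ref{prop:pd} to collapse $\supp(\F_\alpha(\op{\psi}{\psi})) \subseteq \bigcup_j V_j$ into containment in one constituent $V_j$. Assuming $V_i = V_{i+1}$ yet $PDS_i \supsetneq PDS_{i+1}$ for contradiction, I would pick a removed state $\ket{\psi} \in PDS_i \setminus PDS_{i+1}$, trace it back through $T = \bigcup_\alpha f_\alpha$, and apply the infinite-field fact to exhibit a constituent subspace of $PDS_i$ whose span exceeds $V_{i+1}$, contradicting the span equality; thus every strict union drop must be witnessed by a strict span drop, yielding $\ell \le d$.
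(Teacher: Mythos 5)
Your reduction breaks at precisely the step you identify as the main obstacle, and it cannot be repaired in the form you propose: a strict decrease of the union chain does \emph{not} force a strict decrease of the span chain. The paper's own running example (Example~\ref{ex3}) is a counterexample. There $PDS_0=PDS^\epsilon=\spn(\{\ket{0,0},\ket{1,0},\ket{1,1}\})$ is a single $3$-dimensional subspace, while $PDS_1=PDS^{\alpha_1}\cup PDS^{\alpha_2}$ with $PDS^{\alpha_1}=\spn(\{\ket{1,1},\ket{-,0}\})$ and $PDS^{\alpha_2}=\spn(\{\ket{0,0},\ket{1,+}\})$. Both constituents are proper subspaces of $PDS_0$, so the inclusion $PDS_1\subset PDS_0$ is strict (over $\mathbb{C}$ a subspace never equals a finite union of proper subspaces), and yet $\spn(PDS_1)$ contains $\ket{0,0}$, $\ket{1,1}$ and $\ket{1,0}$, hence $V_1=\spn(PDS_1)=\spn(PDS_0)=V_0$. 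The infinite-field fact you invoke only lets you push a \emph{single} subspace contained in a finite union into one constituent; it cannot convert ``the union got strictly smaller'' into ``some vector left the span'', because a union of several strictly smaller subspaces can still span the same space. As a result, your dimension count on $V_i=\spn(PDS_i)$ gives no bound whatsoever on the number of strict drops of the union chain, and $\ell\le d$ does not follow. (The other half of your plan is fine: by Proposition~\ref{prop:pd}, the first equality $PDS_i=PDS_{i+1}$ does propagate to all later indices, and this is exactly how the paper closes its own argument.)

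The paper avoids the global span altogether and instead inducts on $\dim(PDS_0)$ using the tree structure of the derivation: writing $PDS_1=\bigcup_{\alpha\in\Sigma}PDS^\alpha$, it intersects the tail of the chain with each constituent, $Z_{\alpha,i}=PDS^\alpha\cap PDS_{1+i}$, obtaining $m$ chains of the same shape whose initial terms are \emph{single} subspaces; the strictness of $PDS_0\supset PDS_1$ forces $\dim(PDS^\alpha)<\dim(PDS_0)$ for every $\alpha$, so the induction hypothesis applies branchwise and yields stabilization of each $Z_{\alpha,\cdot}$ by step $\dim(PDS^\alpha)$, whence $PDS_\ell=PDS_{\ell+1}$ with $\ell=1+\max_\alpha\dim(PDS^\alpha)\le d$. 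In other words, the right complexity measure is the per-branch dimension of the individual subspaces in the union, not the dimension of $\spn(PDS_i)$; if you want to salvage your approach you would have to track something like that branchwise measure, at which point you have essentially reconstructed the paper's induction.
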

\begin{proof}
	The proof improves that of~\cite[Lemma~6]{LYY14}
	by giving the explicit bound $d$.
	It is crucial in establishing the complexity of computing the pure divergent set,
	which is left as an open problem in~\cite[Subsection~7.3]{LYY14}.
	We complete the proof by an induction on the dimension $d_0$ of $PDS_0$.
\begin{itemize}
	\item Basically, when $d_0=0$, we have $PDS_0=\{0\}$.
	It is plainly the fixedpoint of the chain,
	implying the pure divergent set $PD$ is empty then.
	\item Inductively, when $d_0>0$,
	we assume $PDS_0 \supset PDS_1$,
	otherwise it would be much easier.
	Let $PDS_1=\bigcup_{\alpha\in\Sigma} PDS^\alpha$
	where $PDS^\alpha$ are subspaces in the union.
	Define $Z_{\alpha,i}=PDS^\alpha \cap PDS_{1+i}$ for $i \ge 0$.
	We have $PDS_{1+i}=\bigcup_{\alpha\in\Sigma} Z_{\alpha,i}$
	and the following $m$ descending chains:
	\[
		PDS^\alpha =Z_{\alpha,0} \supseteq Z_{\alpha,1} \supseteq Z_{\alpha,2} \supseteq \cdots
		\quad\textup{ for }\alpha \in \Sigma.
	\]
	Since $PDS_0$ is a single subspace, it follows $\dim(PDS^\alpha)<d_0$ by $PDS_0 \supset PDS_1$.
	By induction hypothesis,
	there is an $\ell_\alpha \le \dim(PDS^\alpha)$ in the respective descending chain,
	such that $Z_{\alpha,k}=Z_{\alpha,\ell_\alpha}$ holds for all $k > \ell_\alpha$.
	Thereby, we obtain that $PDS_\ell=PDS_{\ell+1}$ holds
	with $\ell=1+\max_{\alpha\in\Sigma} \dim(PDS^\alpha) \le d_0 \le d$.
	(Once $PDS_0 = PDS_1$, we directly reach this point.)
	We further claim that
	the least fixedpoint of the original descending chain occurs upon $PDS_\ell=PDS_{\ell+1}$, since
	\[
	\begin{aligned}
		PD_{\ell+2}
		& = \bigcup_{\alpha \in \Sigma}
		\{\ket{\psi}\in PD_0: \supp(\F_\alpha(\op{\psi}{\psi})) \subseteq PDS_{\ell+1}\} \\
		& = \bigcup_{\alpha \in \Sigma}
		\{\ket{\psi}\in PD_0: \supp(\F_\alpha(\op{\psi}{\psi})) \subseteq PDS_\ell\} \\
		& =PD_{\ell+1}=PD_\ell
	\end{aligned}
	\]
	and $PD_k=PD_\ell$ follows for all $k>\ell+2$ similarly. \qedhere
\end{itemize}
\end{proof}

The above lemma indicates that
the derivation tree can be stabilized within height $d$ and width $m^d$.
The procedure of computing the pure divergent set $PD$ is stated in Algorithm~\ref{Algo:PD}.
In detail, $\Sch_{i-1}$ in Line~\ref{ln:inner2} stores those finite schedulers $\varsigma$
corresponding to the nodes $PD^\varsigma$ to be derived.
We attempt to derive the node $PD^\varsigma$ in Line~\ref{ln:derive}, provided that
the condition $\bigcup_{\varsigma'\in \Sch'}PD^{\varsigma'}=PD^\varsigma$ in Line~\ref{ln:condition} is not met.
Otherwise, the derivation is unnecessary since the subtree rooted at $PD^\varsigma$ is stabilized then.
The output is a union of various $\Sch_i$ consisting of schedulers
that generate $PD^\varsigma$ stabilized at the $i$th layer.
The complexity of Algorithm~\ref{Algo:PD} is provided below.
\begin{algorithm}[ht]
	\caption{Computing the Pure Divergent Set}\label{Algo:PD}
	\begin{algorithmic}
		\Require a nondeterministic quantum program $\Pro=(\Sigma,\E,\{\M_\yes,\M_\no\})$
        over $\h$ with dimension $d$;
		\Ensure a set $\Sch$ of finite schedulers that generate the pure divergent set $PD$ of $\Pro$.
		\begin{algorithmic}[1]
			\State $\Sch_0 \gets \{\epsilon\}$, and compute $PD^\epsilon$;
			\For{$i \gets 1$ to $d-1$}
			\State $\Sch_i \gets \emptyset$ and $\Sch \gets \emptyset$;
			\While{$\Sch_{i-1} \setminus \Sch \ne \emptyset$}\label{ln:inner2}
			\State let $\varsigma$ be an element of $\Sch_{i-1} \setminus \Sch$;
			\State\label{ln:basis} $\Sch' \gets \{\varsigma\cdot\alpha:\alpha\in\Sigma\}$,
			and compute $PD^{\varsigma'}$ for each $\varsigma'\in \Sch'$;
			\If{$\bigcup_{\varsigma'\in \Sch'}PD^{\varsigma'}=PD^\varsigma$}\label{ln:condition}
			$\Sch \gets \Sch \cup \{\varsigma\}$;
			\Else\label{ln:derive}\
			$\Sch_i \gets \Sch_i \cup \Sch'$
			and $\Sch_{i-1} \gets \Sch_{i-1} \setminus \{\varsigma\}$;
			\EndIf
			\EndWhile
			\If{$\Sch_i=\emptyset$} \textbf{break};
			\EndIf
			\EndFor
			\State \Return $\Sch=\Sch_0 \cup \Sch_1 \cup \cdots \cup \Sch_{i-1}$.
		\end{algorithmic}
	\end{algorithmic}
\end{algorithm}

\paragraph{Complexity}
Note that there are at most $1+m+\cdots+m^{d-1}$ times of entering the inner loop
in Line~\ref{ln:inner2}.
Each inner loop computes $m$ spheres $PD^{\varsigma'}$ in Line~\ref{ln:basis},
which is finished in such a way:
\begin{enumerate}
	\item Write $\varsigma'\in\Sigma^i$ as the form $\alpha\cdot \varsigma''$
	for some $\alpha\in\Sigma$ and $\varsigma''\in\Sigma^{i-1}$.
	\item By the last loop, we have determined the linear subspace $PDS^{\varsigma''}$ as well as
	the orthonormal basis $\{|\psi_j^\perp\rangle:j=1,2,\ldots,J\}$ of its complement where $J<d-i+1$.
	\item Let $\{\FF_k:k=1,2,\ldots,K\}$ be the Kraus representation of $\F_\alpha$ where $K \le d^2$.
	\item By Eq.~\eqref{eq:pd0}, $PD^{\varsigma'}$ is obtained
	as the solution space of $\supp(\F_\alpha(\op{\psi}{\psi})) \subseteq PDS^{\varsigma''}$,
	i.\,e.\@,
	\[
		\bigwedge_{j=1}^J \bigwedge_{k=1}^K \langle\psi_j^\perp|\,\FF_k \ket{\psi}=0.
	\]
	It performs $J\cdot K$ times of matrix-vector multiplication $\langle\psi_j^\perp|\,\FF_k$,
	each of which is done in $\BigO(d^2)$,
	and solves $J\cdot K$ linear equations,
	w.\,r.\,t.\@ $d$ complex variables introduced to encode the pure state $\ket{\psi}$,
	in $\BigO(J\cdot K\cdot d^2) \subseteq \BigO(d^5)$.
\end{enumerate}
Hence Algorithm~\ref{Algo:PD} is in exponential time $\BigO(m^d \cdot d^5)$,
where the growth in the derivation tree is the bottleneck. \qed

\paragraph{Scheduler Synthesis}
For each finite scheduler $\varsigma$ in the output $\Sch$ of Algorithm~\ref{Algo:PD},
we know there is an action $\alpha \in \Sigma$
satisfying $PD^\varsigma=PD^{\varsigma\cdot\alpha}$.
Hence the $\omega$-regular scheduler $\sigma=\varsigma\cdot\alpha^\omega$ is a \emph{divergence scheduler},
under which all states on $PDS^\varsigma$ are terminating with probability zero. \qed

\begin{example}\label{ex3}
Here we will compute the set $\Sch$ of finite schedulers
that generate the pure divergent set $PD$ of the program $\Pro_2$ in Example~\ref{ex1-3}.
Algorithm~\ref{Algo:PD} delivers the inductive process.
\begin{enumerate}
	\item Initially, in the $0$th layer of the derivation tree,
	we have $\Sch_0=\{\epsilon\}$ and
	\[
		PDS_0 = PDS^\epsilon = \spn(\{\ket{0,0},\ket{1,0},\ket{1,1}\})
	\]
	to be derived.
	\item In the first layer,
	we derive $PDS^\epsilon$ for actions $\alpha_1$ and $\alpha_2$,
	and get
	\[
	\begin{aligned}
 		PDS^{\alpha_1} &= \spn(\{\ket{1,1},\ket{-,0}\}), \\
 		PDS^{\alpha_2} &= \spn(\{\ket{0,0},\ket{1,+}\}),
	\end{aligned}
	\]
	which are both proper subspaces of $PDS^\epsilon$.
	So we update $\Sch_0$ to $\emptyset$, and set $\Sch_1=\{\alpha_1,\alpha_2\}$
	and $PDS_1 = PDS^{\alpha_1} \cup PDS^{\alpha_2}$ to be derived.
	\item In the second layer,
	we derive $PDS^{\alpha_1}$ and $PDS^{\alpha_2}$ for actions $\alpha_1$ and $\alpha_2$,
	and get
	\[
	\begin{aligned}
		PDS^{\alpha_1\alpha_1} &= \spn(\{\ket{1,1},\ket{-,0}\}), \\
        PDS^{\alpha_2\alpha1} & = \spn(\{\tfrac{1}{\sqrt{3}}(-\sqrt{2}\ket{0,0}+\ket{1,+})\}), \\
        PDS^{\alpha_1\alpha2} & = \spn(\{\tfrac{1}{\sqrt{3}}(\sqrt{2}\ket{1,1}-\ket{-,0})\}),\\
		PDS^{\alpha_2\alpha_2} &= \spn(\{\ket{0,0},\ket{1,+}\}).
	\end{aligned}
	\]
	Since $PDS^{\alpha_1}=PDS^{\alpha_1\alpha_1}$ and $PDS^{\alpha_2}=PDS^{\alpha_2\alpha_2}$,
	the derivation subtrees rooted at them are stabilized then,
	as well as the whole derivation tree,
	i.\,e.\@ $PDS_2=PDS^{\alpha_1\alpha_1} \cup PDS^{\alpha_2\alpha_1}
	\cup PDS^{\alpha_1\alpha_2} \cup PDS^{\alpha_2\alpha_2}=PDS_1$.
	\end{enumerate}
	Hence, $PD=PD_1$ is the least fixedpoint of the descending chain.
	We report it by the set of finite schedulers
	$\Sch=\Sch_0\cup \Sch_1=\emptyset \cup\{\alpha_1,\alpha_2\}=\{\alpha_1,\alpha_2\}$.
    Additionally,
    we have the divergence schedulers $\alpha_1^\omega$ for those states on $PDS^{\alpha_1}$
	and $\alpha_2^\omega$ for those states on $PDS^{\alpha_2}$. \qed
\end{example}

By Algorithm~\ref{Algo:PD} and the transformation $D=\bigcup_{PD^\varsigma \in PD} \den(PDS^\varsigma)$,
we obtain the result:
\begin{theorem}
    Both pure divergent set and divergent set are computable in exponential time.
\end{theorem}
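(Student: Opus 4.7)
The plan is to extract the theorem essentially for free from the analysis of Algorithm~\ref{Algo:PD} already established above, together with a short argument linking the pure divergent set to the full divergent set via the relation $D=\bigcup_{PD^\varsigma \in PD} \den(PDS^\varsigma)$ stated just before the theorem.

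For the pure divergent set $PD$, I would simply compose the three ingredients already assembled. The stabilization lemma preceding Algorithm~\ref{Algo:PD} bounds the height of the derivation tree by $d$, so the outer loop runs at most $d-1$ times. Across all layers, the total number of inner iterations is dominated by the total number of nodes in a depth-$d$, branching-$m$ tree, i.\,e.\@ $1+m+\cdots+m^{d-1}\in \BigO(m^d)$. Each iteration performs $m$ derivations of the form~\eqref{eq:pd0}, and the paragraph labelled \emph{Complexity} following the algorithm shows that each such derivation amounts to solving a linear system in $\BigO(d^5)$ operations over algebraic numbers. Multiplying these gives the exponential bound $\BigO(m^d \cdot d^5)$, and the output is a finite set $\Sch$ of schedulers together with orthonormal bases of the stabilized subspaces $PDS^\varsigma$ that represent $PD=\bigcup_{\varsigma \in \Sch} PD^\varsigma$.

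For the divergent set $D$, I would establish the equality $D=\bigcup_{\varsigma \in \Sch} \den(PDS^\varsigma)$ and observe that no additional work is needed. The inclusion $\supseteq$ follows from the scheduler synthesis paragraph: for each $\varsigma \in \Sch$ there is an action $\alpha \in \Sigma$ with $PD^\varsigma=PD^{\varsigma \cdot \alpha}$, so the $\omega$-regular scheduler $\sigma=\varsigma \cdot \alpha^\omega$ witnesses the divergence of every $\rho \in \den(PDS^\varsigma)$. The converse inclusion $\subseteq$ is where the mild obstacle lies: given $\rho \in D$ with witness scheduler $\sigma$, each eigenstate of $\rho$ lies in $PD^\sigma \subseteq PD_i$ for every $i$, so $\supp(\rho) \subseteq PDS^{\sigma\uparrow i}$. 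By the stabilization of the descending chain at depth at most $d$, the subspace $PDS^{\sigma \uparrow \ell}$ coincides with $PDS^\varsigma$ for some prefix $\varsigma \in \Sch$, placing $\rho \in \den(PDS^\varsigma)$. Since $\Sch$ and the associated bases are already output by Algorithm~\ref{Algo:PD}, the divergent set inherits the same $\BigO(m^d \cdot d^5)$ bound, completing the proof.
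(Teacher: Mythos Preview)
Your proposal is correct and follows the paper's own approach, which simply invokes Algorithm~\ref{Algo:PD} and the transformation $D=\bigcup_{\varsigma\in\Sch}\den(PDS^\varsigma)$ without further argument; you are in fact more thorough by justifying both inclusions of that transformation. One minor wording issue: in the $\subseteq$ direction you write that ``$PDS^{\sigma\uparrow\ell}$ coincides with $PDS^\varsigma$ for some prefix $\varsigma\in\Sch$,'' but equality need not hold---what you actually need (and what the structure of Algorithm~\ref{Algo:PD} gives) is that some finite prefix $\varsigma=\sigma\uparrow k$ itself lies in $\Sch$, whence $\supp(\rho)\subseteq PDS^{\sigma\uparrow k}=PDS^\varsigma$; alternatively, since $\supp(\rho)$ is a subspace contained in the finite union $\bigcup_{\varsigma\in\Sch}PDS^\varsigma$ over $\mathbb{C}$, it must lie in a single $PDS^\varsigma$.
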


\section{Deciding the Termination Problem}\label{S6}
Combining the (pure) divergent sets with the reachable spaces obtained in the previous sections,
we are able to decide the termination of the nondeterministic quantum programs.
Although the reachable spaces are supersets of the reachable set,
they could still be utilized to yield a necessary and sufficient condition to the termination
as the following result.

\begin{lemma}\label{lem:terminate}
	Given a nondeterministic quantum program $\Pro$ and an input state $\rho_0\in\den$,
	$\Pro$ terminates with probability less than one under some scheduler
	if and only if
	the I-reachable space $\Phi(\Pro,\rho_0)$ and the pure divergent set $PD(\Pro)$ are not disjoint.
\end{lemma}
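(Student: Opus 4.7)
My plan is to prove the two implications separately, relying on two structural observations.  First, $\Phi(\Pro,\rho_0)$ is invariant under every $\F_\alpha$: at the fixedpoint $\B_\ell = \B_{\ell+1}$ of Lemma~\ref{lem:lfp} one has $\supp(\bar\F(\op{\varphi}{\varphi})) \subseteq \B_\ell$ for every $\ket{\varphi}\in\B_\ell$, and the equality $\supp(\bar\F(\op{\varphi}{\varphi})) = \bigvee_{\alpha\in\Sigma}\supp(\F_\alpha(\op{\varphi}{\varphi}))$ shows that each individual $\F_\alpha$-image lies inside $\Phi(\Pro,\rho_0)$.  Second, whenever $\ket{\psi}\in PD^{\sigma^*}$, monotonicity of the nonincreasing sequence $\tr(\F_{\sigma^*\uparrow i}(\op{\psi}{\psi}))$ upgrades the limit equality in Definition~\ref{PD} into the exact identity $\tr(\F_{\sigma^*\uparrow i}(\op{\psi}{\psi})) = 1$ for \emph{every} $i$.

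\emph{Backward direction} $(\Leftarrow)$. Pick $\ket{\psi}\in\Phi(\Pro,\rho_0)\cap PD$ with witness scheduler $\sigma^*$.  Since $\Phi(\Pro,\rho_0)$ is finite-dimensional, Definition~\ref{ReachSpace} lets us write $\ket{\psi}=\sum_{k=1}^K c_k\ket{\psi_k}$ with $\ket{\psi_k}\in\supp(\F_{\varsigma_k}(\rho_0))$ for finite schedulers $\varsigma_k$.  For any Kraus representation $\{\FF_j\}$ of $\F_{\sigma^*\uparrow i}$, the triangle inequality applied to the $\ell^2$-norm of the vector $(\|\FF_j\ket{\psi}\|)_j$ yields
\[
    \sqrt{\tr(\F_{\sigma^*\uparrow i}(\op{\psi}{\psi}))}\ \le\ \sum_{k=1}^K |c_k|\sqrt{\tr(\F_{\sigma^*\uparrow i}(\op{\psi_k}{\psi_k}))}.
\]
The left-hand side equals $1$ for all $i$ by the second observation, so at least one index $k_0$ must satisfy $q_{k_0}:=\lim_i\tr(\F_{\sigma^*\uparrow i}(\op{\psi_{k_0}}{\psi_{k_0}}))>0$, as otherwise every summand on the right tends to $0$.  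Because $\ket{\psi_{k_0}}\in\supp(\F_{\varsigma_{k_0}}(\rho_0))$, Appendix~\ref{A1} supplies $p>0$ with $\F_{\varsigma_{k_0}}(\rho_0)-p\op{\psi_{k_0}}{\psi_{k_0}}$ positive, and the composed scheduler $\sigma=\varsigma_{k_0}\cdot\sigma^*$ obeys $1-\Prob_\sigma(\rho_0)\ge p\cdot q_{k_0}>0$.

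\emph{Forward direction} $(\Rightarrow)$. Suppose some $\sigma$ has $\Prob_\sigma(\rho_0)<1$, so $t_i:=\tr(\F_{\sigma\uparrow i}(\rho_0))$ decreases monotonically to $t_\infty=1-\Prob_\sigma(\rho_0)>0$.  The normalised iterates $\tilde\rho_i := t_i^{-1}\F_{\sigma\uparrow i}(\rho_0)$ lie in the compact set $\den(\Phi(\Pro,\rho_0))$ by the first observation.  The composition identity
\[
    t_{i+j}\ =\ t_i\cdot\tr(\F_{(\sigma\downarrow i)\uparrow j}(\tilde\rho_i)),
\]
combined with $t_{i+j}/t_i\to 1$, forces $\tr(\F_{(\sigma\downarrow i)\uparrow j}(\tilde\rho_i))\to 1$ as $i\to\infty$ for every fixed $j$.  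Since $\Sigma$ is finite, a diagonal K\"onig extraction produces an infinite scheduler $\tau$ together with a subsequence along which $\tilde\rho_{i_n}\to\tilde\rho^\infty$ in $\den(\Phi(\Pro,\rho_0))$ and $(\sigma\downarrow i_n)$ agrees with $\tau$ on longer and longer prefixes.  Continuity of each $\F_{\tau\uparrow j}$ then yields $\tr(\F_{\tau\uparrow j}(\tilde\rho^\infty))=1$ for every $j$, whence every eigenvector of $\tilde\rho^\infty$ with positive eigenvalue belongs to $PD^\tau\cap\Phi(\Pro,\rho_0)$.

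The main obstacle is the diagonal extraction in the forward direction: the scheduler tails $\sigma\downarrow i$ need not converge in any canonical way, so one must synchronise a compactness argument in $\den(\Phi(\Pro,\rho_0))$ (enabled by the $\F_\alpha$-invariance) with a simultaneous K\"onig-style limit in action prefixes (enabled by $|\Sigma|<\infty$).  The backward direction, though less delicate topologically, hinges on the key trick of converting the coherent decomposition $\ket{\psi}=\sum_k c_k\ket{\psi_k}$ into a bound on individual survival probabilities via the $\ell^2$-triangle inequality on the Kraus operators.
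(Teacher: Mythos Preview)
Your proof is correct, and both directions take genuinely different routes from the paper.

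For the backward direction, the paper argues by contraposition through a minimal-ensemble trick: if every $\ket{\psi_k}$ terminated with probability one under $\sigma^*$, so would the uniform mixture $\gamma=\tfrac{1}{K}\sum_k\op{\psi_k}{\psi_k}$, and since $\ket{\psi}\in\supp(\gamma)$ appears in some minimal ensemble of $\gamma$ with positive weight, $\ket{\psi}$ would terminate with probability one as well---a contradiction. Your $\ell^2$-triangle inequality on the Kraus vectors is a more direct and quantitative substitute; it bypasses the ensemble machinery entirely and even yields the explicit lower bound $p\cdot q_{k_0}$ on the nontermination probability.

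For the forward direction, the paper constructs a sequence of pure eigenstates $\ket{\lambda_i}$ step by step, each chosen to maximise the nontermination probability $p_i$ under the residual scheduler $\sigma\downarrow i$, and then shows $p_i\to 1$ via the recursive estimate $p_{i+j}\ge 2p_i/(1+p_i)$; a convergent subsequence in $\Phi(\Pro,\rho_0)$ furnishes the divergent limit state. Your argument instead normalises the full iterates $\tilde\rho_i$, observes that $t_{i+j}/t_i\to 1$ forces the survival probability of $\tilde\rho_i$ under $(\sigma\downarrow i)\uparrow j$ to tend to $1$, and then runs a simultaneous compactness extraction---Tychonoff/K\"onig on the scheduler tails (using $|\Sigma|<\infty$) together with sequential compactness in $\den(\Phi(\Pro,\rho_0))$---to produce both the limit state $\tilde\rho^\infty$ and the witnessing scheduler $\tau$ in one stroke. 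This is cleaner topologically, avoids the ad hoc $2p/(1+p)$ bound, and makes the divergence scheduler $\tau$ explicit (the paper's proof leaves that scheduler implicit). The paper's version, on the other hand, stays at the level of pure states throughout, so the final eigenvector extraction you perform on $\tilde\rho^\infty$ is unnecessary there.
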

\begin{proof}
	We first prove the ``if'' direction.
	Let $\ket{\psi}$ be a pure divergent state in the I-reachable space $\Phi(\Pro,\rho_0)$,
	which is terminating with probability zero under some scheduler $\sigma$, i.\,e.\@,
	\[
		\lim_{i\to \infty}\tr(\M_\yes\F_{\sigma\uparrow i}(\op{\psi}{\psi}))=1.
	\]
	Let $\sigma=\varsigma \cdot \alpha^\omega$ be an $\omega$-regular scheduler
	as the output of Algorithm~\ref{Algo:PD}.
	Since $\Phi(\Pro,\rho_0)=\bigvee_{\gamma\in \Psi(\Pro,\rho_0)}\supp(\gamma)$
    where $\Psi(\Pro,\rho_0)$ is the reachable set,
	there is a finite set of pure reachable states $\ket{\psi_k}$ ($k=1,2,\ldots,K$)
	respectively in the supports of reachable states $\gamma_k \in \Psi(\Pro,\rho_0)$
	reached from $\rho_0$ under finite schedulers $\varsigma_k$,
	such that $\ket{\psi}=\sum_{k=1}^K c_k \ket{\psi_k}$ holds for some $c_k\in\mathbb{C}$.
	We claim that
	at least one, saying $|\psi_{k_0}\rangle$,
	among these $\ket{\psi_k}$ ($k=1,2,\ldots,K$) is terminating with probability less than one
	under the scheduler $\sigma$.
	(Otherwise all $\ket{\psi_k}$ are terminating with probability one under $\sigma$,
	as well as the mixture $\gamma=\tfrac{1}{K}\sum_{k=1}^K \op{\psi_k}{\psi_k}$.
	Since $\ket{\psi}$ is in the support $\spn(\{\ket{\psi_k}:k=1,2,\ldots,K\})$ of $\gamma$,
	by~\cite[Exercise~2.73]{NC00},
	there is a minimal probabilistic ensemble of $\gamma$ containing $\ket{\psi}$ with positive probability.
	Then we reaches the contradiction that $\ket{\psi}$ is terminating with probability one under $\sigma$.)
	Therefore $\Pro$ terminates with probability less than one
	under the nontermination scheduler $\varsigma_{k_0}\cdot\sigma$.
	The workflow is shown in Fig.~\ref{fig:workflow}.
	
	\begin{figure}[ht]
    \centering
    \tikzset{photon/.style={thick, decorate, purple, decoration={snake,segment length=0.8cm,amplitude=5pt}},
	arrow3/.style={thick, decorate, purple, decoration={snake,segment length=0.8cm,amplitude=0pt}},
	global scale/.style={scale=#1, every node/.append style={scale=#1}}}

    \begin{tikzpicture}[global scale=0.8,arrow1/.style = {
	draw = black, dashed, line width=1pt, -{Latex[length = 2mm, width = 2.5mm]}},arrow2/.style = {
	draw = black, thick, -{Latex[length = 2mm, width = 1.5mm]},},
	]

    \draw[fill=purple,fill opacity=0.1] (-5,4) ellipse(1.6 and 1);
    \draw[fill=blue,fill opacity=0.1] (-3,4) ellipse(1.6 and 1);
    \node at (-5.5,4) {$\Phi(\Pro,\rho_0)$};
    \node at (-2.5,4) {$PD(\Pro)$};
    \node at (-4,4) {$\ket{\psi}$};
    \draw[photon,->] (-4,3.5) -- (-4,1);
    \node at (-5.2,2) {$\sigma=\varsigma\cdot\alpha^\omega$};
    \draw[thick] (-5.5,1) -- (-2.5,1) -- (-2.5,0) -- (-5.5,0) -- cycle;
    \node at (-4,0.5) {divergence};
    \path[arrow1] (-1,4) -- node[above] {$\ket{\psi}\in\spn(\{\ket{\psi_1},\ldots,\ket{\psi_K}\})$} (4,4) ;
    \draw[fill=green,fill opacity=0.1] (5,4) circle(0.5);
    \draw[fill=green,fill opacity=0.1] (8,4) circle(0.5);
    \draw[fill=blue, fill opacity=0.2,dashed] (8.6,4.6) rectangle(4.4,3.4);
    \node at (5,4) {$\ket{\psi_1}$};
    \node at (8,4) {$\ket{\psi_K}$};
    \node at (6.5,4) {$\cdots$};
    \draw[fill=red,fill opacity=0.2] (6.5,7) circle(0.5);
    \node at (6.5,7) {$\rho_0$};
    \draw[photon,->] (6.2,6.6) -- (5,4.5);
    \draw[photon,->] (6.8,6.6) -- (8,4.5);
    \node at (5,5.5) {$\varsigma_1$};
    \node at (8,5.5) {$\varsigma_K$};
    \node at (3,2) {$\sigma$};
    \node at (7.5,2) {$\sigma$};
    \path[arrow2] (-2.2,0.5) -- node[above] {imply} (1,0.5) ;
    \draw[thick] (1.3,1) -- (4.3,1) -- (4.3,0) -- (1.3,0) -- cycle;
    \node at (2.8,0.5) {nontermination};
    \draw[thick] (6,1) -- (9,1) -- (9,0) -- (6,0) -- cycle;
    \node at (7.5,0.5) {nontermination};
    \draw[photon,->] (4.8,3.54) -- (2.8,1);
    \draw[photon,->] (8,3.5) -- (8,1);
    \node at (5.15,0.5) {$\cdots$};
    \node at (4.6,0.5) {or};
    \node at (5.65,0.5) {or};
    \end{tikzpicture}
    \caption{The workflow of ``if'' direction}\label{fig:workflow}
    \Description[<short description>]{<long description>}
\end{figure}

	For the ``only if'' direction, we assume that $\sigma$ is the nontermination scheduler,
	under which $\Pro$ does not terminate with probability one on $\rho_0$.
	From the input state $\rho_0$,
    $\Pro$ terminates with probability less than one.
	Then we will construct a sequence of pure reachable states as:
	\begin{itemize}
		\item fixed a spectral decomposition of $\rho_0$,
		there is an eigenstate $\ket{\lambda_0}$ among eigenstates in the decomposition
		that maximizes the nontermination probability
		\[
			p_0=\lim_{i \to \infty} \tr(\M_\yes \cdot \F_{\sigma\uparrow i}(\op{\lambda_0}{\lambda_0}));
		\]
		\item fixed a spectral decomposition of
		$\F_{\sigma\uparrow 1}(\op{\lambda_0}{\lambda_0})$,
		there is an eigenstate $\ket{\lambda_1}$ that maximizes the nontermination probability
		\[
			p_1=\lim_{i \to \infty}
			\tr(\M_\yes\F_{(\sigma\downarrow 1) \uparrow i}(\op{\lambda_1}{\lambda_1}));
		\]
		\item fixed a spectral decomposition of
		$\F_{\sigma\uparrow 1}(\op{\lambda_1}{\lambda_1})$,
		there is an eigenstate $\ket{\lambda_2}$ that maximizes the nontermination probability
		\[
			p_2=\lim_{i \to \infty}
			\tr(\M_\yes\F_{(\sigma\downarrow 2) \uparrow i}(\op{\lambda_2}{\lambda_2}));
		\]
		\item and so on.
	\end{itemize}
	The nontermination probabilities $p_0,p_1,p_2,\dots$ are monotonously increasing
	and convergent to some limit value $p^*$.
	We proceed to show $p^*=1$.
	For any $\ket{\lambda_i}$ with nontermination probability $p_i<1$,
	we know the termination probability of $\ket{\lambda_i}$
	under the infinite scheduler $\sigma\downarrow i$ is $1-p_i$,
	and there is a finite fragment $(\sigma\downarrow i) \uparrow j$ of $\sigma\downarrow i$
	under which the termination probability of $\ket{\lambda_i}$ is at least $\tfrac{1}{2}(1-p_i)$,
	i.\,e.\@ $\Prob_{(\sigma\downarrow i) \uparrow j}(\op{\lambda_i}{\lambda_i}) \ge \tfrac{1}{2}(1-p_i)$.
	By choosing the eigenstate $|\lambda_{i+j}\rangle$ in that sequence,
	we know that the nontermination probability $p_{i+j}$ of $|\lambda_{i+j}\rangle$
	is not less than the average nontermination probability of
	$\F_{(\sigma\downarrow i) \uparrow j}(\op{\lambda_i}{\lambda_i})$,
	i.\,e.\@ the nontermination probability of the normalized
	$\F_{(\sigma\downarrow i) \uparrow j}(\op{\lambda_i}{\lambda_i})/
	\tr(\F_{(\sigma\downarrow i) \uparrow j}(\op{\lambda_i}{\lambda_i}))$.
	The nontermination probability of
	$\F_{(\sigma\downarrow i) \uparrow j}(\op{\lambda_i}{\lambda_i})$ is still $p_i$,
	while the trace
	$\tr(\F_{(\sigma\downarrow i) \uparrow j}(\op{\lambda_i}{\lambda_i}))$ is
	$1-\Prob_{(\sigma\downarrow i) \uparrow j}(\op{\lambda_i}{\lambda_i}) \le \tfrac{1}{2}(1+p_i)$.
	The average nontermination probability is at least $2p_i/(1+p_i)$,
	which is also a lower bound of the nontermination probability $p_{i+j}$ of $|\lambda_{i+j}\rangle$.
	So we have $p_{i+1} \ge 2p_i/(1+p_i)$.
	Taking the limit, we get $p^* \ge 2p^*/(1+p^*)$, which entails $p^*=1$.

	Those eigenstates $\ket{\lambda_0},\ket{\lambda_1},\ket{\lambda_2},\ldots$
	are unit vectors in the I-reachable subspace $\Phi(\Pro,\rho)$ of $\h$.
	We have that there is a convergent subsequence $|\lambda_1'\rangle,|\lambda_2'\rangle,|\lambda_3'\rangle,\ldots$
	also in the I-reachable subspace $\Phi(\Pro,\rho)$.
	By the completeness of finite-dimensional Hilbert space
	that the limit of a convergent sequence is contained in that space,
	the limit $\ket{\lambda'}$ of the subsequence $|\lambda_1'\rangle,|\lambda_2'\rangle,|\lambda_3'\rangle,\ldots$
    is in the finite-dimensional Hilbert space $\Phi(\Pro,\rho)$,
	which is a pure divergent state
	as $\ket{\lambda'}$ has the nontermination probability $p^*=1$.
\end{proof}

The above proof only tells us that
at least one pure reachable state $|\psi_{k_0}\rangle$ among finitely many ones $\ket{\psi_k}$ ($k=1,2,\ldots,K$)
is terminating with probability less than one,
but does not recognize it.
In the following,
we will recognize this $|\psi_{k_0}\rangle$
by one-by-one checking whether $\ket{\psi_k}$ is terminating with probability less than one.
Conditioning on the nontermination
under the $\omega$-regular scheduler $\sigma=\varsigma\cdot\alpha^\omega$,
we get the following equivalent statements:
\begin{enumerate}
	\item $\ket{\psi_k}$ is terminating with probability less than one.
	\item $\rho=\F_\varsigma(\op{\psi_k}{\psi_k})$ is terminating with probability less than one.
	\item Let $\Pro^{\alpha^\omega}$ be the program $\Pro$ under the scheduler $\alpha^\omega$,
	and $\B$ the I-reachable space $\Phi(\Pro^{\alpha^\omega},\rho)$.
	Then there is a Hermitian operator $\gamma$ on $\B$ such that $\F_\alpha(\gamma)=\gamma$.
\end{enumerate}
The first two statements are equivalent
since $\op{\psi_k}{\psi_k}$ and $\rho$ have the same nontermination probability.
The necessity of the last statement follows
from Brouwer's fixedpoint theorem~\cite[Chapter~4]{Ist01},
since $\F_\alpha$ is a continuous function from the divergent set on $\B$ to itself,
where the divergent set on $\B$ is convex and compact in the viewpoint of probabilistic ensemble.
The sufficiency follows from the fact that
for any pure state $\ket{\psi'}$ in the support of $\gamma$
satisfying $\F_\alpha(\gamma)=\gamma$,
$\supp(\F_\alpha(\op{\psi'}{\psi'}))$ is contained in that $\supp(\gamma)$,
implying $\ket{\psi'}$ is a pure divergent state;
$\ket{\psi'}$ can be linearly expressed by
finitely many pure states $|\psi_k'\rangle$ ($k=1,2,\ldots,K'$) reachable from $\rho$,
at least one among which is terminating with probability less than one.
The workflow is shown in Fig.~\ref{fig:my_label}.

\begin{figure}[H]
    \centering
    \tikzset{photon/.style={thick, decorate, purple, decoration={snake,segment length=0.8cm,amplitude=5pt}},
	arrow3/.style={thick, decorate, purple, decoration={snake,segment length=0.8cm,amplitude=0pt}},
	global scale/.style={scale=#1, every node/.append style={scale=#1}}}

    \begin{tikzpicture}[global scale=0.9,arrow1/.style = {
	draw = black, dashed, line width=1pt, -{Latex[length = 2mm, width = 2.5mm]}},arrow2/.style = {
	draw = black, thick, {Latex[length = 2mm, width = 1.5mm]}-{Latex[length = 2mm, width = 1.5mm]},},
	]
        \draw[fill=yellow,fill opacity=0.1] (12,4) circle(0.5);
        \draw[fill=green,fill opacity=0.1] (8,4) circle(0.5);
        \draw[photon,->] (8.5,4) -- (11.5,4);
        \node at (12,4) {$\rho$};
        \node at (8,4) {$\ket{\psi_k}$};
        \node at (10,4.5) {$\varsigma$};
        \draw[thick] (10.5,1.5) -- (13.5,1.5) -- (13.5,0.5) -- (10.5,0.5) -- cycle;
        \node at (12,1) {nontermination};
        \draw[photon,->] (12,3.5) -- (12,1.5);
        \node at (11.5,2.5) {$\alpha^\omega$};
        \path[arrow1] (13,4) -- node[above] {$\gamma\in\her(\Phi(\Pro^{\alpha^\omega},\rho))$} (16.5,4) ;
        \draw (17.4,4) circle(0.5);
        \draw (17.4,1) circle(0.5);
        \node at (17.4,4) {$\gamma$};
        \node at (17.4,1) {$\gamma$};
        \draw[arrow3,->] (17.4,3.5) -- (17.4,1.5) ;
        \node at (17.1,2.5) {$\alpha$};
        \path[arrow2] (13.8,1) -- node[above] {equivalent} (16.6,1) ;
    \end{tikzpicture}
    \caption{The workflow of checking the nontermination of $\ket{\psi_k}$}\label{fig:my_label}
    \Description[<short description>]{<long description>}
\end{figure}

We summarize the procedure of synthesizing a nontermination scheduler as Algorithm~\ref{Algo:scheduler},
whose complexity analysis is provided below.

\begin{algorithm}[ht]
	\caption{Synthesizing a Scheduler for Nontermination}\label{Algo:scheduler}
	\begin{algorithmic}
		\Require a nondeterministic quantum program $\Pro=(\Sigma,\E,\{\M_\yes,\M_\no\})$
        over $\h$ with dimension $d$,
        and an input pure state $\rho_0\in\den(\h)$;
		\Ensure a scheduler under which $\Pro$ terminates with probability less than one on $\rho_0$ if exists.
		\begin{algorithmic}[1]
			\State compute the I-reachable space $\Phi(\Pro,\rho_0)$ by Algorithm~\ref{Algo:RS};
			\State compute the pure divergent set $PD(\Pro)$ by Algorithm~\ref{Algo:PD};
            \If{$\Phi(\Pro,\rho_0) \cap PD(\Pro) \ne \emptyset$}\label{ln:empty}
				\State let $\ket{\psi}$ be an element in $\Phi(\Pro,\rho_0) \cap PD^\varsigma$
				for some $PD^\varsigma\in PD(\Pro)$;
				\State let $\sigma=\varsigma\cdot\alpha^\omega$ be a divergence scheduler of $\ket{\psi}$;
			\Else\ \Return $\epsilon$;
			\Comment{report no nontermination scheduler}
			\EndIf
			\State let $\{\ket{\psi_k}:k=1,2,\ldots,K\}$ be a minimal set of pure reachable states
			under schedulers $\varsigma_k$
			that linearly express $\ket{\psi}$;
            \For{$k \gets 1$ to $K$}\label{ln:loop}
            \Comment{one-by-one checking $\ket{\psi_k}$ for nontermination}
				\State $\rho \gets \F_\varsigma(\op{\psi_k}{\psi_k})$;
				\State compute the I-reachable space $\B=\Phi(\Pro^{\alpha^\omega},\rho)$ by Algorithm~\ref{Algo:RS};
				\If{$\F_\alpha(\gamma)=\gamma$ has some nonzero solution $\gamma\in\her(\B)$}
            	\Return $\varsigma_k\cdot\sigma$.
            	\EndIf
            \EndFor
		\end{algorithmic}
	\end{algorithmic}
\end{algorithm}

\paragraph{Complexity}
Computing $\Phi(\Pro,\rho_0)$ is in $\BigO(m\cdot d^5)$,
and computing $PD(\Pro)$ is in $\BigO(m^d \cdot d^5)$.
The emptiness of $\Phi(\Pro,\rho) \cap PD(\Pro)$ in Line~\ref{ln:empty} can be checked
by computing whether the intersection of $\Phi(\Pro,\rho)$ and $PDS^\varsigma$ is null
for each individual sphere $PD^\varsigma$ in the union $PD(\Pro)$,
which is in at most $m^d \times \BigO(d^3)$.
Once an element $\ket{\psi}$ in $\Phi(\Pro,\rho) \cap PD(\Pro)$ is obtained,
we can find finitely many pure states $\ket{\psi_k}$ ($k=1,2,\ldots,K$)
to linearly express $\ket{\psi}$,
which has been embedded into the computation of $\Phi(\Pro,\rho_0)$.
There are at most $K \le d$ times of entering the loop in Line~\ref{ln:loop}.
Each loop
\begin{enumerate}
	\item performs $\F_\varsigma(\op{\psi_k}{\psi_k})$ in $\BigO(d^5)$,
	since it is $|\varsigma|\le d$ times of performing quantum operations on density operators;
	\item computes $\Phi(\Pro^{\alpha^\omega},\rho)$ which is in $\BigO(d^5)$
	since the action set of $\Pro^{\alpha^\omega}$ is a singleton set $\{\alpha\}$;
	\item solves $\F_\alpha(\gamma)=\gamma$
	which is in $\BigO(d^6)$ since it is a system of linear equations in $d^2$ real variables
	for encoding the Hermitian operator $\gamma\in\her(\B)$.
\end{enumerate}
Hence Algorithm~\ref{Algo:scheduler} is in exponential time $\BigO(m^d \cdot d^5+d^7)$,
whose bottleneck lies in the computation of $PD(\Pro)$. \qed

\begin{example}\label{ex4-1}
	For the while-loop $\Pro_2$ in Example~\ref{ex1-3} with input state $\rho_0=\op{1,1}{1,1}$,
	we have obtained the I-reachable subspace $\Phi(\Pro_2,\rho_0)=\h_{\mathbf{VAR}}$
	and the pure divergent set $PD(\Pro_2)=PD^{\alpha_1} \cup PD^{\alpha_2}$
	with $PDS^{\alpha_1}=\spn(\{\ket{1,1},\ket{-,0}\})$ and $PDS^{\alpha_2}=\spn(\{\ket{0,0},\ket{1,+}\})$,
	on which the divergence schedulers are $\sigma_i=\alpha_i^\omega$ respectively,
	in the previous examples.
  
	The intersection of $\Phi(\Pro_2,\rho_0)$ and $PD(\Pro_2)$ is not empty,
	as it has elements $\ket{1,1}$, $\ket{-,0}$, $\ket{0,0}$ and $\ket{1,+}$.
	It is clear that $\ket{1,1}$ and $\ket{-,0}$ are pure reachable states respectively
	in the supports of $\rho_0$ and $\F_{\alpha_1}(\rho_0)=\frac{1}{2}\op{-,0}{-,0}$.
	To demonstrate the generality of our method,
	we exemplify it with $\ket{0,0}$ to find out a pure reachable state
	that is terminating with probability less than one.
	Since $\ket{0,0}=\ket{1,1}-\sqrt{2}\ket{-,0}+\sqrt{2}\ket{0,-}+2\ket{-,+}$ is linearly expressed by
	the pure reachable states
	$\ket{1,1}$ under the finite scheduler $\varsigma_1=\epsilon$,
	$\ket{-,0}$ under  $\varsigma_2=\alpha_1$,
	$\ket{0,-}$ under $\varsigma_3=\alpha_2$ and $\ket{-,+}$ under $\varsigma_4=\alpha_1\alpha_2$,
	we know that at least one among $\ket{1,1}$,
	$\ket{-,0}$, $\ket{0,-}$ and $\ket{-,+}$ is terminating with probability less than one.

	The two pure states $\ket{1,1}$ and $\ket{-,0}$ are divergent,
	thus they are terminating with probability zero.
	Again, to demonstrate our method,
	we will check whether $\ket{-,+}$ is terminating with probability less than one as follows.
	The I-reachable subspace $\Phi(\Pro_2^{\sigma_2},\op{-,+}{-,+})$ is $\B=\spn(\{\ket{-,+},\ket{0,-},\ket{1,+}\})$.
	Solving $\F_{\alpha_2}(\gamma)=\gamma$ with $\gamma \in \her(\B)$,
	we get a nonzero solution $\op{\phi}{\phi}$
	with $\ket{\phi}=\ket{-,+}+\ket{0,-}/\sqrt{2}+(1+\sqrt{2})\ket{1,+}/\sqrt{2}$.
	Hence the nontermination scheduler $\varsigma_4 \cdot \sigma_2$ is synthesized
	to force $\Pro_2$ to terminate with probability less than one on $\rho_0$,
	which entails the protocol is defective. \qed
\end{example}

By a similar analysis on the the II-reachable space and the divergent set, we get:
\begin{corollary}\label{lem:terminate1}
	Given a nondeterministic quantum program $\Pro$
	and an input state $\rho_0=\op{\lambda_0}{\lambda_0}\in\den$,
	$\Pro$ terminates with probability less than one on $\rho_0$ under some scheduler
	if and only if
	the II-reachable space $\Upsilon(\Pro,\rho_0)$ and the divergent set $D(\Pro)$ are not disjoint.
\end{corollary}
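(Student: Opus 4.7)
The plan is to reduce Corollary~\ref{lem:terminate1} to Lemma~\ref{lem:terminate} by leveraging two algebraic observations: every density operator in $D(\Pro)$ splits spectrally into pure divergent states, and the outer-product basis $\{\op{\psi_k}{\psi_k}\}$ of $\Upsilon(\Pro,\rho_0)$ produced by Algorithm~\ref{Algo:FRS} consists of rank-one operators whose underlying vectors $\ket{\psi_k}$ sit inside $\Phi(\Pro,\rho_0)$.

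For the ``if'' direction, let $\gamma \in \Upsilon(\Pro,\rho_0) \cap D(\Pro)$ be divergent under scheduler $\sigma$. Linearity of $\Prob_\sigma(\cdot)$ applied to the spectral decomposition $\gamma = \sum_j \lambda_j \op{\mu_j}{\mu_j}$ (with $\lambda_j > 0$) forces $\Prob_\sigma(\op{\mu_j}{\mu_j}) = 0$ for every $j$, placing each eigenstate $\ket{\mu_j}$ into $PD(\Pro)$. Expressing $\gamma = \sum_k c_k \op{\psi_k}{\psi_k}$ in the basis of Algorithm~\ref{Algo:FRS}, the kernel identity $\gamma\ket{v} = \sum_k c_k \ip{\psi_k}{v}\ket{\psi_k}$ shows $\supp(\gamma) \subseteq \spn(\{\ket{\psi_k}\}) \subseteq \Phi(\Pro,\rho_0)$, so each $\ket{\mu_j}$ also lies in $\Phi(\Pro,\rho_0) \cap PD(\Pro)$. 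Lemma~\ref{lem:terminate} then yields $\Prob(\rho_0) < 1$.

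For the ``only if'' direction, suppose $\Prob_\sigma(\rho_0) < 1$ for some scheduler $\sigma$, so the surviving trace $P_i = \tr(\F_{\sigma\uparrow i}(\rho_0))$ decreases monotonically to the strictly positive limit $P^* = 1 - \Prob_\sigma(\rho_0)$. The normalized reachable state $\hat{\rho}_i = \F_{\sigma\uparrow i}(\rho_0)/P_i$ is a density operator lying in $\Upsilon(\Pro,\rho_0)$, since $\rho_0$ is pure and $\F_{\sigma\uparrow i}(\rho_0)$ decomposes into a finite sum of generators $\FF_\varsigma\op{\lambda_0}{\lambda_0}\FF_\varsigma^\dag$. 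Its continued nontermination probability under the tail scheduler $\sigma\downarrow i$ equals $P^*/P_i \to 1$. By compactness of $\den(\h) \cap \Upsilon(\Pro,\rho_0)$ and of $\Sigma^\omega$ under the product topology, extract a joint convergent subsequence $(\hat{\rho}_{i_k},\sigma\downarrow i_k) \to (\gamma^*,\sigma^*)$; upper semi-continuity of the nontermination probability, obtained from its representation as a pointwise infimum over $N$ of the continuous bilinear functionals $\tr(\M_\yes\F_{(\cdot)\uparrow N}(\cdot))$, then forces the limit nontermination value to be $1$, so $\gamma^* \in \Upsilon(\Pro,\rho_0) \cap D(\Pro)$.

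The principal obstacle is the joint topological limit in the ``only if'' direction, where one needs simultaneously that $\den(\h) \cap \Upsilon(\Pro,\rho_0)$ is compact (closed and bounded in a finite-dimensional normed space), that $\Sigma^\omega$ is compact by Tychonoff on a finite alphabet, and that nontermination probability is upper semi-continuous jointly in state and scheduler. The ``if'' direction, by contrast, is essentially immediate once the support inclusion $\supp(\gamma) \subseteq \spn(\{\ket{\psi_k}\})$ for $\gamma \in \Upsilon(\Pro,\rho_0)$ is recognized, since it effectively converts the Hermitian-operator statement into the vector-level one already handled in Lemma~\ref{lem:terminate}.
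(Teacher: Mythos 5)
Your proof is correct, but it does not follow the paper's own route. For the ``if'' direction the paper argues directly at the operator level: it writes the divergent $\rho\in\Upsilon(\Pro,\rho_0)\cap D(\Pro)$ as a real linear combination $\sum_{k}c_k\op{\psi_k}{\psi_k}$ of states reachable in the operator-level program $\hat{\Pro}$ under finite schedulers $\varsigma_k$, notes by linearity that at least one $\op{\psi_{k_0}}{\psi_{k_0}}$ must have positive nontermination probability under the divergence scheduler $\sigma$, and thereby exhibits the explicit nontermination scheduler $\varsigma_{k_0}\cdot\sigma$; you instead reduce to Lemma~\ref{lem:terminate} by observing that every eigenstate of $\gamma$ is a pure divergent state and that $\supp(\gamma)\subseteq\spn(\{\ket{\psi_k}\})\subseteq\Phi(\Pro,\rho_0)$, so that $\Phi(\Pro,\rho_0)\cap PD(\Pro)\ne\emptyset$. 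Your reduction is shorter and makes explicit that a II-level witness always yields an I-level witness, but it obtains the conclusion only through the lemma, whereas the paper's argument directly produces the scheduler $\varsigma_{k_0}\cdot\sigma$ on which the subsequent synthesis discussion (the amendment of Algorithm~\ref{Algo:scheduler}) relies; if you use this route, add the one-line justification that each basis vector $\ket{\psi_k}\propto\FF_{\varsigma_k}\ket{\lambda_0}$ lies in $\supp(\F_{\varsigma'}(\rho_0))$ for the corresponding finite scheduler $\varsigma'$ of $\Pro$, as in the remark following Example~\ref{ex2-2}. For the ``only if'' direction the paper merely says it is the same as Lemma~\ref{lem:terminate}; your version is the natural instantiation of that argument with the normalized reachable operators $\F_{\sigma\uparrow i}(\rho_0)/\tr(\F_{\sigma\uparrow i}(\rho_0))$, which indeed lie in $\Upsilon(\Pro,\rho_0)$, and it is in fact tighter than the paper's own treatment: by extracting a joint limit of states and tail schedulers and invoking upper semi-continuity of the nontermination probability (an infimum over $N$ of the nonincreasing, jointly continuous finite-horizon values), you justify that the limit state is divergent under the limit scheduler, a step the paper asserts without naming a limiting scheduler.
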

\begin{proof}
	We only prove the ``if'' direction,
while the ``only if'' direction is the same as that of Lemma~\ref{lem:terminate}.
Let $\rho$ be a divergent state in the II-reachable space $\Upsilon(\Pro,\rho_0)$,
which has no probability of termination under some scheduler, saying $\sigma$,
i.\,e.\@, $\lim_{i\to \infty}\tr(\M_\NT\F_{\sigma\uparrow i}(\rho))=1$.
Since $\Upsilon(\Pro,\rho_0)=\bigvee_{\gamma\in \Psi(\hat{\Pro},\rho_0)}\supp(\gamma)$
where $\hat{\Pro}$ is the operator-level program of $\Pro$
and $\Psi(\hat{\Pro},\rho_0)$ is the reachable set of $\hat{\Pro}$,
there is a finite set of pure reachable states $\op{\psi_k}{\psi_k}$ ($k=1,2,\ldots,K$),
respectively reached from $\rho_0$ under finite scheduler $\varsigma_k$,
such that $\rho=\sum_{k=1}^K c_k \op{\psi_k}{\psi_k}$
holds for some $c_k\in\mathbb{R}$.
We claim that
at least one, saying $|\psi_{k_0}\rangle\langle\psi_{k_0}|$,
among these $\op{\psi_k}{\psi_k}$ ($k=1,2,\ldots,K$) has a positive probability of nontermination
under the scheduler $\sigma$,
since otherwise all $\op{\psi_k}{\psi_k}$ have no probability of nontermination under $\sigma$,
as well as their linear combination $\rho$.
Therefore $\Pro$ does not terminate with probability one under the scheduler $\varsigma_{k_0}\cdot\sigma$.
\end{proof}

Algorithm~\ref{Algo:scheduler} could be amended to Lemma~\ref{lem:terminate1}
by checking the emptiness of $\Upsilon(\Pro,\rho_0) \cap D(\Pro)$.
To this end, for each individual sphere $PD^\varsigma$ in $PD(\Pro)$,
we have to
\begin{enumerate}
	\item introduce $\dim(PDS^\varsigma) \le d$ complex variables
        to encode a pure state $\ket{\psi} \in PD^\varsigma$,
    \item introduce $\dim(\Upsilon(\Pro,\rho_0)) \le d^2$ real variables
        to encode an element in $\Upsilon(\Pro,\rho_0)$,
    \item $\op{\psi}{\psi} \in \Upsilon(\Pro,\rho_0)$ results in a polynomial formula in those variables,
		whose coefficients are algebraic numbers.
        It can be solved in $2^{\BigO(d^2)}$
        by the existential theory of the reals~\cite[Theorem~13.13]{BPR06}.
\end{enumerate}
Hence it would contribute an additional factor $2^{\BigO(d^2)}$ to the complexity of the procedure.

\begin{example}\label{ex4-2}
	For the while-loop $\Pro_2$ in Example~\ref{ex1-3} with input state $\rho_0=\op{1,1}{1,1}$,
	we have obtained the II-reachable subspace
	\[
	    \Upsilon(\Pro_2,\rho_0)=\spn\left(\left\{\op{\psi}{\psi}:\ket{\psi}\in\left\{\begin{array}{l}
        \ket{1,1},\ket{-,0},\ket{0,-},\ket{-,+},\ket{+,1},\ket{1,+}, \\
		(\sqrt{2}\ket{0,0}-\ket{1,+})/\sqrt{3},(\ket{-,0}-\sqrt{2}\ket{1,1})/\sqrt{3}
	    \end{array}\right\}\right\}\right)
	\]
	and the divergent set $D(\Pro_2)=\den(PDS^{\alpha_1}) \cup \den(PDS^{\alpha_2})$
	with $PDS^{\alpha_1}=\spn(\{\ket{1,1},\ket{-,0}\})$
	and $PDS^{\alpha_2}=\spn(\{\ket{0,0},\linebreak[0]\ket{1,+}\})$
	in the previous examples.
	They have common elements such as $\op{1,1}{1,1}$,
	which also refutes the termination. \qed
\end{example}

\begin{theorem}
    The termination problem described in Problem~\ref{P1} can be solved in exponential time.
\end{theorem}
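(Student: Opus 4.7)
The plan is to combine the machinery already built in Sections~\ref{S4},~\ref{S5} and Section~\ref{S6}, and show that Algorithm~\ref{Algo:scheduler} is a decision procedure for Problem~\ref{P1} whose running time is exponential. The key observation is Lemma~\ref{lem:terminate}, which turns the semantic question ``does every scheduler terminate $\rho_0$ with probability one?'' into the purely algebraic question of whether the I-reachable space $\Phi(\Pro,\rho_0)$ and the pure divergent set $PD(\Pro)$ are disjoint. This reduction is the conceptual core; termination decidability then follows from the constructive computability of both objects.

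First I would compute $\Phi(\Pro,\rho_0)$ by Algorithm~\ref{Algo:RS} in time $\BigO(m\cdot d^5)$, and then the pure divergent set $PD(\Pro)$ together with its witnessing finite schedulers $\varsigma$ and associated subspaces $PDS^\varsigma$ by Algorithm~\ref{Algo:PD} in time $\BigO(m^d\cdot d^5)$. Next I would test, for each sphere $PD^\varsigma$ in the finite union $PD(\Pro)$, whether $\Phi(\Pro,\rho_0)\cap PDS^\varsigma=\{0\}$, which is a standard intersection of two subspaces of $\h$ computable in $\BigO(d^3)$ per sphere, giving a total $\BigO(m^d\cdot d^3)$. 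By Lemma~\ref{lem:terminate}, if all intersections are trivial then $\Pro$ terminates with probability one on $\rho_0$ under every scheduler and I would return \yes; otherwise I would return \no together with the synthesized scheduler produced by the remaining steps of Algorithm~\ref{Algo:scheduler}.

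For the scheduler synthesis in the \no case I would pick a witness $\ket{\psi}\in\Phi(\Pro,\rho_0)\cap PDS^\varsigma$, take the divergence scheduler $\sigma=\varsigma\cdot\alpha^\omega$ produced alongside the divergent set, and decompose $\ket{\psi}=\sum_{k=1}^{K} c_k\ket{\psi_k}$ as a linear combination of pure reachable states $\ket{\psi_k}$ along finite schedulers $\varsigma_k$ (this decomposition is available as a by-product of Algorithm~\ref{Algo:RS}). Then, following the workflow depicted in Fig.~\ref{fig:workflow} and the equivalences stated before Algorithm~\ref{Algo:scheduler}, I would test each $\ket{\psi_k}$ by computing $\rho=\F_{\varsigma_k}(\op{\psi_k}{\psi_k})$, the one-action reachable space $\B=\Phi(\Pro^{\alpha^\omega},\rho)$, and checking whether $\F_\alpha(\gamma)=\gamma$ admits a nonzero Hermitian solution on $\B$; the existence of such a fixedpoint follows from Brouwer's theorem in the nontermination direction and, conversely, yields a pure divergent state on $\B$ which certifies that $\varsigma_k\cdot\sigma$ is a genuine nontermination scheduler. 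Each check costs $\BigO(d^6)$ and there are at most $d$ of them.

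The main obstacle, and the step that dictates the overall exponential bound, is computing the pure divergent set: the descending chain of Section~\ref{S5} stabilizes within depth $d$ of a tree of branching factor $m$, contributing the $m^d$ factor, and this is unavoidable in the current approach since the divergent subspaces corresponding to different $\omega$-regular candidates are genuinely distinct. Adding all contributions yields total time $\BigO(m^d\cdot d^5+d^7)$, which is exponential in the size of $\Pro$ as claimed; correctness of the decision procedure is immediate from Lemma~\ref{lem:terminate}, and correctness of the synthesized scheduler is immediate from the two equivalent statements and Brouwer's fixedpoint theorem invoked above.
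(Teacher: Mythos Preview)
Your proposal is correct and mirrors the paper's own argument: reduce the decision to disjointness of $\Phi(\Pro,\rho_0)$ and $PD(\Pro)$ via Lemma~\ref{lem:terminate}, compute both by Algorithms~\ref{Algo:RS} and~\ref{Algo:PD}, and synthesize a witness scheduler via the Brouwer fixedpoint test of Algorithm~\ref{Algo:scheduler}, arriving at the same $\BigO(m^d\cdot d^5+d^7)$ bound. One slip to correct in the synthesis loop: you write $\rho=\F_{\varsigma_k}(\op{\psi_k}{\psi_k})$, but the paper (and the equivalences you invoke) applies the \emph{divergence} prefix $\varsigma$, i.e.\ $\rho=\F_\varsigma(\op{\psi_k}{\psi_k})$, since the goal is to test whether $\ket{\psi_k}$ is nonterminating under $\sigma=\varsigma\cdot\alpha^\omega$; with $\varsigma_k$ in place of $\varsigma$ the fixedpoint test on $\B=\Phi(\Pro^{\alpha^\omega},\rho)$ would not certify the scheduler $\varsigma_k\cdot\sigma$ you output.
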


\section{Synthesizing a Universal Scheduler}\label{S7}
In this section, we study the universal termination problem,
which asks whether all input states of a program are terminating with probability one under their respective schedulers.
We first decide the universal termination
by detecting the existence of invariant subspace of $\h$.
If the answer is affirmative, we could further synthesize a universal scheduler,
which forces all input states to be terminating with probability one.
The procedure turns out to be in polynomial time,
which is also reported for the first time.

For a nondeterministic quantum program $\Pro=(\Sigma,\E,\{\M_\yes,\M_\no\})$,
the states to be analyzed are those density operators on the subspace $\M_\yes(\h)$,
the null space of $\M_\no$.
Thus we propose:
\begin{definition}[Invariant Space]
    Given a nondeterministic quantum program $\Pro=(\Sigma,\E,\{\M_\yes,\linebreak[0]\M_\no\})$
    with $\Sigma=\{\alpha_j:j=1,2,\dots,m\}$ and $\E(\alpha_j)=\E_j$,
	an invariant space of $\Pro$ is a nonnull subspace $\inv$ of $\M_\yes(\h)$,
    satisfying that $\E_j(\rho) \in \den(\inv)$ holds
    for all input states $\rho \in \den(\inv)$ and all actions $\alpha_j \in \Sigma$.
\end{definition}

From the above definition,
we can see that the invariant subspace $\inv$ of $\M_\yes(\h)$
has the joint semi-lattice structure with ascending chain condition.
That is, for two invariant subspaces $\inv_1$ and $\inv_2$ of $\h$,
the join $\inv_1 \bigvee \inv_2$ is also an invariant subspace;
there is no infinite times of increment in the ascending chain
due to the finite-dimensional $\h$.
Additionally,
the invariant space $\inv$ requires
$\E_j(\den(\inv)) \subseteq \den(\inv)$ holds for all $\alpha_j \in \Sigma$,
entailing $\bigvee_{\rho \in \den(\inv)} \bigvee_{j=1}^m \supp(\E_j(\rho)) \subseteq \inv$.
Define a function $E$ on linear subspaces $\B$
of $\h$ as:
\begin{equation}\label{eq:gfp}
	E(\B):=\bigvee_{\rho \in \den(\B)} \bigvee_{j=1}^m \supp(\E_j(\rho)).
\end{equation}
It is a monotonic function.
For any invariant space $\inv$,
thank to Knaster--Tarski fixedpoint theorem~\cite{CoC77,MOS04},
we know there is a greatest fixedpoint $\inv_0 \subseteq \inv$ such that $E(\inv_0)=\inv_0$.
So we would refer the invariant space $\inv$
as the greatest fixedpoint $\inv_0$ of the function $E$ afterwards.

The existence of invariant space $\inv$ implies that
$\Pro$ terminates on those states $\rho \in \den(\inv)$ with probability zero,
no matter which scheduler is taken.
What is more important is the converse:
\begin{lemma}[{\cite[Theorem~7]{YiY18}}]
    If there is an input state
    on which the program $\Pro$ terminates with probability less than one under any scheduler,
    $\Pro$ has an invariant space $\inv$.
\end{lemma}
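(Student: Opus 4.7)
The plan is a direct fixed-point construction of $\inv$ from the hypothesized input $\rho_0$. Introduce the demonic nontermination value
\[
    \phi(\rho) := \inf_{\sigma\in\Sigma^\omega} \bigl(1-\Prob_\sigma(\rho)\bigr)
\]
on $\den(\h)$. Splitting off the first measurement-and-action step yields the Bellman identity
\[
    \phi(\rho) = \tr(\M_\yes\rho)\cdot\min_{j=1,\ldots,m}\phi\!\left(\frac{\E_j(\M_\yes\rho\M_\yes)}{\tr(\M_\yes\rho)}\right)
\]
whenever $\tr(\M_\yes\rho)>0$, and $\phi(\rho)=0$ otherwise.

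Since quantum MDPs with finite action alphabet and finite-dimensional state space admit optimal stationary schedulers (in the spirit of \cite{YiY18}), the hypothesis $\forall\sigma\colon\Prob_\sigma(\rho_0)<1$ is promoted to the strict estimate $\phi(\rho_0)>0$, whence $M:=\sup_{\rho\in\den(\h)}\phi(\rho)>0$. Since $\den(\h)$ is compact and $\phi$ is upper semi-continuous as an infimum of continuous functions, $M$ is attained at some $\rho^\star$. Applying the Bellman identity at $\rho^\star$ and using $\phi\le M$ globally together with $\tr(\M_\yes\rho^\star)\le 1$ forces both to be tight: $\tr(\M_\yes\rho^\star)=1$ (so $\supp(\rho^\star)\subseteq\M_\yes(\h)$) and $\phi(\E_j(\rho^\star))=M$ for every action $\alpha_j$.

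Iterating this maximizer argument shows that every Kraus image $\EE_{j_n,l_n}\cdots\EE_{j_1,l_1}\ket{\psi}$ of a vector $\ket{\psi}\in\supp(\rho^\star)$ again seeds a density operator attaining $\phi=M$ and thereby lies in $\M_\yes(\h)$. Setting
\[
    \inv := \spn\bigl\{\EE_{j_n,l_n}\cdots\EE_{j_1,l_1}\ket{\psi} : n\ge 0,\ \ket{\psi}\in\supp(\rho^\star),\ \alpha_{j_i}\in\Sigma\bigr\},
\]
one obtains a subspace $\inv\subseteq\M_\yes(\h)$ closed under every Kraus operator $\EE_{j,l}$ (hence $\E_j(\den(\inv))\subseteq\den(\inv)$ for every $j$), and $\inv\ne\{0\}$ since it contains $\supp(\rho^\star)$. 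This delivers the invariant subspace claimed.

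The main obstacle is the first step: promoting the pointwise hypothesis ``$\Prob_\sigma(\rho_0)<1$ for every $\sigma$'' to the strict bound $\phi(\rho_0)>0$. A priori the supremum $\sup_\sigma\Prob_\sigma(\rho_0)$ could equal $1$ without being attained, collapsing $\phi(\rho_0)$ to zero. The remedy is a compactness/finiteness argument on the scheduler space guaranteeing that the infimum in the definition of $\phi$ is attained---optimal stationary schedulers for quantum MDPs, as developed in \cite{YiY18}.
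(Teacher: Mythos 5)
Your construction in the second half is fine: the Bellman identity for $\phi$, upper semi-continuity of $\phi$ on the compact set $\den(\h)$, the tightness argument forcing $\tr(\M_\yes\rho^\star)=1$ and $\phi(\E_j(\rho^\star))=M$ for every $j$, and the span of the Kraus orbit of $\supp(\rho^\star)$ do deliver a nonnull subspace of $\M_\yes(\h)$ that is closed under every $\E_j$ --- \emph{provided} $M>0$. The genuine gap is exactly the step you flag, and your proposed remedy does not close it. The hypothesis only says that no single scheduler attains termination probability one on $\rho_0$; you need the strictly stronger fact $\sup_{\sigma\in\Sigma^\omega}\Prob_\sigma(\rho_0)<1$, i.e.\ $\phi(\rho_0)>0$. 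You justify this promotion by asserting that quantum MDPs with finitely many actions on a finite-dimensional $\h$ admit optimal (stationary) schedulers ``in the spirit of \cite{YiY18}''; that assertion is false, and this paper itself says so: Example~1 of \cite{YiY18}, quoted right after Problem~\ref{P3}, shows that such an optimal scheduler need not exist. The natural compactness argument also points the wrong way: $\Sigma^\omega$ is compact and $\sigma\mapsto\Prob_\sigma(\rho)$ is an increasing limit of locally constant functions, hence lower semi-continuous, so the \emph{demonic} infimum is attained while the \emph{angelic} supremum --- the one you need --- need not be. Excluding the ``value one but not attained'' (limit-sure but not almost-sure) scenario is essentially the content of the cited Theorem~7 of \cite{YiY18}, so as written your proof assumes the crux of what is to be proven. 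Note also that the lemma's conclusion only asserts an invariant space exists somewhere in $\h$; it does not claim $\phi(\rho_0)>0$, so your intermediate claim is genuinely stronger than what the hypothesis gives you.

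The paper does not prove this lemma (it imports it from \cite{YiY18}), but a repair staying within the paper's own toolkit is to argue the contrapositive: assume $\Pro$ has no invariant space; then, as in the correctness argument for Algorithm~\ref{Algo:universal} in Section~\ref{S7}, for every proper subspace $\B\supseteq\M_\no(\h)$ there are $\alpha_j$ and $\ket{\psi}\in\B^\perp$ with $\supp(\E_j(\op{\psi}{\psi}))$ not orthogonal to $\B$ (otherwise $\B^\perp$ would be invariant), which yields a finite scheduler $\varsigma$ giving every input state positive termination probability; Lemma~\ref{lem:repeat} then shows $\varsigma^\omega$ forces every input state, in particular $\rho_0$, to terminate with probability one, contradicting the hypothesis. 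This route needs no attainment of the angelic value and sidesteps the gap entirely.
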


To efficiently compute the invariant space $\inv$ defined as the greatest fixedpoint of~\eqref{eq:gfp},
we will derive a series of necessary conditions to characterize $\inv$.
Firstly, we notice there is a density operator $\rho'\in\den(\inv)$ such that
\begin{subequations}
\begin{equation}\label{eq:char1}
	\bigvee_{j=1}^m \supp(\E_j(\rho')) = \bigvee_{\rho \in \den(\inv)} \bigvee_{j=1}^m \supp(\E_j(\rho)) = \inv,
\end{equation}
since $\inv$ is a linear space of finite dimension
and $\rho'$ can be chosen to be a mixture $\sum_k p_k \op{\psi_k}{\psi_k}$
of finitely many pure states $\ket{\psi_k}$,
each contributing at least one linearly independent element in $\inv$.
We further mix $\rho'$ to $\rho''=\tfrac{1}{m+1}[\rho'+\sum_{j=1}^m \E_j(\rho')]$, so that
\begin{equation}\label{eq:char2}
	\supp(\rho'') = \bigvee_{j=1}^m \supp(\E_j(\rho'')).
\end{equation}
On the other hand, we have
\begin{equation}\label{eq:char3}
	\supp(\bar{\E}(\rho'')) = \bigvee_{j=1}^m \supp(\E_j(\rho'')),
\end{equation}
where $\bar{\E}$ is the arithmetic average of $\E$,
i.\,e.\@ $\bar{\E}(\rho'')=\tfrac{1}{m}\sum_{j=1}^m \E_j(\rho'')$.
Combining~\eqref{eq:char2} and~\eqref{eq:char3}, we get
\begin{equation}
	\supp(\rho'') = \supp(\bar{\E}(\rho'')) = \inv.
\end{equation}
\end{subequations}
It yields the nice property that the supports of $\rho''$ and $\bar{\E}(\rho'')$ are both $\inv$.
We collect those density operators $\rho''\in\den(\inv)$ satisfying that property into the set $\Gamma$,
which is convex and compact in the viewpoint of probabilistic ensemble.
Since $\bar{\E}$ is a continuous function from $\Gamma$ to itself,
it follows from Brouwer's fixedpoint theorem~\cite[Chapter~4]{Ist01} that
there exists a fixedpoint $\gamma \in \Gamma$ of $\bar{\E}$ characterized by the stationary equation
\begin{equation}\label{eq:stationary}
    \bar{\E}(\gamma)=\gamma,
\end{equation}
where $\gamma$ is a Hermitian matrix of variables
and $\bar{\E}$ gives rise to coefficients.
The stationary equation is a system of linear equations that can be efficiently solved.
Here we loose the restriction $\gamma \in \Gamma$ to $\gamma \in \her(\M_\yes(\h))$
for the consideration of efficiency,
since Hermitian operators are much easier to be encoded than positive ones.
How to recover from the restriction $\gamma \in \Gamma$ and even compute the invariant space $\inv$
is ensured by the following lemma.

\begin{lemma}[{\cite[Lemma~5.4 \& Algorithm~1]{XFM+21}}]\label{lem:invariant}
    Let $\gamma_0$ be a nonzero solution
	of the stationary equation~\eqref{eq:stationary}.
    Then $\supp(\gamma_0)$ is an invariant space $\inv$ of $\Pro$,
	which can be computed in time $\BigO(d_0^6)$ with $d_0=\dim(\M_\yes(\h))$.
\end{lemma}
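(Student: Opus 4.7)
The plan is to extract a positive fixedpoint hidden inside the Hermitian one, use it to certify support invariance under the average super-operator $\bar{\E}$, and then transfer that invariance from $\bar{\E}$ to each individual $\E_j$. First I would take the Jordan decomposition $\gamma_0=\gamma_0^+-\gamma_0^-$ with $\gamma_0^\pm\ge 0$ and $\supp(\gamma_0^+)\perp\supp(\gamma_0^-)$. Since $\bar{\E}$ is completely positive and trace-preserving, the chain
\[
\|\gamma_0\|_1 = \|\bar{\E}(\gamma_0)\|_1 = \|\bar{\E}(\gamma_0^+)-\bar{\E}(\gamma_0^-)\|_1 \le \|\bar{\E}(\gamma_0^+)\|_1+\|\bar{\E}(\gamma_0^-)\|_1 = \tr(\gamma_0^+)+\tr(\gamma_0^-) = \|\gamma_0\|_1
\]
must collapse to equality. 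The equality case of $\|A-B\|_1\le\|A\|_1+\|B\|_1$ for positive operators $A,B$ forces $\supp(A)\perp\supp(B)$, hence $\supp(\bar{\E}(\gamma_0^+))\perp\supp(\bar{\E}(\gamma_0^-))$; coupling this with the identity $\bar{\E}(\gamma_0^+)-\bar{\E}(\gamma_0^-)=\gamma_0^+-\gamma_0^-$ and the uniqueness of Jordan decomposition then yields the two separate positive fixedpoints $\bar{\E}(\gamma_0^+)=\gamma_0^+$ and $\bar{\E}(\gamma_0^-)=\gamma_0^-$.

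Next, for any $\rho\in\den(\supp(\gamma_0^+))$ there is a constant $C>0$ with $\rho\le C\gamma_0^+$; the positivity of $\bar{\E}$ then gives $\bar{\E}(\rho)\le C\bar{\E}(\gamma_0^+)=C\gamma_0^+$, so $\supp(\bar{\E}(\rho))\subseteq\supp(\gamma_0^+)$, and symmetrically for $\gamma_0^-$. Joining, $\supp(\gamma_0)=\supp(\gamma_0^+)\vee\supp(\gamma_0^-)$ is $\bar{\E}$-invariant. Because $\bar{\E}=\tfrac{1}{m}\sum_{j=1}^m \E_j$ is a positive combination of positive maps, one has $\supp(\bar{\E}(\rho))=\bigvee_{j=1}^m\supp(\E_j(\rho))$, so $\bar{\E}$-invariance of $\supp(\gamma_0)$ transfers to invariance under every individual $\E_j$. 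Combined with the tautology $\supp(\gamma_0)\subseteq\M_\yes(\h)$ inherited from $\gamma_0\in\her(\M_\yes(\h))$, this certifies that $\inv:=\supp(\gamma_0)$ meets the definition of an invariant space.

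For the complexity claim, I would encode $\gamma\in\her(\M_\yes(\h))$ by its $d_0^2$ real parameters and read the stationary equation $\bar{\E}(\gamma)-\gamma=0$ off as a system of $d_0^2$ real linear equations, whose coefficient matrix of size $d_0^2\times d_0^2$ is assembled in $\BigO(d_0^4)$ from the Kraus operators of $\bar{\E}$; Gaussian elimination then returns a nonzero solution $\gamma_0$ (guaranteed to exist by hypothesis) in $\BigO(d_0^6)$, after which an orthonormal basis of $\supp(\gamma_0)$ is extracted by a single diagonalisation in $\BigO(d_0^3)$, giving the advertised total bound. The delicate step, and the one I would flag as the main obstacle, is the passage from the scalar equality $\|\bar{\E}(\gamma_0)\|_1=\|\gamma_0\|_1$ to the operator-level orthogonality $\supp(\bar{\E}(\gamma_0^+))\perp\supp(\bar{\E}(\gamma_0^-))$; without it, the Hermitian fixedpoint $\gamma_0$ would not split into two positive fixedpoints and its support would not need to be $\bar{\E}$-invariant.
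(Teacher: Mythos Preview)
The paper does not supply its own proof of this lemma---it is delegated wholesale to the cited reference~\cite{XFM+21}---so there is no in-paper argument to compare against. Your proof is correct and self-contained: the Jordan-decomposition route (splitting the Hermitian fixedpoint into orthogonally supported positive pieces via the equality case of the trace-norm triangle inequality for CPTP maps, then using $\rho\le C\gamma_0^{+}\Rightarrow\bar{\E}(\rho)\le C\gamma_0^{+}$ to certify support invariance, and finally transferring from the average to each summand via $\supp(\bar{\E}(\rho))=\bigvee_j\supp(\E_j(\rho))$) is the standard technique for this class of statement and is executed cleanly. The step you flag as delicate---equality in $\|A-B\|_1\le\|A\|_1+\|B\|_1$ for $A,B\ge 0$ forcing $\supp(A)\perp\supp(B)$---is indeed the crux, and it is a genuine fact (take the Jordan decomposition $A-B=P-N$, observe $\tr(\pi_P B)=\tr(\pi_N A)=0$ at equality, and deduce $\supp(A)\subseteq\supp(P)\perp\supp(N)\supseteq\supp(B)$).

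One small imprecision in the complexity paragraph: the linear system $\bar{\E}(\gamma)-\gamma=0$ with $\gamma\in\her(\M_\yes(\h))$ has $d_0^2$ real unknowns but $d^2$ real equations, not $d_0^2\times d_0^2$ as you write, because $\bar{\E}(\gamma)$ lives a priori in $\her(\h)$ and the requirement that it land back in $\her(\M_\yes(\h))$ is part of what the equation enforces. The quoted $\BigO(d_0^6)$ bound therefore tacitly assumes $d=\BigO(d_0)$ (or some restrict-then-verify strategy you have not spelled out); this is a quibble rather than a gap, and the paper itself merely imports the bound from the reference without further justification.
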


\begin{example}\label{ex5-1}
    To explicitly illustrate the method,
    we reset $\E(\alpha_1)=\E_1'=\{\mathbf{X}\otimes \mathbf{X}\}$
	and $\E(\alpha_2)=\E_2'=\{\mathbf{H}\otimes \mathbf{H}\}$
    as the super-operators for the while-loop $\Pro_2$ in Example~\ref{ex1-3}.
	We denote this modified while-loop by $\Pro_2'$.
    We can compute the average super-operator of $\Pro_2'$ 
    as $\bar{\F}'=\{\FF_1',\FF_2'\}$ with Kraus operators
    \[
     \begin{aligned}
        \FF_1' &=  \tfrac{1}{\sqrt{2}}(\op{1,1}{0,0}+\op{0,1}{1,0}+\op{0,0}{1,1}), \\
        \FF_2' &=  \tfrac{1}{\sqrt{2}}(\op{+,+}{0,0}+\op{-,+}{1,0}+\op{-,-}{1,1}).
     \end{aligned}
    \]
    Solving the stationary equation $\bar{\F}'(\gamma)=\gamma$ with $\gamma\in\her(\M_\yes(\h_{\mathbf{VAR}}))$,
    we obtain the unique solution $\gamma_0=(\ket{0,0}+\ket{1,1})(\bra{0,0}+\bra{1,1})$.
    Hence the invariant space $\inv$ of $\Pro'$ is actually $\supp(\gamma_0)=\spn(\{\ket{0,0}+\ket{1,1}\})$,
	which entails that $\Pro_2'$ is not universally terminating. \qed
\end{example}

Whenever the program has no invariant space,
every input state has its own scheduler that achieves the termination with probability one.
In the following, we are to synthesize a universal scheduler
that forces all input states to be terminating with probability one.
The procedure of synthesizing such a universal scheduler is stated in Algorithm~\ref{Algo:universal}.
In detail, each inner loop (Line~\ref{ln:inner3}) attempts to find a pure state $\ket{\psi}$ in
the orthocomplement $\B^\perp$ of $\B$
that is terminating with positive probability under some finite scheduler $\varsigma$.
It is realizable per outer loop (Line~\ref{ln:outer3}), since otherwise
\[
\begin{aligned}
    \neg \exists \ket{\psi}\in\B^\perp \,\exists \alpha_j\in\Sigma:
    \supp(\E_j(\op{\psi}{\psi})) \not\perp \B 
    & \Longleftrightarrow
    \forall \ket{\psi}\in\B^\perp \,\forall \alpha_j\in\Sigma:
    \supp(\E_j(\op{\psi}{\psi})) \subseteq \B^\perp \\
    & \Longleftrightarrow
    \forall \rho\in\den(\B^\perp) \,\forall \alpha_j\in\Sigma:
    \supp(\E_j(\rho)) \subseteq \B^\perp
\end{aligned}
\]
entailing $\B^\perp$ is invariant,
which contradicts the assumption that there is no invariant space.
Utilizing this property,
we avoid the exponential-up enumerating all finite schedulers
with length not greater than $d$
in~\cite[Algorithm~1]{YiY18} for expanding $\B$ by one dimension or more,
which yields the desired polynomial-time efficiency.
The correctness of Algorithm~\ref{Algo:universal} is guaranteed by the following lemma,
and the complexity is provided below the statements of Algorithm~\ref{Algo:universal}.

\begin{lemma}\label{lem:repeat}
	If the finite scheduler $\varsigma$ forces all input states to be terminating with positive probability,
	the infinite scheduler $\varsigma^\omega$ forces all input states to be terminating with probability one.
\end{lemma}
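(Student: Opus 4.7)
My plan is to exhibit a uniform contraction factor $n^\ast < 1$ such that one round of $\varsigma$ multiplies the remaining non-terminated mass by at most $n^\ast$, and then iterate. Recall that for $\rho \in \den$ we have $\Prob_\varsigma(\rho) = 1 - \tr(\M_\yes \F_\varsigma(\rho))$, so the hypothesis is equivalent to $\tr(\M_\yes \F_\varsigma(\rho)) < 1$ for every $\rho \in \den$. Since $\F_\varsigma$ is a linear super-operator and $\tr(\M_\yes \cdot)$ is a continuous linear functional, the map $\rho \mapsto \tr(\M_\yes \F_\varsigma(\rho))$ is continuous on the compact set $\den \subseteq \her(\h)$; hence its supremum $n^\ast$ is attained and satisfies $n^\ast < 1$.

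The key technical observation is the projector identity $\M_\yes^2 = \M_\yes$, which gives $\F_{\alpha_j}(\sigma) = \E_j(\M_\yes \sigma \M_\yes) = \F_{\alpha_j}(\M_\yes \sigma \M_\yes)$ for any Hermitian $\sigma$, and by composition $\F_\varsigma(\sigma) = \F_\varsigma(\M_\yes \sigma \M_\yes)$. Let $t_k := \tr(\M_\yes \F_\varsigma^k(\rho))$ denote the non-terminated mass after $k$ iterations of $\varsigma$, and set $\tilde\rho_k := \M_\yes \F_\varsigma^k(\rho) \M_\yes / t_k$ whenever $t_k > 0$, which is a bona fide density operator. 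Linearity of $\F_\varsigma$ then delivers
\[
t_{k+1} = \tr(\M_\yes \F_\varsigma(\M_\yes \F_\varsigma^k(\rho) \M_\yes)) = t_k \cdot \tr(\M_\yes \F_\varsigma(\tilde\rho_k)) \le n^\ast \cdot t_k,
\]
so by induction $t_k \le (n^\ast)^k \to 0$.

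Since $\Prob_{\varsigma^k}(\rho) = 1 - t_k$, the accumulated termination probability along the prefix $\varsigma^k$ of $\varsigma^\omega$ converges to $1$. Partial sums along arbitrary prefixes of $\varsigma^\omega$ are monotonically non-decreasing in the prefix length and are sandwiched between $\Prob_{\varsigma^k}(\rho)$ and $\Prob_{\varsigma^{k+1}}(\rho)$ for the appropriate $k$, so they too converge to $1$, yielding $\Prob_{\varsigma^\omega}(\rho) = 1$ for every $\rho \in \den$. I expect the main subtlety to be setting up the recursion $t_{k+1} \le n^\ast \cdot t_k$ — specifically, recognizing that the projector identity allows us to renormalize the non-terminated remainder into a density operator so that the uniform bound $n^\ast$ can be reapplied at each round — while the compactness argument is entirely standard.
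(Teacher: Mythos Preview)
Your proof is correct and follows essentially the same approach as the paper's: both use compactness of $\den$ to extract a uniform positive termination probability per round of $\varsigma$, then iterate. The only differences are presentational --- the paper obtains the uniform bound via a finite subcover argument while you use the extreme value theorem directly, and you make the iteration step explicit (via the projector identity $\M_\yes^2=\M_\yes$ and the renormalized state $\tilde\rho_k$) where the paper simply asserts that repeating $\varsigma$ infinitely often yields probability one.
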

\begin{proof}
	For each density operator $\rho$ on $\B^\perp$,
we know that it has a positive probability $p(\rho)$ of termination under finite scheduler $\varsigma$.
Thus there is an open disk $\delta(\rho)$ around $\rho$,
in which each density operator has termination probability at least $\tfrac{1}{2}p(\rho)$.
Since $\den(\B^\perp)$ is a compact set,
the open cover $\{\delta(\rho): \rho\in\den(\B^\perp)\}$ of $\den(\B^\perp)$
has a subcover $\{\delta(\rho_i): i\in \mathit{IDX}\}$ with some finite index set $\mathit{IDX}$,
i.\,e.\@ $|\mathit{IDX}|<\infty$.
Let $p=\min_{i\in \mathit{IDX}} p(\rho_i)$, which is clearly nonzero.
Performing the finite scheduler $\varsigma$ once,
all density operators $\rho$ on $\B^\perp$ have termination probability at least $\tfrac{p}{2}$;
performing $\varsigma$ infinitely many times,
all density operators $\rho$ on $\B^\perp$ achieve the termination probability one.
\end{proof}

\begin{algorithm}[ht]
	\caption{Synthesizing a Universal Scheduler for Termination}\label{Algo:universal}
	\begin{algorithmic}
		\Require a nondeterministic quantum program $\Pro=(\Sigma,\E,\{\M_\yes,\M_\no\})$
        with $\Sigma=\{\alpha_j:j=1,2,\dots,m\}$ and $\E(\alpha_j)=\E_j$
        over $\h$ with dimension $d$
        that has no invariant space;
		\Ensure a universal scheduler under which $\Pro$ terminates with probability one on all input states.
		\begin{algorithmic}[1]
			\State $\B \gets \M_\no(\h)$ and $\varsigma \gets \epsilon$;
            \While{$\B \ne \h$}\label{ln:outer3}
            \ForAll{$\alpha_j\in\Sigma$}\label{ln:inner3}
			\If{there is a $\ket{\psi} \in \B^\perp$ such that
            $\supp(\E_j(\op{\psi}{\psi}))$ is not orthogonal to $\B$}
            \State let $\ket{\psi}$ be such an element in $\B^\perp$;
			\State $\B \gets \B \vee \spn(\{\ket{\psi}\})$ and $\varsigma \gets \varsigma \cdot \alpha_j$;
            \State \textbf{break};
            \EndIf
            \EndFor
			\EndWhile
			\State \Return $\varsigma^\omega$.
		\end{algorithmic}
	\end{algorithmic}
\end{algorithm}

\paragraph{Complexity}
Note that there are less than $m\cdot d$ times of entering the inner loop in Line~\ref{ln:inner3}.
Each inner loop seeks a pure state $\ket{\psi} \in \B^\perp$,
satisfying that $\EE_k \ket{\psi}$ is not orthogonal to $\B$
for some Kraus operator $\EE_k$ in the Kraus representation $\E_j=\{\EE_k:k=1,2,\ldots,K\}$ where $K \le d^2$.
Let $\{\ket{\psi_l}:l=1,2,\ldots,L\}$ with $L<d$ be the orthonormal basis of $\B$.
For a fixed pair of $\EE_k$ and $\ket{\psi_l}$,
determining whether $\EE_k \ket{\psi}$ is orthogonal to $\ket{\psi_l}$
amounts to solving the linear equation $\bra{\psi_l} \EE_k \ket{\psi}=0$,
which costs $\BigO(d^3)$ operations.
Hence Algorithm~\ref{Algo:universal} is in polynomial time
$\BigO(K\cdot L\cdot m\cdot d^3) \subseteq \BigO(m\cdot d^6)$. \qed

\begin{example}\label{ex5-2}
    Consider the original while-loop $\Pro_2$ attached
	with two nondeterministic super-operators $\E(\alpha_1)=\E_1=\{\mathbf{H} \otimes \mathbf{X}\}$
	and $\E(\alpha_2)=\E_2=\{\mathbf{X} \otimes \mathbf{H}\}$ in Example~\ref{ex1-3},
    the average super-operator $\bar{\F}=\{\FF_1,\FF_2\}$ is given by the Kraus operators
    \[
        \begin{aligned}
        \FF_1 &=  \tfrac{1}{\sqrt{2}}(\op{+,1}{0,0}+\op{-,1}{1,0}+\op{-,0}{1,1}), \\
        \FF_2 &=  \tfrac{1}{\sqrt{2}}(\op{1,+}{0,0}+\op{0,+}{1,0}+\op{0,-}{1,1}).
        \end{aligned}
    \]
    Since the stationary equation $\bar{\F}(\gamma)=\gamma$ with $\gamma\in\her(\M_\yes(\h_{\mathbf{VAR}}))$
	has no nonzero solution,
    $\Pro_2$ has no invariant space and thus is universally terminating.
    By Algorithm~\ref{Algo:universal},
	we can synthesize a universal scheduler that forces all input states to be terminating with probability one
	as follows.
    \begin{enumerate}
    \item Initially, we have $\B_0=\spn(\{\ket{0,1}\})$ and $\varsigma_0=\epsilon$.
    \item For $\B_0^\perp=\spn(\{\ket{0,0},\ket{1,0},\ket{1,1}\})$,
    we can find a pure state $\ket{\psi_1}=\ket{+,0}\in \B_0^\perp$
	such that $\supp(\E_1(\op{\psi_1}{\psi_1}))=\spn(\{\ket{0,1}\})=\B_0$.
    Then we update
    \[
        \B_1 = \B_0\vee\spn(\{\ket{\psi_1}\}) = \spn(\{\ket{0,1},\ket{+,0}\})
		\quad \textup{and} \quad
        \varsigma_1 = \varsigma_0 \cdot \alpha_1 = \alpha_1.
    \]
    \item Next, for $\B_1^\perp=\spn(\{\ket{-,0},\ket{1,1}\})$,
    we can find a pure state $\ket{\psi_2}=\ket{1,1}\in\B_1^\perp$
	such that $\supp(\E_2(\op{\psi_2}{\psi_2}))=\spn(\{\ket{0,-}\})$ which is not orthogonal to $\B_1$.
    Then we update
    \[
        \B_2 = \B_1\vee\spn(\{\ket{\psi_2}\}) = \spn(\{\ket{0,1},\ket{+,0},\ket{1,1}\})
		\quad \textup{and} \quad
		\varsigma_2 = \varsigma_1 \cdot\alpha_2 = \alpha_1\alpha_2.
    \]
    \item Finally, for $\B_2^\perp=\spn(\{\ket{-,0}\})$,
    the pure state $\ket{\psi_3}=\ket{-,0}\in\B_2^\perp$ gives
	$\supp(\E_1(\op{\psi_3}{\psi_3})) \linebreak[0] =\spn(\{\ket{1,1}\})$ which is not orthogonal to $\B_3$,
    and we get
    \[
        \B_3 = \spn(\{\ket{0,1},\ket{+,0},\ket{1,1},\ket{-,0}\}) = \h_{\mathbf{VAR}}
		\quad \textup{and} \quad
		\varsigma_3 = \alpha_1\alpha_2\alpha_1.
    \]
    \end{enumerate}
    Thereby, $\varsigma_3 = \alpha_1\alpha_2\alpha_1$ is the finite scheduler
	that forces all input states to be terminating with positive probability,
    and $\varsigma_3^\omega$ is the infinite scheduler
	that forces all input states to be terminating with probability one. \qed
\end{example}

By Lemma~\ref{lem:invariant} and Algorithm~\ref{Algo:universal}, we obtain the result:
\begin{theorem}
    The universal termination problem described in Problem~\ref{P4} can be solved in polynomial time.
\end{theorem}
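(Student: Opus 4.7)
The plan is to combine Lemma~\ref{lem:invariant} with Algorithm~\ref{Algo:universal} into a single decision-plus-synthesis procedure whose total cost is polynomial in $m$ and $d$. First, I would compute the invariant space $\inv$ of $\Pro$ by solving the stationary equation $\bar{\E}(\gamma)=\gamma$ for $\gamma\in\her(\M_\yes(\h))$ and taking $\inv=\supp(\gamma_0)$ for any nonzero solution $\gamma_0$; by Lemma~\ref{lem:invariant} this step runs in $\BigO(d_0^6)$ time with $d_0=\dim(\M_\yes(\h))\le d$. This single computation is enough to decide the qualitative question in Problem~\ref{P4}: if $\inv$ is nonempty, then any density operator $\rho\in\den(\inv)$ witnesses the failure of universal termination (since no scheduler can drive $\rho$ to terminate with probability one), and such a $\rho$ is easily extracted from $\gamma_0$; if $\inv$ is empty, then every input state admits a scheduler under which it terminates with probability one.

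In the latter case, I would invoke Algorithm~\ref{Algo:universal} to construct a finite scheduler $\varsigma$ that forces \emph{all} input states to terminate with positive probability. The key observation, already established in the text, is that as long as the current subspace $\B\subsetneq\h$ is not invariant there must exist an action $\alpha_j\in\Sigma$ and a pure state $\ket{\psi}\in\B^\perp$ with $\supp(\E_j(\op{\psi}{\psi}))\not\perp\B$, so each iteration strictly enlarges $\B$; hence the loop terminates after at most $d$ iterations. Testing the existence of such $\ket{\psi}$ for a fixed $\alpha_j$ reduces to solving a linear system over the basis of $\B^\perp$ and the Kraus operators of $\E_j$, which is polynomial in $d$. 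The overall cost is $\BigO(m\cdot d^6)$, as analyzed after Algorithm~\ref{Algo:universal}.

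Finally, to upgrade the synthesized $\varsigma$ from ``positive probability for every input'' to ``probability one for every input,'' I would simply output $\varsigma^\omega$ and cite Lemma~\ref{lem:repeat}: compactness of $\den(\M_\yes(\h))$ together with continuity of $\F_\varsigma$ yields a uniform lower bound $p>0$ on the single-round termination probability, so iterating $\varsigma$ drives the nontermination probability to zero geometrically. The main obstacle in the argument is not the complexity bookkeeping but ensuring that the one-shot search in Algorithm~\ref{Algo:universal} really produces progress at every step; that is precisely where the absence of an invariant space is used, so the preceding computation of $\inv$ is what makes the polynomial-time guarantee go through. Adding the three polynomial costs $\BigO(d^6)$, $\BigO(m\cdot d^6)$, and the $\BigO(1)$ formation of $\varsigma^\omega$ gives the claimed polynomial-time solution to Problem~\ref{P4}.
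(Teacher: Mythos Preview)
Your proposal is correct and follows essentially the same route as the paper: decide Problem~\ref{P4} by testing for a nonzero solution of the stationary equation~\eqref{eq:stationary} via Lemma~\ref{lem:invariant}, produce a refuting state from $\supp(\gamma_0)$ when one exists, and otherwise run Algorithm~\ref{Algo:universal} together with Lemma~\ref{lem:repeat} to synthesize the universal scheduler $\varsigma^\omega$. The paper's own proof is literally the one-line ``By Lemma~\ref{lem:invariant} and Algorithm~\ref{Algo:universal}, we obtain the result,'' so your expanded account matches it exactly.
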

As an immediate corollary, we get that
it is in polynomial time to synthesize a scheduler for the termination if exists.

\section{Conclusion}\label{S9}
In this paper,
we have studied the model of nondeterministic quantum program
and the termination and the universal termination problems.
To decide the termination, we needed two ingredients.
One was computing the reachable space of a program fed with an input state,
that was a superset of the set of reachable states
but was of explicit algebraic structure.
A more precise over-approximation of the reachable set was proposed
and could be computed in polynomial time.
The other was computing the divergent set of a program,
which could be obtained in exponential time.
The termination follows from the necessary and sufficient condition that the two sets were disjoint.

For the universal termination,
the necessary and sufficient condition was the existence of invariant space,
which could be detected in polynomial time.
Once a program was decided to be universally terminating,
a universal scheduler would be synthesized in polynomial time
to force all input states to be terminating with probability one.
A case study of the quantum Bernoulli factory protocol was provided to demonstrate our methods.

For future work, we would like to:
\begin{itemize}
	\item consider the weak termination problem,
	as described in Problem~\ref{P2},
	over nondeterministic quantum programs;
	\item synthesize the optimal scheduler that minimizes the expected execution time
	for a specified input state (resp.~all input states with uniform distribution),
	whenever the termination (resp.~universal termination) is guaranteed.
\end{itemize}
Here, for a specified input state $\rho$,
the expected execution time under an infinite scheduler $\sigma$ is defined by
$\mathrm{TE}_\sigma(\rho)=\sum_{i=0}^\infty i\cdot\tr(\M_\yes\F_{\sigma\uparrow i}(\rho))$;
the expected execution time under an optimal infinite scheduler is
$\inf_{\sigma\in\Sigma^\omega}\mathrm{TE}_{\sigma}(\rho)$.
When the input state is unspecified,
we could choose the input state as the uniform distribution $\rho=\id/d$.

\appendix
\section{Exercise~2.73 in Reference~38}\label{A1}
Let $\rho$ be a density operator.
A \emph{minimal ensemble} for $\rho$ is an ensemble $\{(p_k,\ket{\psi_k}):k=1,2,\ldots,K\}$
containing a number $K$ of elements equal to the rank of $\rho$.
Let $\ket{\psi}$ be any state in the support of $\rho$.
Show that there is a minimal ensemble for $\rho$ that contains $\ket{\psi}$,
and moreover that in any such ensemble $\ket{\psi}$ must appear
with a constructive probability.

\begin{proof}
Let $\sum_{k=1}^K p_k\op{\psi_k}{\psi_k}$ be the spectral decomposition of $\rho$,
where $p_k$ are all nonzero eigenvalues of $\rho$,
$\ket{\psi_k}$ are the corresponding eigenvectors,
and $K$ is the rank of $\rho$.
Clearly, the support $\Xi$ of $\rho$ is $\spn(\{\ket{\psi_1},\ket{\psi_2},\ldots,\ket{\psi_K}\})$.
For any $\ket{\psi}\in\Xi$,
by the orthonormality of eigenvectors,
we can uniquely determine the subspace $\Xi' \subseteq \Xi$
that is spanned by those $\ket{\psi_k}$ satisfying $\ip{\psi_k}{\psi} \ne 0$.
Namely, this $\Xi'$ is spanned by $\ket{\psi_{k'}}$.
We claim there is a constructive positive value of $p$,
such that $\det_{\Xi'}(\rho-p\op{\psi}{\psi})=0$,
where $\det_{\Xi'}$ is the determinant on the subspace $\Xi'$ of $\h$.
Then, letting $p_0$ be the smallest value of $p$,
we obtain $\rho=p_0\op{\psi}{\psi}+\rho'$,
where $\rho'$ is a positive operator with less rank than $\rho$.
Under the spectral decomposition $\sum_{j=1}^{K-1} q_j |\varphi_j\rangle\langle\varphi_j|$,
we get
the minimal ensemble $\{(p_0,\ket{\psi})\} \cup \{(q_j,|\varphi_j\rangle):j=1,2,\ldots,K-1\}$
of $\rho$ as desired.

Now we turn to prove the aforementioned claim.
Utilizing the facts:
\begin{itemize}
  \item both $\rho$ and $\op{\psi}{\psi}$ are positive operators, and
  \item the difference $\rho-p\op{\psi}{\psi}$ would be no longer positive
  when $p$ is sufficiently large,
\end{itemize}
the existence of $p_0$ follows by the middle-value theorem,
since $f(p)=\det_{\Xi'}(\rho-p\op{\psi}{\psi})$ is a continuous function in $p$,
satisfying both $f(0)>0$ and $\lim_{p\to\infty}f(p)<0$.
Let $\mathbf{U}=\sum_{k'}\op{k'}{\psi_{k'}}/\sqrt{p_{k'}}$.
It is easy to see $\mathbf{U}\rho \mathbf{U}^\dag=\sum_{k'} \op{\psi_{k'}}{\psi_{k'}}
=\id_{\Xi'}=\sum_{k'} \op{k'}{k'}$
and $\mathbf{U}\op{\psi}{\psi}\mathbf{U}^\dag
=\sum_{k'}|\ip{\psi_{k'}}{\psi}|^2\op{k'}{k'}/p_{k'}$.
Furthermore, such values of $p$ should satisfy
the following equations:
\[
  \begin{aligned}
  \det_{\Xi'}(\rho-p\op{\psi}{\psi})=0
  & \Longleftrightarrow \det_{\Xi'}(\mathbf{U}(\rho-p\op{\psi}{\psi})\mathbf{U}^\dag)=0 \\
  & \Longleftrightarrow \det_{\Xi'}\left[\left(\sum_{k'} \op{k'}{k'}\right)
  -\left(\sum_{k'}\tfrac{p}{p_{k'}}|\ip{\psi_{k'}}{\psi}|^2\op{k'}{k'}\right)\right]=0 \\
  & \Longleftrightarrow \prod_{k'}
  \left(p-\frac{p_{k'}}{|\ip{\psi_{k'}}{\psi}|^2}\right)=0.
  \end{aligned}
\]
It entails that $p_0$ should be chosen as
\[
  \min_{k'}\left\{\frac{p_{k'}}{|\ip{\psi_{k'}}{\psi}|^2}\right\}. \qedhere
\]
\end{proof}

\section{Implementation}\label{B2}
Algorithms 1 through 5 have been implemented
in the Wolfram language on Mathematica 11.3 with an Intel Core i5-4590 CPU at 3.30GHz.
We list below the main functions
for analyzing the termination and the universal termination problems of nondeterministic quantum programs.
\begin{itemize}
    \item \texttt{ReachableSpaceI}(\textit{Operas\_}, \textit{Meas\_}, \textit{Inistate\_}, \textit{Dims\_})
    computes the I-reachable subspace w.\,r.\,t.\@ an input state
    and returns an orthonormal basis of that subspace.
    \item \texttt{ReachableSpaceII}(\textit{Operas\_}, \textit{Meas\_}, \textit{Inibasis\_},
    \textit{Inistate\_}, \textit{Dims\_})
    computes the II-reachable subspace w.\,r.\,t.\@ an input state
    and returns a linearly independent basis of that subspace.
    \item \texttt{Divergent}(\textit{Operas\_}, \textit{Meas\_}, \textit{Dims\_}, \textit{Sigma\_})
    computes the set of finite schedulers and the union of their corresponding generated pure divergent sets
    from which the program has a divergence scheduler,
    i.\,e.\@, the program terminates with probability zero.
    \item \texttt{NTScheduler}(\textit{Operas\_}, \textit{Meas\_}, \textit{Inibasis\_}, \textit{Dims\_},
    \textit{RSI\_}, \textit{PD\_}, \textit{Sigma\_}, \textit{pdAss\_}, \textit{actionAss\_})
    computes a nontermination scheduler under which the program does not terminate with probability one on the input state, 
    once the intersection of the I-reachable space and the pure divergent set is checked to be not empty.
    \item \texttt{UniScheduler}(\textit{Operas\_}, \textit{Meas\_}, \textit{Dims\_})
    computes a universal scheduler under which the program terminates with probability one on all input states.
\end{itemize}
After specifying the Hilbert space,
a nondeterministic quantum program,
and an input state in the format of Wolfram language,
the five algorithms can be performed by calling these functions.

\subsection{Quantum Bernoulli Factory Protocol}
For the nondeterministic quantum program
describing the quantum Bernoulli factory (QBF) protocol in Example~\ref{ex1-1},
by invoking the implemented algorithms,
we have validated the nontermination and the universal termination of the program.
The detailed performance of the five algorithms is shown in Table~\ref{tab:QBF}.

\begin{table}[htp]
    \centering
    \caption{Algorithmic performance on the QBF protocol}\label{tab:QBF}
    \renewcommand\arraystretch{1.2}
    \begin{tabular}{|c|c|c|c|c|c|}
        \hline
         QBF & ReachSpace-I & ReachSpace-II & Divergent & NTScheduler & UniScheduler \\
         \hline
         Example No. & \ref{ex2-1} & \ref{ex2-2} & \ref{ex3} & \ref{ex4-1} & \ref{ex5-2} \\
         \hline
         Time (s) & 0.016 & 0.015 & 1.422 & 0.203 & 2.593 \\
         \hline
         Memory (MB) & 132.605 & 205.833 & 181.868 & 133.217 & 205.833 \\
         \hline
    \end{tabular}
\end{table}

\subsection{Nondeterministic Quantum Walk}
Here we consider another example,
a quantum walk along a ring with three vertexes in a 3-dimensional Hilbert space~\cite{KGK21}.
The vertex set is $V=\{\ket{0},\ket{1},\ket{2}\}$ entailing $\h=\spn(V)$,
where $\ket{0}$ denotes the starting position
and $\ket{2}$ denotes the absorbing boundary.
A projective measurement $\{\M_\yes,\M_\no\}$
with $\M_\no=\op{2}{2}$ and $\M_\yes=\id-\M_\no=\op{0}{0}+\op{1}{1}$
is designed to observe whether the particle is trapped in the boundary after each move.
Each move of the particle is modelled by a quantum operation
which is nondeterministically chosen from actions $\{w_1,w_2\}$,
so that $\E(w_1)=\{\mathbf{W}_1\}$ and $\E(w_2)=\{\mathbf{W}_2\}$ with
\[
    \mathbf{W}_1=\tfrac{1}{\sqrt{3}}\begin{bmatrix}
         1 & 1 & 1 \\
         1 & \varpi & \varpi^2 \\
         1 & \varpi^2 & \varpi
    \end{bmatrix}
    \quad \textup{and} \quad
    \mathbf{W}_2=\tfrac{1}{\sqrt{3}}\begin{bmatrix}
         1 & 1 & 1 \\
         1 & \varpi^2 & \varpi \\
         1 & \varpi & \varpi^2
    \end{bmatrix}
\]
where $\varpi=\mathrm{e}^{2\pi\imath/3}$.
Then we can formally describe the quantum walk with nondeterministic moves
as the program $\Pro_3=(\{w_1,w_2\},\E,\{\M_\yes,\M_\no\})$.

By invoking the implemented algorithms,
we can obtain the following results. 
\begin{itemize}
    \item Starting from position $\ket{0}$,
    the I-reachable space $\Phi(\Pro_3,\op{0}{0})$ of the particle is 
    \[
        \spn(\{\ket{0},(\ket{1}+\ket{2})/\sqrt{2},\imath(\ket{1}-\ket{2})/\sqrt{2}\})=\h,
    \]
    and the II-reachable space $\Upsilon(\Pro_3,\op{0}{0})$ is
    \[
        \spn\left(\left\{
        \begin{array}{l}
            \begin{bmatrix}
                 1 & 0 & 0 \\
                 0 & 0 & 0 \\
                 0 & 0 & 0
            \end{bmatrix},
            \begin{bmatrix}
                 1 & 1 & 1 \\
                 1 & 1 & 1 \\
                 1 & 1 & 1
            \end{bmatrix},
            \begin{bmatrix}
                 8 & 2+2\sqrt{3}\imath & 2-2\sqrt{3}\imath \\
                 2-2\sqrt{3}\imath & 2 & -1-\sqrt{3}\imath \\
                 2+2\sqrt{3}\imath & -1+\sqrt{3}\imath & 2
            \end{bmatrix},
            \begin{bmatrix}
                 8 & 2-2\sqrt{3}\imath & 2+2\sqrt{3}\imath \\
                 2+2\sqrt{3}\imath & 2 & -1+\sqrt{3}\imath \\
                 2-2\sqrt{3}\imath & -1-\sqrt{3}\imath & 2
            \end{bmatrix}, \\
            \begin{bmatrix}
                 14 & 5-\sqrt{3}\imath & 11-5\sqrt{3}\imath \\
                 5+\sqrt{3}\imath & 2 & 5-\sqrt{3}\imath \\
                 11+5\sqrt{3}\imath & 5+\sqrt{3}\imath & 14
            \end{bmatrix},
            \begin{bmatrix}
                 14 & 11-5\sqrt{3}\imath & 5-\sqrt{3}\imath \\
                 11+5\sqrt{3}\imath & 14 & 5+\sqrt{3}\imath \\
                 5+\sqrt{3}\imath & 5-\sqrt{3}\imath & 2
            \end{bmatrix}, \\
            \begin{bmatrix}
                 26 & 17+5\sqrt{3}\imath & 14-2\sqrt{3}\imath \\
                 17-5\sqrt{3}\imath & 14 & 8-4\sqrt{3}\imath \\
                 14+2\sqrt{3}\imath & 8+4\sqrt{3}\imath & 8
            \end{bmatrix},
            \begin{bmatrix}
                26 & 14-2\sqrt{3}\imath & 17+5\sqrt{3}\imath \\
                14+2\sqrt{3}\imath & 8 & 8+4\sqrt{3}\imath \\
                17-5\sqrt{3}\imath & 8-4\sqrt{3}\imath & 14
            \end{bmatrix}      
        \end{array}
        \right\}\right).
    \]
    \item The divergent set is $PD(\Pro_3)=\emptyset$,
    implying no nontermination scheduler,
    thus the particle is proven to be surely absorbed
    no matter which move it takes in each step.
    \item There exists a universal scheduler $w_1^\omega$
    that can force the particle to reach the absorbing boundary
    regardless of its initial position.
\end{itemize}
The detailed performance of the five algorithms is shown in Table~\ref{tab:NQW}.

\begin{table}[htp]
    \centering
    \caption{Algorithmic performance on the nondeterministic quantum walk (NQW)}\label{tab:NQW}
    \renewcommand\arraystretch{1.2}
    \begin{tabular}{|c|c|c|c|c|c|}
        \hline
         NQW & ReachSpace-I & ReachSpace-II & Divergent & NTScheduler & UniScheduler \\
         \hline
         Time (s) & 0.032 & 1.766 & 0.203 & 0.000 & 0.015 \\
         \hline
         Memory (MB) & 94.770 & 205.833 & 98.328 & 100.629 & 176.579 \\
         \hline
    \end{tabular}
\end{table}

Generally speaking, all of the functions involved in Algorithms 1, 2 and 5
are much efficient as their theoretical complexity has an upper bound of \textbf{PTIME}.
Those in Algorithms 3 and 4 may be inefficient in the worst case,
due to the fact that
the derivation tree construction for the pure divergent set
is \textbf{EXPTIME}.
However, their running time is acceptable in our case studies.

\end{document}